\patchcmd{\@algocf@start}{%
  \begin{lrbox}{\algocf@algobox}%
}{%
  \rule{0.025\textwidth}{\z@}%
  \begin{lrbox}{\algocf@algobox}%
  \begin{minipage}{0.95\textwidth}%
}{}{}
\patchcmd{\@algocf@finish}{%
  \end{lrbox}%
}{%
  \end{minipage}%
  \end{lrbox}%
}{}{}
\definecolor{darkblue}{rgb}{0,0,0.38}
\definecolor{darkgreen}{rgb}{0.1,0.35,0}
\DeclareMathOperator{\comp}{comp}
\newcommand{\labeltarget}[1]{\Hy@raisedlink{\hypertarget{#1}{}}}
\let\@@pmod\pmod
\DeclareRobustCommand{\pmod}{\@ifstar\@pmods\@@pmod}
\def\@pmods#1{\mkern8mu({\operator@font mod}\mkern 6mu#1)}
\let\@@mod\mod
\DeclareRobustCommand{\mod}{\@ifstar\@mods\@@mod}
\def\@mods#1{\mkern8mu{\operator@font mod}\mkern 6mu#1}
\def\argmax{\operatorname{argmax}}
\def\argmin{\operatorname{argmin}}
\newtheorem{theorem}{Theorem}[section]
\newtheorem{lemma}[theorem]{Lemma}
\newtheorem{definition}[theorem]{Definition}
\newtheorem{corollary}[theorem]{Corollary}
\title{Submodular Minimization Under Congruency Constraints}
\author{
Martin N\"agele\thanks{
Department of Mathematics, ETH Zurich, Zurich, Switzerland.
Email: \href{mailto:martin.naegele@ifor.math.ethz.ch}%
{martin.naegele@ifor.math.ethz.ch}.}
\and 
Benny Sudakov \thanks{
Department of Mathematics, ETH Zurich, Zurich, Switzerland.
Email: \href{mailto:benjamin.sudakov@math.ethz.ch}%
{benjamin.sudakov@math.ethz.ch}.
Research supported in part by the Swiss National Science Foundation grant 200021\_175573.%
}
\and
Rico Zenklusen\thanks{
Department of Mathematics, ETH Zurich, Zurich, Switzerland.
Email: \href{mailto:ricoz@math.ethz.ch}%
{ricoz@math.ethz.ch}.
Research supported in part by the Swiss National Science Foundation grant 200021\_165866.%
}
}
\begin{document}

\maketitle

\begin{abstract}
Submodular function minimization (SFM) is a fundamental and efficiently solvable problem in combinatorial optimization with a multitude of applications in various fields.
Surprisingly, there is only very little known about constraint types under which SFM remains efficiently solvable. The arguably most relevant non-trivial constraint class for which polynomial SFM algorithms are known are parity constraints, i.e., optimizing only over sets of odd (or even) cardinality. Parity constraints capture classical combinatorial optimization problems like the odd-cut problem, and they are a key tool in a recent technique to efficiently solve integer programs with a constraint matrix whose subdeterminants are bounded by two in absolute value.

We show that efficient SFM is possible even for a significantly larger class than parity constraints, by introducing a new approach that combines techniques from Combinatorial Optimization, Combinatorics, and Number Theory. In particular, we can show that efficient SFM is possible over all sets (of any given lattice) of cardinality $r \bmod{m}$, as long as $m$ is a constant prime power. This covers generalizations of the odd-cut problem with open complexity status, and has interesting links to integer programming with bounded subdeterminants.
To obtain our results, we establish a connection between the correctness of a natural algorithm, and the nonexistence of set systems with specific combinatorial properties. We introduce a general technique to disprove the existence of such set systems, which allows for obtaining extensions of our results beyond the above-mentioned setting.
These extensions settle two open questions raised by Geelen and Kapadia [Combinatorica, 2017] in the context of computing the girth and cogirth of certain types of binary matroids.
\end{abstract}
 
\medskip

\newpage

\section{Introduction}
Submodular function minimization (SFM) is a central combinatorial optimization problem with numerous applications in many fields, including speech analysis, image segmentation, combinatorial optimization, and integer programming (see~\cite{schrijver_2003_combinatorial,mccormick_2005_submodular,iwata_2008_submodular,chakrabarty_2017_subquadratic,artmann_2017_strongly} and references therein).
A set function $f\colon2^N \rightarrow \mathbb{R}$ on a finite ground set $N$ is submodular if
\begin{equation*}
f(A)+f(B)\geq f(A\cup B)+f(A\cap B)
\quad \forall A,B\subseteq N\enspace.
\end{equation*}
The high relevance of SFM is explained by the fact that the above condition, which defines submodularity, is equivalent to the diminishing returns property, which is a very natural property of set functions appearing in various contexts.%
\footnote{%
A set function $f\colon2^N \rightarrow \mathbb{R}$ on a finite ground set $N$ satisfies the diminishing returns property if $f(A\cup\{e\}) - f(A) \geq f(B\cup \{e\}) -f(B)$ for all $A\subseteq B\subseteq N$ and $e\in N\setminus B$.
For more information on submodular functions, we refer the interested reader to~\cite{schrijver_2003_combinatorial,mccormick_2005_submodular,fujishige_2005_submodular}.
}
Typical examples of submodular functions include valuation functions in economics, cut functions, matroid rank functions, the Shannon entropy of joint distributions, and coverage functions, just to name a few.

A cornerstone result in Combinatorial Optimization, known since the early '80s, is that SFM is efficiently solvable, only assuming value oracle access to the submodular function~\cite{groetschel_1981_ellipsoid}, which is the usual model in the field and assumed throughout this paper. Typically, results on SFM easily carry over to lattices, implying that efficient SFM is possible over any lattice of the ground set.\footnote{A lattice $\mathcal{L}\subseteq 2^N$ over a ground set $N$ is a set family that is closed under unions and intersections, i.e., for any $A,B\in \mathcal{L}$, we have $A\cup B, A\cap B \in \mathcal{L}$. Whenever a lattice is given, we make the standard assumption that it is given by a compact encoding in terms of a digraph (see~\cite[Section 10.3]{groetschel_1993_geometric}).
What we call a lattice is sometimes also called a lattice family, a ring family, or a distributive lattice.
}
Since the early results on SFM, there has been exciting progress on the subject with some recent impressive speedups in the best-known running times for solving SFM~\cite{cunningham_1985_submodular,schrijver_2000_combinatorial,iwata_2001_combinatorial,iwata_2009_simple,lee_2015_faster,chakrabarty_2017_subquadratic}.

Unfortunately, the picture is much less satisfactory for constrained SFM. A canonical extension of the unconstrained case is obtained by only considering non-empty sets, a problem that can easily be reduced to unconstrained SFM by guessing one element of an optimal solution. Another relatively direct extension that includes the case of non-empty sets is that SFM is efficiently solvable over intersecting or crossing set families (see~\cite[Volume B]{schrijver_2003_combinatorial}).%
\footnote{A set family $\mathcal{F}\subseteq N$ is \emph{intersecting} if for any $A,B\subseteq \mathcal{F}$ such that $A\setminus B, B\setminus A, A\cap B \neq \emptyset$, we have $A\cup B, A\cap B\in \mathcal{F}$. Moreover, $\mathcal{F}$ is \emph{crossing} if for any $A,B\in \mathcal{F}$ with $A\setminus B, B\setminus A, A\cap B, N\setminus (A\cup B)\neq\emptyset$, we have $A\cup B, A\cap B \in \mathcal{F}$.}
Surprisingly, very little is known beyond these relatively direct extensions. 

In particular, Gr\"otschel, Lov\'asz, and Schrijver~\cite{groetschel_1981_ellipsoid} (see also~\cite{groetschel_1984_corrigendum,groetschel_1993_geometric}) showed that SFM can be solved efficiently over all odd or even sets. This extended a well-known earlier result by Padberg and Rao~\cite{padberg_1982_odd}, showing that minimum odd cuts can be found efficiently, and also a later extension by Barahona and Conforti~\cite{barahona_1987_construction} to even cuts.
More precisely, Gr\"otschel, Lov\'asz, and Schrijver~\cite{groetschel_1984_corrigendum} show that SFM can be solved efficiently over any family of sets $\mathcal{F}\subseteq 2^N$ that is a \emph{triple family}, which they define as follows:
For any $A,B \subseteq N$, if three of the four sets $A,B,A\cup B$, and $A\cap B$ are not in $\mathcal{F}$, then none of the four sets are in $\mathcal{F}$. One can easily check that all even or odd sets indeed form a triple family.
The most general constraint family under which SFM is known to be efficiently solvable was introduced by Goemans and Ramakrishnan~\cite{goemans_1995_minimizing}. They showed that SFM can be efficiently solved over a generalization of triple families, which they called \emph{parity family}. A set family $\mathcal{F}\subseteq 2^N$ is a parity family if for any pair of sets $A,B\subseteq N$ with $A\notin\mathcal{F}$ and $B\not\in\mathcal{F}$, either both of $A\cup B$ and $A\cap B$ are in $\mathcal{F}$, or none of the two.

The difficulty in identifying relevant constraint classes under which SFM can be done efficiently is partially explained by the fact that SFM can quickly become very hard, even under constraint types for which other problems, like submodular maximization, can still be solved approximately. More precisely, Svitkina and Fleischer~\cite{svitkina_2011_submodular} showed that even with a single cardinality lower bound, monotone SFM is impossible to approximate in the oracle model up to a factor $o(\sqrt{\sfrac{n}{\log n}})$, where $n\coloneqq |N|$ is the size of the ground set.

\medskip

The goal of this work is to present a new natural constraint class under which efficient submodular minimization is possible, and which is motivated by recent progress in linear integer programming with bounded subdeterminants, and by recent open questions related to binary matroids. More precisely, we consider the following natural generalization of parity-constrained submodular minimization.

\begin{mdframed}[leftmargin=1cm,rightmargin=1cm]
{\bf Congruency-Constrained Submodular Minimization (CCSM):}
Let $f\colon\mathcal{L} \rightarrow \mathbb{Z}$ be a submodular function defined on a lattice $\mathcal{L}\subseteq 2^N$, and let $m \in \mathbb{Z}_{>0}$, $r\in \{0,\ldots, m-1\}$. The task is to find a minimizer of
\begin{equation}\label{eq:CCSM}\tag{CCSM}
\min\{f(S) \mid S\in \mathcal{L}, \; |S| \equiv r \pmod*{m}\}\enspace.
\end{equation}
\end{mdframed}
We call $m$ the \emph{modulus} of the \eqref{eq:CCSM} problem. Moreover, we highlight that $N$ is a finite ground set throughout this paper.
Notice that the case $m=2$ captures odd/even submodular minimization, and thus in particular the odd cut problem. More generally, one can observe that also the $T$-cut problem, which only considers cuts with an odd number of vertices within a vertex set $T$, can easily be cast as~\eqref{eq:CCSM}.%
\footnote{
Indeed, we can observe that any problem of the form $\min\{f(S) \mid S\in \mathcal{L}, \; |S\cap T| \equiv r \pmod*{2}\}$, for a submodular function $f\colon\mathcal{L} \rightarrow \mathbb{Z}$ with $\mathcal{L}\subseteq 2^N$ and a given set $T\subseteq N$, can be cast as a \eqref{eq:CCSM} problem with respect to an auxiliary submodular function $g$ over a lattice $\mathcal{L}'$ as follows. For every $x\in N\setminus T$, introduce a new element $x'$, let $N'\coloneqq N\cup \{x'\mid x\in N\setminus T\}$, and let $\mathcal{L}'\coloneqq \{S\subseteq N' \mid S\cap N \in \mathcal{L} \text{ and }|S\cap\{x,x'\}|\neq 1\ \forall x\in N\setminus T\}$. Define $g\colon\mathcal{L}'\to\mathbb{Z}$ by $g(S)=f(S\cap N)$ for all $S\subseteq N'$. Then, for any $S^*\in\argmin\{g(S)\mid S\in\mathcal{L}',\; |S|\equiv r\pmod*{2}\}$, we can observe that $S^*\cap N$ solves the original problem. The same construction can be applied for moduli $m$ different from $2$ by introducing $m-1$ copies for each element in $N\setminus T$.%
}
Apart from naturally extending known SFM settings, our study of~\eqref{eq:CCSM} is motivated by an open question in integer programming, namely whether integer linear programs (ILPs) with constraint matrices having constantly bounded subdeterminants can be solved efficiently. 
More precisely, it was recently shown in \cite{artmann_2017_strongly} that bimodular ILPs can be solved efficiently, which are problems of the form $\max\{c^T x \mid Ax \leq b, x\in \mathbb{Z}^n\}$, where $A$ has full column rank and each $n\times n$ submatrix of $A$ has a determinant within $\{-2,-1,0,1,2\}$. This result implies that any ILP such that all subdeterminants of $A$ are within $\{-2,-1,0,1,2\}$ can be solved efficiently (see~\cite{artmann_2017_strongly} for more details), thus extending the well-known fact that ILPs with totally unimodular constraint matrices are efficiently solvable. However, whether ILPs with larger subdeterminants can still be solved efficiently seems to be a question beyond current techniques.
Interestingly, a key algorithmic tool used in~\cite{artmann_2017_strongly} to show that bimodular ILPs are efficiently solvable is efficient odd submodular minimization, or at least efficient algorithms to find minimum directed $T$-cuts, since the submodular minimization problems appearing in~\cite{artmann_2017_strongly} can be reformulated as directed $T$-cut problems. Conversely, a directed $T$-cut problem can naturally be modeled as a bimodular ILP. ILPs with subdeterminants up to $m$ include a natural extension of the directed $T$-cut problem, namely the problem of finding a cut of smallest value among all cuts of cardinality $r \bmod{m}$. This is clearly a special case of~\eqref{eq:CCSM}, by choosing $f$ to be the directed cut function.
Hence, to make progress on the question of ILPs with bounded subdeterminants, one needs to be able to solve~\eqref{eq:CCSM} for $f$ being an arbitrary directed cut function. Furthermore, due to the approach presented in~\cite{artmann_2017_strongly}, there is hope that this subproblem may be an important building block for finding an efficient procedure to solve ILPs with bounded subdeterminants.

Moreover, we consider the following generalized version of~\eqref{eq:CCSM}, which nicely highlights the versatility of our approach and captures several open problems raised by Geelen and Kapadia~\cite{geelen_2017_computing} in the context of computing the girth and cogirth of perturbed graphic matroids.
In the definition below, as well as later in the paper, we use the shorthand $[k]\coloneqq \{1,\ldots,k\}$.

\begin{mdframed}[leftmargin=1cm,rightmargin=1cm]
{\bf Generalized Congruency-Constrained Submodular Minimization (GCCSM):}
\sloppy Let $f\colon\mathcal{L} \rightarrow \mathbb{Z}$ be a submodular function defined on a lattice $\mathcal{L}\subseteq 2^N$, and let $m \in \mathbb{Z}_{>0}$. Moreover, let $k\in\mathbb{Z}_{>0}$, $S_1,\ldots, S_k\subseteq N$ and $r_1,\ldots, r_k\in \{0,\ldots, m-1\}$. The task is to find a minimizer of 
\begin{equation}\label{eq:GCCSM}
\min\{f(S) \mid S\in \mathcal{L}, \; |S\cap S_i| \equiv r_i \pmod*{m}\;\;\forall i\in [k]\}\enspace.
\tag{GCCSM}
\end{equation}
\end{mdframed}
In particular,~\eqref{eq:GCCSM} captures the $t$-Set Even-Cut Problem and $t$-Set Odd-Cut Problem defined in~\cite{geelen_2017_computing}. There, one is given a constant $t$, an undirected graph $G=(V,E)$, and sets $T_1,\ldots, T_t\subseteq V$. The task is to find a cut $S\subseteq V$ with a minimum number of edges $|\delta(S)|$ among all cuts whose intersections with the sets $T_i$ are all even or all odd, respectively.
Geelen and Kapadia identified the $t$-Set Even-Cut Problem as a special case of the so-called $t$-Dimensional Even-Cut Problem. While the latter is key to their algorithm for computing the cogirth of perturbed graphic matroids, they consider the $t$-Set Even-Cut problem as a purer form of the problem, which they believe to be of independent interest, as well as the natural variation of the $t$-Set Odd-Cut problem.
Geelen and Kapadia present a randomized algorithm for the $t$-Set Even-Cut Problem, based on an adaptation of Karger's contraction algorithm~\cite{karger_1993_global,karger_1996_new}, and they raise the following open questions which we address through our work:
\begin{enumerate}[label=(\roman*)]
\item They ask about a deterministic procedure for the $t$-Set Even-Cut problem, which they mention as one of the main shortcomings of their approach. As noted in~\cite{geelen_2017_computing}, Conforti and Rao~\cite{conforti_1987_some} found an efficient deterministic algorithm for the $1$-Set Even-Cut Problem. However, even for the $2$-Set version, no deterministic procedure is known.

\item They raise the question about the complexity of the Odd-Cut problem, stating that the method of Padberg and Rao~\cite{padberg_1982_odd} for finding an odd cut extends to the $2$-Set Odd-Cut setting; however, even for the $3$-Set Odd-Cut problem, the complexity remains open.
\end{enumerate}

The main technical contribution of this paper is to introduce a new approach based on techniques from Combinatorics and Number Theory to analyze a natural algorithm for~\eqref{eq:CCSM} and~\eqref{eq:GCCSM}.

\subsection{Our results}

We start by stating the implications of our techniques on~\eqref{eq:CCSM} and~\eqref{eq:GCCSM}, and provide an overview of the techniques in Section~\ref{subsec:overviewTech}.
Our main result for~\eqref{eq:CCSM} is the following.
\begin{theorem}\label{thm:mainCCSM}
For any $m\in \mathbb{Z}_{>0}$ that is a prime power, \eqref{eq:CCSM} can be solved in time $n^{2m+O(1)}$.
\end{theorem}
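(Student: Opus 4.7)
The plan is to analyze the following natural enumeration algorithm. For each pair of disjoint sets $I, O \subseteq N$ with $|I|+|O| \le 2m$, the family
\[
\mathcal{L}_{I,O} = \{S \in \mathcal{L} : I \subseteq S,\ S \cap O = \emptyset\}
\]
is again a lattice (and can be encoded compactly given the encoding of $\mathcal{L}$), so an unconstrained minimizer $S_{I,O} \in \argmin_{S \in \mathcal{L}_{I,O}} f(S)$ can be computed in polynomial time via any standard SFM routine. If $|S_{I,O}| \equiv r \pmod*{m}$, I record $f(S_{I,O})$; at the end I return the smallest recorded value. There are $O(n^{2m})$ pairs $(I,O)$ and each invocation of SFM costs $n^{O(1)}$, so the total running time is $n^{2m+O(1)}$ as claimed.

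Correctness reduces to the following structural claim: for some optimal feasible solution $S^*$ there exist disjoint $I \subseteq S^*$ and $O \subseteq N \setminus S^*$ with $|I|+|O| \le 2m$ such that the unconstrained minimizer $S_{I,O}$ over $\mathcal{L}_{I,O}$ already satisfies $|S_{I,O}| \equiv r \pmod*{m}$. I would prove this by contradiction: assume every guess of at most $2m$ elements yields an infeasible minimizer. Using the classical fact that the minimizers of a submodular function over a lattice form a sublattice — a property that is preserved by the contraction/deletion operations implicit in $(I,O)$ — one can extend any bad guess by one additional forced-in or forced-out element and again obtain a minimizer whose cardinality is in a wrong residue class. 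Iterating this extension yields a collection of unconstrained minimizers linked by nested inclusion/exclusion decisions, each with a recorded cardinality residue in $\mathbb{Z}/m\mathbb{Z}\setminus\{r\}$. Packaging these minimizers and the associated residues into a combinatorial set system with controlled size, intersection, and cardinality-residue structure is, as announced in the abstract, the combinatorial heart of the reduction.

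The main obstacle is to rule out the existence of such a set system when $m = p^k$ is a prime power. For this I would invoke number-theoretic tools specific to the prime power case, most naturally Olson's theorem on the Davenport constant of $\mathbb{Z}/p^k\mathbb{Z}$, which asserts that every sequence of $m = p^k$ elements of $\mathbb{Z}/m\mathbb{Z}$ contains a non-empty zero-sum subsequence. The strategy is to read off, along the hypothetical set system, a sequence of cardinality deviations from $r$ in $\mathbb{Z}/m\mathbb{Z}$ of length at least $m$, extract a zero-sum subsequence, and translate it back — through lattice operations inherited from the sublattice-of-minimizers property, which keep objective values optimal — into a feasible set whose $f$-value is at most $f(S^*)$ and which is simultaneously realized by one of the enumerated guesses, contradicting the assumption. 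The delicate technical point is to set up the combinatorial framework so that the zero-sum operation corresponds to a legitimate sequence of lattice operations preserving optimality, and to control the size of the structure by $2m$ rather than some larger function of $m$, so that the running time exponent matches the claimed $2m+O(1)$.
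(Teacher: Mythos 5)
Your algorithmic skeleton is correct and matches the paper's $\mathrm{Enum}(d)$: enumerate over pairs of disjoint forced-in/forced-out sets, solve unconstrained SFM on each restricted lattice, and collect all candidates. The reduction of correctness to the nonexistence of a structured, intersection-closed set system whose members all lie in ``wrong'' residue classes is also the right high-level idea. However, there are two genuine gaps.

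First, the paper's analysis works with \emph{minimal} minimizers of $f$ over $\mathcal{L}_{AB}$, not arbitrary ones; this is what lets one conclude $T=S^*$ (Lemma~\ref{lem:dGoodToOpt}) and is what actually makes the reduction to set systems go through. Your sketch (``extend a bad guess by one additional forced element'') glosses over the real construction in Lemma~\ref{lem:notDGoodToSys}: one fixes, for each small $A\subseteq S^*$, an inclusion-wise \emph{maximal} $S_A\subsetneq S^*$ with $A\subseteq S_A$ and $f(S_A)<f(S^*)$, and then closes the family $\{S_A\}$ under intersections; submodularity plus maximality of the $S_A$ is what shows every intersection still has $f$-value $<f(S^*)$ and hence is infeasible.

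Second, and more seriously, the proposed number-theoretic tool does not capture what is special about prime powers. The Davenport constant of the cyclic group $\mathbb{Z}/m\mathbb{Z}$ is $m$ for \emph{every} $m$, not just prime powers (it is an elementary pigeonhole on partial sums; Olson's theorem concerns general $p$-groups $\bigoplus_i \mathbb{Z}/p^{e_i}\mathbb{Z}$, where the cyclic case is trivial). But as the paper notes (Gopi's construction), $(m,m-1)$-systems \emph{do} exist when $m$ is not a prime power, so any correct nonexistence argument must break when $m$ has two distinct prime factors — which the Davenport constant cannot do. Moreover, you do not say how a zero-sum subsequence of cardinality residues would translate back into a single feasible set realized by an enumerated guess; cardinalities of unions and intersections do not add, so this step is not automatic and is in fact the whole difficulty. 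The paper instead builds ``cardinality transformation functions'' that map a set of size $x$ to one of size $g(x)$ while preserving intersection-closedness and coverage, chooses $g$ (using Fermat's Little Theorem when $m$ is prime, and the Lucas-type identity $\binom{a}{b}\equiv\binom{a \bmod p^\alpha}{b}\pmod p$ when $m=p^\alpha$) so that $g$ collapses the residues mod $m$ to exactly two residues mod $p$, and then obtains the contradiction from an inclusion-exclusion count (Lemma~\ref{lem:structuredM1SystemNotPossible}) on the transformed system. Your proposal is missing this transformation machinery, which is where the prime-power hypothesis is actually used.
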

Hence, we can efficiently solve~\eqref{eq:CCSM} for any modulus $m$ that is a prime power bounded by a constant. Notice that an upper bound on $m$ is required to obtain an efficient algorithm. Indeed, in particular if $m=n\coloneqq |N|$, the congruency constraint simply models a cardinality constraint. However, as mentioned in the introduction, SFM subject to a cardinality constraint is impossible to approximate up to any factor $o(\sqrt{\sfrac{n}{\log n}})$ in the oracle model. It is not hard to observe that this implies that even for any $\epsilon >0$, \eqref{eq:CCSM} with modulus $m=\Omega(n^\epsilon)$ cannot be solved exactly in polynomial time.

Our key contribution, which leads to Theorem~\ref{thm:mainCCSM}, is a connection of the correctness of a natural procedure, which we introduce in Section~\ref{subsec:overviewTech}, and the nonexistence of certain set systems. To disprove the existence of such set systems, we employ tools from Combinatorics and Number Theory, in particular Fermat's Little Theorem.
A main advantage of our techniques is that they are very versatile, and allow in particular for an adaptation to~\eqref{eq:GCCSM}, leading to the following result.

\begin{theorem}\label{thm:mainGCCSM}
For any $m\in \mathbb{Z}_{>0}$ that is a prime power, \eqref{eq:GCCSM} can be solved in time $n^{2km+O(1)}$.
\end{theorem}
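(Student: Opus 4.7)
My plan is to reduce~\eqref{eq:GCCSM} to~\eqref{eq:CCSM} by induction on $k$, handling one congruency constraint at a time. The starting point is to read the proof of Theorem~\ref{thm:mainCCSM} as a black-box reduction: its algorithm enumerates $n^{2m+O(1)}$ candidate ``guesses'' and, for each guess, solves an unconstrained SFM instance on a sublattice of~$\mathcal{L}$; the answer to the original~\eqref{eq:CCSM} instance is the minimum over all the inner SFM values. I would first pin down this structural reading from the algorithm sketched in Section~\ref{subsec:overviewTech}.

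Given that reading, induction on~$k$ proceeds cleanly. The base case $k=0$ is ordinary SFM on a lattice, which is polynomial-time solvable. For the inductive step, I would first cast the $k$-th constraint $|S\cap S_k|\equiv r_k\pmod*{m}$ as a standard~\eqref{eq:CCSM} constraint via the dummy-copy construction already used in the paper for the $T$-cut reduction: introduce $m-1$ copies of each element outside~$S_k$, linked in the lattice so that copies are always selected together with their parent, so that $|S|\bmod m$ agrees with $|S\cap S_k|\bmod m$ on feasible sets. Applying Theorem~\ref{thm:mainCCSM}'s reduction then produces $n^{2m+O(1)}$ inner SFM instances on sublattices of the enlarged ground set. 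On each one, we still need to enforce the remaining $k-1$ constraints $|S\cap S_i|\equiv r_i\pmod*{m}$; this is itself a~\eqref{eq:GCCSM} instance with $k-1$ congruencies, solved by the induction hypothesis in time $n^{2(k-1)m+O(1)}$. Multiplying yields the claimed $n^{2km+O(1)}$ overall running time.

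The main obstacle is verifying that Theorem~\ref{thm:mainCCSM} truly admits such a black-box reading, rather than being an algorithm whose inner subproblem is itself some non-standard oracle: the CCSM subproblems must be plain SFM instances on lattices so that additional intersection congruencies can be layered on top. If the reduction in Theorem~\ref{thm:mainCCSM} is more delicate than this, the fallback is to re-run its proof with the extra $k-1$ congruencies carried through the set-system nonexistence argument, verifying that the combinatorial and number-theoretic certificate (built from Fermat's Little Theorem) remains valid under the added constraints. Either way, the exponent $2km$ is natural: each of the $k$ constraints independently contributes an $n^{2m}$ enumeration factor, so the running times compose as expected.
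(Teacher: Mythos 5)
Your primary plan — peel off one congruency at a time so that each of the $k$ constraints independently contributes an $n^{2m+O(1)}$ enumeration factor — has a genuine gap. The correctness of $\mathrm{Enum}(d)$ rests on the constraint family being $d$-good: there exist $A\subseteq S^*$ and $B\subseteq N\setminus S^*$ with $|A|,|B|\le d$ such that $S^*$ is the minimal \emph{unconstrained} minimizer of $f$ over $\mathcal{L}_{AB}$. Your outer layer fixes only $A_k,B_k$ of size $\le m-1$ (the depth certified for CCSM) and then recurses on the $(k-1)$-congruency problem over $\mathcal{L}_{A_kB_k}$. For the recursion to hand back $S^*$, you would need $S^*$ to be the minimal optimum of the $(k-1)$-congruency problem over $\mathcal{L}_{A_kB_k}$ for some such pair — but that $(k-1)$-constrained optimum can be strictly smaller than $f(S^*)$ (or equal yet attained by a different minimal set) for \emph{every} admissible $(A_k,B_k)$, because fixing $2(m-1)$ elements is too little to encode constraint $k$, which is exactly what distinguishes $S^*$ from cheaper $(k-1)$-feasible sets. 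The CCSM theorem you invoke (via the dummy-copy encoding of constraint $k$) only certifies pairs $(A,B)$ for which the minimal optimum of the \emph{single-constraint-}$k$ problem — not $S^*$ — becomes the unconstrained minimizer of $f$ over $\mathcal{L}_{AB}$; for $k\ge 2$ these are different sets. Making the scheme correct at outer depth $m-1$ would in effect require nonexistence of $(m,k,m-1)$-systems, which the paper does not establish: it proves nonexistence only at depth $k(m-1)$, and Section~\ref{sec:existenceMMm2Systems} already shows depth cannot be lowered for $k=1$.

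What the paper actually does is run $\mathrm{Enum}(k(m-1))$ directly — one flat enumeration of depth $k(m-1)$, not $k$ nested rounds of depth $m-1$ — and proves nonexistence of $(m,k,k(m-1))$-systems via a two-stage transformation: the polynomial/binomial transformation of level $m-1$ from the CCSM proof is applied coordinatewise to collapse each congruency $|H\cap S_i|\bmod m$ to a $\{0,1\}$-indicator modulo~$p$, and then a genuinely multi-coordinate \emph{generalized cardinality transformation} $g(x_1,\ldots,x_k)=x_1\cdots x_k$ of level $k$ fuses the $k$ indicators into a single coverage contradiction with Lemma~\ref{lem:structuredM1SystemNotPossible}. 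Your stated fallback — rerunning the set-system nonexistence argument with the extra congruencies carried through — is the right instinct, but the essential new ingredient is this product transformation; it does not arise by iterating the single-congruency machinery, which is why the theorem needs the joint depth $k(m-1)$.
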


Notice that Theorem~\ref{thm:mainGCCSM} solves the two open questions by Geelen and Kapadia~\cite{geelen_2017_computing} mentioned in the introduction.
Moreover, we want to highlight that our algorithms for solving~\eqref{eq:CCSM} and~\eqref{eq:GCCSM} consist of repeatedly solving unconstrained submodular function minimization problems, namely at most $n^{2(m-1)}$ many for~\eqref{eq:CCSM} and $n^{2k(m-1)}$ many for~\eqref{eq:GCCSM}. Using a strongly polynomial algorithm for submodular function minimization, the running time guarantees of Theorems~\ref{thm:mainCCSM} and~\ref{thm:mainGCCSM} are achieved.

\subsection{Overview of main steps of our technique}
\label{subsec:overviewTech}

\newcommand{\refEnumD}[1][d]{%
\hyperlink{alg:enumTarget}{$\mathrm{Enum}(#1)$}\xspace%
}%

We start by stating a natural algorithm, \refEnumD highlighted below, that we use to derive both of our main results, Theorems~\ref{thm:mainCCSM} and~\ref{thm:mainGCCSM}. 
Our algorithm is parameterized by an integer $d\in \mathbb{Z}_{> 0}$, which we call the \emph{depth} of the algorithm. Its input is a value oracle for a submodular function $f\colon\mathcal{L} \rightarrow \mathbb{Z}$ defined on a lattice $\mathcal{L}\subseteq 2^N$, and a family $\mathcal{F}\subseteq 2^N$, capturing additional constraints we want to satisfy. In particular, for~\eqref{eq:CCSM} we have $\mathcal{F}=\{S\subseteq N \mid |S|\equiv r \pmod{m}\}$, and for~\eqref{eq:GCCSM}, the set $\mathcal{F}$ is given by $\mathcal{F}=\{S\subseteq N \mid |S\cap S_i|\equiv r_i \pmod{m} \;\forall i\in[k]\}$. We assume that $\mathcal{F}$ is given by a membership oracle, which can be queried for any set $S\subseteq N$, and returns whether $S\in \mathcal{F}$.

{%
\renewcommand{\thealgocf}{}
\begin{algorithm2e}[!h]
\SetAlgorithmName{$\bm{\mathrm{Enum}(d)}$\labeltarget{alg:enumTarget}}{}

\begin{enumerate}[rightmargin=0.35em,leftmargin=1.2em,partopsep=-0.5em,itemsep=5pt]
\item\label{algitem:enum} For all $A,B \subseteq N$ with $|A|,|B|\leq d$ and $A\cap B=\emptyset$, compute a minimal minimizer of $f$ over the lattice
\begin{equation*}
\mathcal{L}_{AB} \coloneqq \{S\in \mathcal{L} \mid A\subseteq S \subseteq N\setminus B\}\enspace.
\vspace*{-0.3em}
\end{equation*}
Let $\mathcal{S}$ be the family of all computed minimal minimizers for all pairs of $A$ and $B$.

\item Return a set $S\in \mathcal{S}$ of minimum value among all sets in $\mathcal{S}\cap\mathcal{F}$.
\end{enumerate}

\caption{Enumeration algorithm of depth $d$ for submodular minimization over $\mathcal{F}$}\label{alg:EnumD}
\end{algorithm2e}
}

The algorithm is a natural extension of a procedure suggested in~\cite{goemans_1995_minimizing}, which corresponds to \refEnumD[1].
In step~\ref{algitem:enum}, we repeatedly solve unconstrained submodular minimization problems for minimal minimizers. To this end, one can observe that many submodular function minimization algorithms do actually return minimal minimizers. Alternatively, for integer-valued submodular functions, we can observe that a set is a minimal minimizer of $f$ if and only if it is a minimizer of the submodular function $g$ given by $g(S)=(n+1)f(S)+|S|$. Hence, it suffices to find any minimizer of $g$ to obtain a minimial minimizer of $f$.
Notice that \refEnumD is clearly a polynomial time algorithm for any constant depth $d$. However, depending on the structure of the constraint set $\mathcal{F}$, and the choice of $d$, the above algorithm may fail to return a set in $\mathcal{F}$ with minimum submodular value. In particular, it may even happen that no feasible solution is found, i.e., $\mathcal{S}\cap \mathcal{F} = \emptyset$.
In the following, we show the main steps that we used to derive our main result for~\eqref{eq:CCSM}, i.e., Theorem~\ref{thm:mainCCSM}. In Section~\ref{sec:GCCSM}, we show how to extend the results to~\eqref{eq:GCCSM}.

To analyze the correctness of \refEnumD for \eqref{eq:CCSM}, we show that if \refEnumD fails to return an optimal solution to \eqref{eq:CCSM}, then this implies the existence of a set system with the following properties.
For brevity, we call a set system satisfying these properties an $(m,d)$-system.

\begin{definition}[$(m,d)$-system (on $N$)]
Let $N$ be a finite ground set, and let $m,d\in \mathbb{Z}_{>0}$. We say that a set system $\mathcal{H}\subseteq 2^N$ is an $(m,d)$-system (on $N$) if
\begin{enumerate}[label=(\roman*),itemsep=-0.2em,topsep=-0.2em]
\item\label{item:MDIntClosed} $\mathcal{H}$ is closed under intersection, i.e., $H_1\cap H_2\in \mathcal{H} \;\;\forall H_1,H_2\in \mathcal{H}$,

\item\label{item:MDDiffParity} $|H| \not\equiv |N| \pmod{m} \;\;\forall H\in \mathcal{H}$, and

\item\label{item:MDCoverage} for any $S\subseteq N$ with $|S|\leq d$, there is a set $H\in \mathcal{H}$ with $S\subseteq H$.
\end{enumerate}
\end{definition}

Note that in particular, we require property~\ref{item:MDIntClosed} also for disjoint sets: If there are $H_1,H_2\in\mathcal{H}$ with $H_1\cap H_2=\emptyset$, then $\emptyset\in\mathcal{H}$. On the other hand, if $\emptyset\not\in\mathcal{H}$, we can conclude that all sets have at least one element in common. Also observe that property~\ref{item:MDCoverage} implies that the sets of an $(m,d)$-system $\mathcal{H}$ cover the ground set, i.e., we always have $N=\bigcup_{H\in\mathcal{H}}H$.
The following theorem formalizes a crucial link between nonexistence of $(m,d)$-systems and correctness of \refEnumD, and reduces the correctness of \refEnumD to a purely combinatorial question. Here (and throughout the rest of this paper), \emph{nonexistence of $(m,d)$-systems} without explicit reference to a ground set is to be understood to hold for any ground set, i.e., no matter what finite ground set $N$ is chosen, there does not exist an $(m,d)$-system on $N$.

\begin{theorem}\label{thm:EnumDGoodIfNoBadSys}
Let $m,d\in \mathbb{Z}_{>0}$. If no $(m,d)$-system exists, then \refEnumD returns an optimal solution to any \eqref{eq:CCSM} problem with modulus $m$.%
\end{theorem}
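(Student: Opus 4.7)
My approach is contrapositive: assuming $\mathrm{Enum}(d)$ fails on a CCSM instance $(\mathcal{L}, f, m, r)$, I will construct an $(m,d)$-system on a suitable ground set $N'$. Fix an optimal solution $O^*$ to CCSM with $|O^*| \equiv r \pmod{m}$; among all optima I pick one of minimum cardinality, as a tie-breaking rule that will be useful below. The first step is an immediate observation: for every \emph{admissible} pair $(A, B)$ — by which I mean disjoint sets with $|A|, |B| \leq d$, $A \subseteq O^*$, and $B \cap O^* = \emptyset$ — we have $O^* \in \mathcal{L}_{AB}$, so the minimal minimizer $S_{AB}$ of $f$ over $\mathcal{L}_{AB}$ satisfies $f(S_{AB}) \leq f(O^*)$. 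If moreover $|S_{AB}| \equiv r \pmod{m}$, then $S_{AB}$ is itself CCSM-feasible and no worse than $O^*$, so $\mathrm{Enum}(d)$ would return an optimum, contradicting failure. Hence
\begin{equation*}
|S_{AB}| \not\equiv |O^*| \pmod{m} \qquad \text{for every admissible pair } (A, B)\enspace.
\end{equation*}

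Given this, my plan is to take $N' := O^*$ and consider the candidate family $\{H_{A, B} := S_{A, B} \cap O^* : (A,B) \text{ admissible}\}$, letting $\mathcal{H}$ be its closure under intersection. Property~(iii) of the $(m,d)$-system then comes for free: for any $S \subseteq O^*$ with $|S| \leq d$, the pair $(A, B) = (S, \emptyset)$ gives $S = A \subseteq S_{A,\emptyset} \cap O^* = H_{S, \emptyset}$. Property~(i) holds by construction. The real work is property~(ii), namely $|H| \not\equiv |O^*| \pmod{m}$ for every $H \in \mathcal{H}$.

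For the base case $H = H_{A, B}$, I would argue via submodular uncrossing. Both $S_{AB} \cap O^*$ and $S_{AB} \cup O^*$ lie in $\mathcal{L}_{AB}$, and submodularity yields $f(S_{AB} \cap O^*) + f(S_{AB} \cup O^*) \leq f(S_{AB}) + f(O^*)$. Either $S_{AB} \subseteq O^*$, in which case $H_{A,B} = S_{AB}$ and the non-congruence above is exactly what is needed; or the uncrossing is strict, forcing $f(S_{AB} \cup O^*) < f(O^*)$, and since $S_{AB} \cup O^* \in \mathcal{L}$ cannot beat the CCSM optimum, it must violate the congruence constraint, giving $|S_{AB} \setminus O^*| \not\equiv 0 \pmod{m}$. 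Combining this with the displayed non-congruence and exploiting the cardinality-minimality of $O^*$ (which prevents alternative optima arising from $S_{AB} \cap O^*$) should pin down $|O^* \setminus S_{AB}| = |O^*| - |H_{A,B}| \not\equiv 0 \pmod{m}$.

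The central technical obstacle, and what I expect to be the hardest part, is extending the modular argument from the candidate family to the full intersection closure $\mathcal{H}$: an arbitrary intersection $H_{A_1, B_1} \cap \cdots \cap H_{A_k, B_k}$ corresponds naturally to a set in $\mathcal{L}_{\bigcup A_i, \bigcup B_i}$, whose combined restriction sets may have sizes exceeding $d$, so the failure assumption does not apply directly. Overcoming this requires either showing that such an intersection is, up to replacement by a suitable minimal minimizer, still detected by some admissible pair of depth at most $d$ (thereby inheriting the non-congruence), or chaining submodular uncrossing across several pairs simultaneously while tracking how $|H|$ interacts with $|O^*|$ modulo $m$. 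I expect this is where the delicate interaction between the submodular structure and the modular arithmetic resides, and where the bulk of the technical argument lies.
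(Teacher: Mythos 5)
Your plan correctly identifies the contrapositive route, and several of your preliminary observations are sound (e.g., for admissible $(A,B)$ with $A \subseteq O^*$, $B\cap O^*=\emptyset$, the minimal minimizer $S_{AB}$ has $f(S_{AB}) \le f(O^*)$, and if the algorithm fails then $S_{AB} \notin \mathcal{F}$). But there is a genuine gap, and in fact two: the base case does not close, and the intersection-closure step that you flag as ``the central technical obstacle'' is exactly the step that requires a different construction than the one you set up.

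On the base case: from uncrossing you can deduce $|S_{AB}\setminus O^*|\not\equiv 0 \pmod m$ (when $S_{AB}\not\subseteq O^*$), and you already know $|S_{AB}|\not\equiv |O^*| \pmod m$, but these two facts do \emph{not} combine to give $|S_{AB}\cap O^*| \not\equiv |O^*| \pmod m$. For $m$ composite or even $m$ prime, one can have $|S_{AB}| - |O^*| \not\equiv 0$, $|S_{AB}\setminus O^*| \not\equiv 0$, and yet $|S_{AB}\cap O^*| = |S_{AB}| - |S_{AB}\setminus O^*| \equiv |O^*| \pmod m$. Nor does cardinality-minimality of $O^*$ save you: you only control $f(S_{AB}\cap O^*)$ relative to $f(S_{AB})$, not relative to $f(O^*)$, so you cannot invoke CCSM-optimality against $S_{AB}\cap O^*$. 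And the intersection-closure step is worse: nothing in your setup gives a handle on $f$ evaluated at intersections of several $H_{A_i,B_i}$.

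The paper's proof resolves both difficulties by choosing a \emph{different} family of sets and tracking $f$-values rather than cardinalities directly. It factors through a notion of $d$-good set systems: it first shows (Lemma~\ref{lem:dGoodToOpt}) that if both $(\mathcal{F},\mathcal{L})$ and $(\comp(\mathcal{F}),\comp(\mathcal{L}))$ are $d$-good then $\mathrm{Enum}(d)$ succeeds, and then (Lemma~\ref{lem:notDGoodToSys}) that failure of $d$-goodness yields the desired system. Crucially, in Lemma~\ref{lem:notDGoodToSys} the sets $S_A$ are not minimal minimizers over $\mathcal{L}_{AB}$; they are chosen, for each $A\subseteq S^*$ with $|A|\le d$, to be \emph{inclusion-wise maximal} subsets of $S^*$ containing $A$ with $f(S_A) < f(S^*)$. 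This maximality is precisely what makes the induction over intersections work: uncrossing $H' = \bigcap_{i<k} S_{A_i}$ with $S_{A_k}$ and comparing $f(S_{A_k})$ to $f(H'\cup S_{A_k})$ (which is sandwiched between $S_{A_k}$ and $S^*$, so maximality gives $f(S_{A_k}) \le f(H'\cup S_{A_k})$) yields $f(H) \le f(H') < f(S^*)$. Since $f(H) < f(S^*)$ forces $H\notin\mathcal{F}$, property~(ii) follows with no modular arithmetic at all. There is also a preliminary perturbation (Lemma~\ref{lem:dGoodForMinimal}, replacing $f$ by $|N|f(S) + |N||S\setminus S^*| + |S|$) needed to get strict inequalities inside $S^*$. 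Your proposal is missing the maximality device and the $f$-value bookkeeping; without them, the intersection-closure step does not go through, and even the base case leaks.
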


Notice that Theorem~\ref{thm:EnumDGoodIfNoBadSys} does not depend on the lattice $\mathcal{L}$ underlying the \eqref{eq:CCSM} problem. For specific lattices $\mathcal{L}\subseteq 2^N$, the above conditions can be slightly weakened. In particular, it suffices to consider a weaker definition of $(m,d)$-systems, where $\mathcal{H}$ needs to be a subfamily of $\mathcal{L}$. Section~\ref{sec:reduceToSetSys} provides further details on this. However, for the congruency constraints we consider, we do not need the weaker requirements for specific lattices, and we thus decided to avoid these details here in the interest of simplifying the presentation.

Finally, our approach is completed by deriving the following result, which completes the last step of our proof, and, together with Theorem~\ref{thm:EnumDGoodIfNoBadSys}, implies Theorem~\ref{thm:mainCCSM}.
\begin{theorem}\label{thm:noMM-1Sys}
For $m\in \mathbb{Z}_{>0}$ being a prime power, there is no $(m,m-1)$-system.
\end{theorem}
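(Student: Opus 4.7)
The plan is to argue by contradiction: suppose $\mathcal{H}$ is an $(m,m-1)$-system on a finite ground set $N$, where $m=p^k$ is a prime power, and derive an arithmetic contradiction modulo $p$.

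First I would perform a \emph{padding reduction}. Let $r\in\{0,\dots,m-1\}$ satisfy $r\equiv-|N|\pmod m$ and attach a set $D$ of $r$ fresh dummy elements to $N$, replacing each $H\in\mathcal{H}$ by $H\cup D$. One checks easily that closure under intersection, coverage (for subsets of $N\cup D$ of size $\leq m-1$), and the congruency condition $|H\cup D|\not\equiv|N\cup D|\pmod m$ are all inherited from the original system, because both cardinalities shift by $r$. After this reduction we have $|N|\equiv 0\pmod m$, so property~(ii) becomes the simpler statement that $m\nmid|H|$ for every $H\in\mathcal{H}$.

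Next I would establish the key arithmetic identity. Since $m=p^k$, Fermat's little theorem in its polynomial form (the Frobenius identity) gives $(x-1)^m\equiv x^m-1\pmod p$, so dividing by $x-1$ yields $\sum_{i=0}^{m-1}x^i\equiv(x-1)^{m-1}\pmod p$. Combined with the standard identity $\sum_{i=0}^{m-1}(-1)^i\binom{n}{i}=(-1)^{m-1}\binom{n-1}{m-1}$ and Lucas' theorem applied to $\binom{n-1}{p^k-1}$, one obtains
\begin{equation*}
\sum_{i=0}^{m-1}(-1)^i\binom{n}{i} \;\equiv\; (-1)^{m-1}\cdot[\,m\mid n\,]\pmod p \qquad\text{for all }n\in\mathbb{Z}_{\geq 0}.
\end{equation*}
Applying this identity with $n=|H|$ for each $H\in\mathcal{H}$, summing, and double-counting via $a_T:=|\{H\in\mathcal{H}:T\subseteq H\}|$ gives
\begin{equation*}
\sum_{T\subseteq N,\,|T|\leq m-1}(-1)^{|T|}a_T \;\equiv\; (-1)^{m-1}\cdot\bigl|\{H\in\mathcal{H}:m\mid|H|\}\bigr|\;\equiv\; 0\pmod p,
\end{equation*}
the last equivalence following from the reduced form of~(ii).

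Finally, to close the argument I would show that the same alternating sum $\sum_{T}(-1)^{|T|}a_T$ is nonzero modulo $p$, yielding the desired contradiction. This is the main obstacle: the bare bound $a_T\geq 1$ does not suffice, since replacing every $a_T$ by $1$ gives $\sum_{T}(-1)^{|T|}\equiv(-1)^{m-1}\pmod p$ by the same identity, and one would still need to control the ``excess'' term $\sum_T(-1)^{|T|}(a_T-1)$. Here I would exploit the intersection-closed structure crucially: by (i) together with (iii), every $T$ with $|T|\leq m-1$ has a \emph{unique} minimum cover $H_T:=\bigcap\{H\in\mathcal{H}:T\subseteq H\}\in\mathcal{H}$, so $a_T$ depends only on $H_T$ and the sum regroups as $\sum_{H\in\mathcal{H}}a_H\cdot h(H)$ with $h(H):=\sum_{T:\,H_T=H,\,|T|\leq m-1}(-1)^{|T|}$. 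My expectation is that an inclusion--exclusion on the meet-subsemilattice of $\mathcal{H}$ below each $H$ (controlling which subsets of $H$ are ``tight'' at $H$, i.e.\ not already covered by a smaller member of $\mathcal{H}$), together with a second application of the Lucas/Fermat identity above, forces the total residue to be nonzero, completing the contradiction and proving the theorem.
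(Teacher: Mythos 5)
Your padding step, the Lucas-based identity $\sum_{i=0}^{m-1}(-1)^i\binom{n}{i}\equiv(-1)^{m-1}\bigl[\,m\mid n\,\bigr]\pmod{p}$ for $n\geq1$, and the resulting double-count $\sum_{T\subseteq N,\,|T|\leq m-1}(-1)^{|T|}a_T\equiv0\pmod{p}$ are all correct, and your route (a direct inclusion--exclusion on the original system) is genuinely different from the paper's, which packages the same binomial arithmetic into a cardinality transformation function (eq.~\eqref{eq:gForPrimePowers}, Lemma~\ref{lem:gForPrimePowers}), transforms the hypothetical system to one where every set has size $\equiv1\pmod{p}$ while the transformed ground set has size $\equiv0\pmod{p}$, and finishes with the one-line inclusion--exclusion of Lemma~\ref{lem:structuredM1SystemNotPossible}.

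The final step of your proposal, however, is not merely unfinished but cannot succeed as stated: the quantity $\sum_T(-1)^{|T|}a_T$ that you hope to prove nonzero really is $\equiv0\pmod p$, which your own double-count already establishes, so there is nothing to contradict. (Concretely, for $m=p=2$ the sum equals $\sum_{H\in\mathcal{H}}(1-|H|)\equiv0\pmod2$ since each $|H|$ is odd.) The regrouping by $H_T$ and the speculated semilattice inclusion--exclusion therefore have no target. The repair is to change the \emph{outer} sum in the double-count from $\sum_{H\in\mathcal{H}}$ to a signed sum over nonempty subfamilies: set
\begin{equation*}
\Sigma\coloneqq\sum_{\emptyset\neq\mathcal{S}\subseteq\mathcal{H}}(-1)^{|\mathcal{S}|+1}\sum_{i=0}^{m-1}(-1)^i\binom{\bigl|\bigcap_{H\in\mathcal{S}}H\bigr|}{i}\enspace.
\end{equation*}
Since $\mathcal{H}$ is intersection-closed, $\bigcap_{H\in\mathcal{S}}H\in\mathcal{H}$, so each inner sum vanishes $\bmod\ p$ and $\Sigma\equiv0\pmod p$. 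Swapping the order of summation, the coefficient of $(-1)^{|T|}$ for a fixed $T$ with $|T|\leq m-1$ is $\sum_{\emptyset\neq\mathcal{S}\subseteq\mathcal{H}_T}(-1)^{|\mathcal{S}|+1}=1-(1-1)^{|\mathcal{H}_T|}=1$ (with $\mathcal{H}_T=\{H\in\mathcal{H}:T\subseteq H\}\neq\emptyset$ by property~\ref{item:MDCoverage}), so $\Sigma=\sum_{T:|T|\leq m-1}(-1)^{|T|}=\sum_{i=0}^{m-1}(-1)^i\binom{|N|}{i}\equiv(-1)^{m-1}\not\equiv0\pmod p$, a contradiction. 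In effect, the coverage property lets you replace the weight $a_T$ by the constant $1$, which is exactly what the paper achieves abstractly via the set transformation $G$; with that change your approach is a valid, more elementary unpacking of the paper's argument.
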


Moreover, we want to mention that Gopi~\cite{gopi2017systems}, after hearing a presentation of this paper, found an elegant proof showing that for $m$ not being a prime power, there do exist $(m,m-1)$-systems. This shows an interesting discrepancy between prime power moduli and non-prime power moduli, and it suggests that an extension of our techniques to the latter case requires new ideas.

\subsection{Organization of the paper}

In Section~\ref{sec:reduceToSetSys}, we show how the correctness of \refEnumD can be reduced to the nonexistence of $(m,d)$-systems, thus proving Theorem~\ref{thm:EnumDGoodIfNoBadSys}.
Section~\ref{sec:noBadSetSystem} shows Theorem~\ref{thm:noMM-1Sys}, the nonexistence of $(m,m-1)$-systems for $m$ being a prime power. The techniques presented in Section~\ref{sec:noBadSetSystem} comprise a general framework based on results from Combinatorics and Number Theory to disprove existence of certain types of set systems. In Section~\ref{sec:GCCSM}, we show how these techniques can be extended to~\eqref{eq:GCCSM}, thus implying our main result for~\eqref{eq:GCCSM}, Theorem~\ref{thm:mainGCCSM}.
Section~\ref{sec:barriers} identifies a combinatorial barrier to extending our proof techniques beyond $m$ being a prime power.
Section~\ref{sec:existenceMMm2Systems} shows that our choice of the depth $d$ of \refEnumD is smallest possible for the problems we consider.

\section{Reducing correctness of \refEnumD to properties of set systems}
\label{sec:reduceToSetSys}

The main goal of this section is to prove Theorem~\ref{thm:EnumDGoodIfNoBadSys}.
In fact, we show a slight strengthening, which allows us to derive results for \eqref{eq:GCCSM}, and may lead to further applications for constraints beyond congruency constraints. For this we generalize the notion of $(m,d)$-system to the notion of an $(\mathcal{F},d)$-system, where the role of all sets of cardinality $r\bmod{m}$ is replaced by a general constraint family $\mathcal{F}\subseteq 2^N$ on $N$. Moreover, we will be explicit about the underlying lattice, which leads to stronger statements that may be helpful for extending our results to further contexts.

\begin{definition}[$(\mathcal{F},d)$-system]
Let $\mathcal{L}\subseteq 2^N$ be a lattice, $\mathcal{F}\subseteq \mathcal{L}$, and let $d\in \mathbb{Z}_{>0}$. A family $\mathcal{H}\subseteq \mathcal{L}$ is called an $(\mathcal{F},d)$-system if the following holds, where $Q\coloneqq \bigcup_{H\in \mathcal{H}}H$:
\begin{enumerate}[label=(\roman*),itemsep=-0.2em,topsep=-0.2em]
\item\label{item:FDQin} $Q\in \mathcal{F}$,
\item\label{item:FDIntClosed} $\mathcal{H}$ is closed under intersection,
\item\label{item:FDSetsNotFeasible} $H\not\in \mathcal{F} \;\;\forall H\in \mathcal{H}$, and
\item\label{item:FDCoverage} for any $S\subseteq Q$ with $|S|\leq d$, there is a set $H\in \mathcal{H}$ with $S\subseteq H$.
\end{enumerate}
\end{definition}

Using the notion of $(\mathcal{F},d)$-systems, we can now define the following strengthening of Theorem~\ref{thm:EnumDGoodIfNoBadSys}, where for any set family $\mathcal{F}\subseteq \mathcal{L}$ defined on a lattice $\mathcal{L}$, we denote by $\comp(\mathcal{F})$ the complement family, i.e., $\comp(\mathcal{F})\coloneqq\{N\setminus F \mid F\in \mathcal{F}\}$, which we will interpret as a subfamily of the lattice $\comp(\mathcal{L})$.

\begin{theorem}\label{thm:EnumDGoodIfNoBadSysGen}
Let $\mathcal{L}\subseteq 2^N$ and $\mathcal{F}\subseteq \mathcal{L}$, and let $d\in \mathbb{Z}_{>0}$. If no $(\mathcal{F},d)$-system and no $(\comp(\mathcal{F}),d)$-system exists, then \refEnumD returns an optimal solution to any submodular function minimization problem over $\mathcal{F}$.
\end{theorem}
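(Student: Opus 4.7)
The plan is a contrapositive argument: assume $\mathrm{Enum}(d)$ fails on some instance of SFM over $\mathcal{F}\subseteq\mathcal{L}\subseteq 2^N$, and construct either an $(\mathcal{F},d)$-system or a $(\comp(\mathcal{F}),d)$-system, contradicting the hypothesis. Fix an optimal solution $O^*\in\mathcal{F}$. For every pair $(A,B)$ with $A\subseteq O^*$, $B\subseteq N\setminus O^*$, and $|A|,|B|\leq d$, we have $O^*\in\mathcal{L}_{AB}$, so the minimal minimizer $M_{AB}$ computed by the algorithm satisfies $f(M_{AB})\leq f(O^*)$. Since the algorithm fails to return a set of value at most $f(O^*)$ in $\mathcal{F}$, every such $M_{AB}$ must lie outside $\mathcal{F}$.

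I would then package the collection $\{M_{AB}\}$ into a candidate system whose axioms are checked step by step. The non-feasibility axiom is immediate from the previous paragraph. The coverage axiom for a set $S\subseteq O^*$ with $|S|\leq d$ is witnessed by $M_{S,\emptyset}$ itself. The intersection-closure axiom is the heart of the argument: given $M_1=M_{A_1B_1}$ and $M_2=M_{A_2B_2}$, the submodular uncrossing inequality $f(M_1)+f(M_2)\geq f(M_1\cap M_2)+f(M_1\cup M_2)$, combined with the minimality of $M_1,M_2$ and lattice closure of $\mathcal{L}$, forces $M_1\cap M_2$ to again appear as a minimum minimizer of $f$ over an appropriate $\mathcal{L}_{A'B'}$ arising from the enumeration, placing it back in the family.

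The final axiom, $Q\coloneqq\bigcup_{H\in\mathcal{H}}H\in\mathcal{F}$, is where the case split between $(\mathcal{F},d)$- and $(\comp(\mathcal{F}),d)$-systems arises. If the natural family $\mathcal{H}_1\coloneqq\{M_{A,\emptyset}:A\subseteq O^*,\,|A|\leq d\}$ satisfies $Q_1\subseteq O^*$, then in fact $Q_1=O^*$ (each $x\in O^*$ is hit by $M_{\{x\},\emptyset}$), hence $Q_1\in\mathcal{F}$ and we obtain an $(\mathcal{F},d)$-system. Otherwise some $M_{A,\emptyset}$ leaks outside $O^*$, in which case I would pass to the complementary instance with $g(S)\coloneqq f(N\setminus S)$, lattice $\comp(\mathcal{L})$, and feasibility family $\comp(\mathcal{F})$, where the optimum becomes $N\setminus O^*$, and repeat the argument to obtain a $(\comp(\mathcal{F}),d)$-system.

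The principal obstacle I anticipate is the intersection-closure lemma. Submodularity yields closure of minimizers under intersection within a fixed lattice, but the $M_{AB}$ live in different lattices $\mathcal{L}_{AB}$, so propagating minimality across them requires a careful uncrossing argument together with bookkeeping to ensure the combined parameters $A',B'$ still satisfy $|A',|,|B'|\leq d$; this is also what forces $\mathcal{H}$ to take a restricted form rather than containing every enumerated $M_{AB}$. A secondary subtlety is making the case split rigorous: one must confirm that in the leakage scenario the complementary instance really does produce a system with $Q_2=N\setminus O^*$, so that at least one of the two types of systems is obtained whenever $\mathrm{Enum}(d)$ fails.
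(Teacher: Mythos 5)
Your plan differs structurally from the paper's proof, and the difference matters: the two places you yourself flag as "the principal obstacle" and "a secondary subtlety" are in fact genuine gaps that your approach does not close, and the paper's proof is organized precisely to avoid them.

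The paper does not build the set system out of the algorithm's minimal minimizers $M_{AB}$ at all. Instead it interposes the notion of a \emph{$d$-good} pair $(\mathcal{F},\mathcal{L})$: for any submodular $f$ and any minimizer $S^*$ of $\min\{f(S)\mid S\in\mathcal{F}\}$, some $A\subseteq S^*$ with $|A|\le d$ satisfies $f(S)\ge f(S^*)$ for all $S\in\mathcal{L}$ with $A\subseteq S\subseteq S^*$. Lemma~\ref{lem:dGoodToOpt} shows that if $(\mathcal{F},\mathcal{L})$ and $(\comp(\mathcal{F}),\comp(\mathcal{L}))$ are both $d$-good then \refEnumD succeeds (this is where the algorithm and its $M_{AB}$ appear, via a single uncrossing against the pair $(A,B)$ coming from the two $d$-good certificates), and Lemma~\ref{lem:notDGoodToSys} shows that a failure of the $d$-good property already yields an $(\mathcal{F},d)$-system. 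In Lemma~\ref{lem:notDGoodToSys} the building blocks are \emph{failure witnesses} $S_A$, chosen to satisfy $A\subseteq S_A\subseteq S^*$ and $f(S_A)<f(S^*)$, with $S_A$ taken inclusion-wise \emph{maximal}. The containment $S_A\subseteq S^*$ is exactly what kills your ``leakage'' concern: $Q=\bigcup S_A = S^*\in\mathcal{F}$ holds automatically. Intersection-closure is obtained by simply closing the family $\{S_A\}$ under intersection; the nontrivial part is then to show that every such intersection $H$ still has $f(H)<f(S^*)$, which is done by induction on the number of intersected sets using submodularity and the maximality of the $S_A$, and $f(H)<f(S^*)$ gives $H\notin\mathcal{F}$ for free since $S^*$ is the $\mathcal{F}$-optimum.

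Concretely, the two gaps in your proposal: first, your intersection-closure step asserts that $M_1\cap M_2$ ``again appears as a minimum minimizer over an appropriate $\mathcal{L}_{A'B'}$,'' but there is no reason this should hold, and even if it lay in some $\mathcal{L}_{A'B'}$ with $|A'|,|B'|\le d$ it need not be the \emph{minimal} minimizer there; the paper sidesteps this entirely by not requiring the system's members to be outputs of the algorithm. Second, your case split on whether $\bigcup M_{A,\emptyset}\subseteq O^*$ does not resolve the issue, because the same leakage can occur on the complementary side; there is no reason the complementary instance's analogous family would satisfy $Q_2\subseteq N\setminus O^*$ either. In the paper the dichotomy between $(\mathcal{F},d)$- and $(\comp(\mathcal{F}),d)$-systems is not a case split on a single instance but reflects the two separate $d$-good hypotheses, one governing subsets of $S^*$ and one governing supersets (via the complement), and Lemma~\ref{lem:notDGoodToSys} is applied to whichever one fails. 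If you want to salvage your direct approach, the fix is essentially to replace the $M_{AB}$ by sets that are constrained to live inside $O^*$ by construction and chosen maximal subject to beating $f(O^*)$—which is exactly the paper's $S_A$ and amounts to reinventing the $d$-good framework.
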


We start by observing that Theorem~\ref{thm:EnumDGoodIfNoBadSysGen} indeed implies Theorem~\ref{thm:EnumDGoodIfNoBadSys}.

\begin{proof}[Proof of Theorem~\ref{thm:EnumDGoodIfNoBadSys}]
Consider a \eqref{eq:CCSM} problem $\min\{f(S) \mid S\in \mathcal{L}, |S|\equiv r \pmod{m}\}$. Hence, the set family $\mathcal{F}$ over which we want to minimize the function $f$ is given by
\begin{equation*}
\mathcal{F} = \{S\in \mathcal{L} \mid |S| \equiv r \pmod*{m}\}\enspace,
\end{equation*}
and its complement family is therefore
\begin{align*}
\comp(\mathcal{F}) = \{S\in \comp(\mathcal{L}) \mid |N\setminus S| \equiv r \pmod*{m}\}
 = \{S\in \comp(\mathcal{L}) \mid |S| \equiv |N| -r \pmod*{m}\}\enspace.
\end{align*}
The proof now follows by observing that any $(\mathcal{F},d)$-system or $(\comp(\mathcal{F}),d)$-system is also an $(m,d)$-system on a potentially different ground set.
Indeed, consider an $(\mathcal{F},d)$-system $\mathcal{H}$, and let $Q=\bigcup_{H\in \mathcal{H}}H$. (The case of a $(\comp(\mathcal{F}),d)$-system is analogous.) Then $\mathcal{H}$ is an $(m,d)$-system on $Q$ because properties~\ref{item:FDIntClosed} and~\ref{item:FDCoverage} of the definition of an $(\mathcal{F},d)$-system correspond to properties~\ref{item:MDIntClosed} and~\ref{item:MDCoverage} of an $(m,d)$-system on $Q$, respectively; moreover, properties~\ref{item:FDQin} and~\ref{item:FDSetsNotFeasible} of an $(\mathcal{F},d)$-system imply property~\ref{item:MDDiffParity} of an $(m,d)$-system.
\end{proof}

It remains to prove Theorem~\ref{thm:EnumDGoodIfNoBadSysGen}.

\subsection{Proof of Theorem~\ref{thm:EnumDGoodIfNoBadSysGen}}

We start by stating a key property of set systems $\mathcal{F}\subseteq \mathcal{L}$ that is crucial in our analysis to show that \refEnumD returns an optimal solution. This is an extension of a property used in~\cite{goemans_1995_minimizing} for parity constraints.

\begin{definition}[$d$-good set system]
Let $\mathcal{L}\subseteq 2^N$ be a lattice and $\mathcal{F}\subseteq \mathcal{L}$. We say that the tuple $(\mathcal{F}, \mathcal{L})$ is \emph{$d$-good}---or simply that $\mathcal{F}$ is \emph{$d$-good} if $\mathcal{L}$ is clear from context---if for any submodular function $f\colon\mathcal{L} \rightarrow \mathbb{Z}$, and any minimizer $S^*$ of $\min\{f(S) \mid S\in \mathcal{F}\}$, there exists a set $A\subseteq S^*$ with $|A|\leq d$ satisfying 
\begin{equation*}
f(S) \geq f(S^*) \quad \forall S\in \mathcal{L} \text{ with } A\subseteq S \subseteq S^*\enspace.
\end{equation*}
\end{definition}

We now prove Theorem~\ref{thm:EnumDGoodIfNoBadSysGen} in two steps. First, we show that if a set system $\mathcal{F}$ and its complement family $\comp(\mathcal{F})$ are $d$-good, then our algorithm will return an optimal solution.

\begin{lemma}\label{lem:dGoodToOpt}
Let $\mathcal{L}\subseteq 2^N$ be a lattice, $\mathcal{F}\subseteq \mathcal{L}$, and $d\in \mathbb{Z}_{>0}$. If $(\mathcal{F},\mathcal{L})$ and $(\comp(\mathcal{F}),\comp(\mathcal{L}))$ are both $d$-good, then \refEnumD returns an optimal solution to any submodular minimization problem on $\mathcal{F}$.
\end{lemma}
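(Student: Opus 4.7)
The plan is to use a cardinality-perturbation trick so that the two $d$-goodness assumptions, when applied to the perturbed function, force the minimal minimizer computed in step~\ref{algitem:enum} to coincide with an optimal solution in $\mathcal{F}$. Let $n = |N|$ and define $g(S) = (n+1) f(S) + |S|$; this $g$ is integer-valued and submodular, and as noted in Section~\ref{subsec:overviewTech}, for every sublattice $\mathcal{L}' \subseteq \mathcal{L}$ the minimal minimizer of $f$ over $\mathcal{L}'$ is exactly the (necessarily unique) minimizer of $g$ over $\mathcal{L}'$. I would fix any minimizer $\tilde{S}$ of $g$ over $\mathcal{F}$; then $\tilde{S}$ is in particular an optimal solution, so $f(\tilde{S}) = f(S^*)$ for every optimal $S^*$ of $f$ over $\mathcal{F}$.

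Next, I would apply both $d$-goodness hypotheses to the submodular function $g$ with minimizer $\tilde{S}$. The hypothesis on $(\mathcal{F}, \mathcal{L})$ yields a set $A \subseteq \tilde{S}$ with $|A| \leq d$ such that $g(S) \geq g(\tilde{S})$ for every $S \in \mathcal{L}$ with $A \subseteq S \subseteq \tilde{S}$. The hypothesis on $(\comp(\mathcal{F}), \comp(\mathcal{L}))$, applied via the reflected submodular function $T \mapsto g(N \setminus T)$ (whose minimizer over $\comp(\mathcal{F})$ is $N \setminus \tilde{S}$), yields a set $B \subseteq N \setminus \tilde{S}$ with $|B| \leq d$ such that $g(S) \geq g(\tilde{S})$ for every $S \in \mathcal{L}$ with $\tilde{S} \subseteq S \subseteq N \setminus B$. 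Since $A \cap B = \emptyset$, the pair $(A, B)$ is among those enumerated by \refEnumD.

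A short submodularity argument would then promote $\tilde{S}$ to a minimizer of $g$ over all of $\mathcal{L}_{AB}$: for any $S \in \mathcal{L}_{AB}$, the sets $S \cap \tilde{S}$ and $S \cup \tilde{S}$ lie in $\mathcal{L}$ and fall within the ranges certified by $A$ and $B$ respectively, so combining $g(S \cap \tilde{S}) \geq g(\tilde{S})$ and $g(S \cup \tilde{S}) \geq g(\tilde{S})$ with the submodular inequality $g(S) + g(\tilde{S}) \geq g(S \cap \tilde{S}) + g(S \cup \tilde{S})$ yields $g(S) \geq g(\tilde{S})$. Since the minimal minimizer of $f$ over $\mathcal{L}_{AB}$ equals the unique $g$-minimizer over $\mathcal{L}_{AB}$, step~\ref{algitem:enum} for the pair $(A,B)$ outputs exactly $\tilde{S}$, which therefore lies in $\mathcal{S} \cap \mathcal{F}$ with $f$-value $f(S^*)$. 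Because any element of $\mathcal{F}$ has $f$-value at least $f(S^*)$, the set returned by \refEnumD is optimal.

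The main obstacle---and the reason I work with $g$ rather than $f$ directly---is to guarantee that the minimal minimizer for at least one enumerated pair genuinely lies in $\mathcal{F}$. Applying the $d$-good conditions to $f$ alone only produces some $S_0 \subseteq S^*$ with $f(S_0) = f(S^*)$, which might be a proper subset of $S^*$ lying outside $\mathcal{F}$. The cardinality perturbation makes the $g$-minimizer over any sublattice unique, and thereby identifies the algorithm's output with the $\mathcal{F}$-minimizer $\tilde{S}$.
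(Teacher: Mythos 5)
Your proof is correct, and it takes a genuinely cleaner route than the paper. The paper works with a minimal $f$-minimizer $S^*$ of $\min\{f(S) \mid S\in\mathcal{F}\}$, and needs to upgrade the inner $d$-goodness bound from $f(S)\geq f(S^*)$ to a strict inequality $f(S)> f(S^*)$ for $A\subseteq S\subsetneq S^*$; it does so via a separate lemma (Lemma~\ref{lem:dGoodForMinimal}) whose proof applies $d$-goodness to a perturbed function $|N|f(S)+|N||S\setminus S^*|+|S|$ tailored to the particular $S^*$. Then submodularity plus the strict inner inequality gives $S^*\subseteq T$ for the minimal minimizer $T$ of $f$ over $\mathcal{L}_{AB}$, and minimality finishes. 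You instead apply both $d$-goodness hypotheses directly to the single globally-defined perturbation $g(S)=(n+1)f(S)+|S|$, which makes the $g$-minimizer over any sublattice unique and equal to the minimal $f$-minimizer. This lets you bypass strict inequalities entirely: the usual submodularity squeeze shows $\tilde S$ minimizes $g$ over $\mathcal{L}_{AB}$, and uniqueness of the $g$-minimizer identifies it with the algorithm's output. Your route thus avoids Lemma~\ref{lem:dGoodForMinimal} altogether and replaces the paper's two-step ($f$ plus an $S^*$-dependent perturbation) by one uniform perturbation; the cost is that you rely on, and must justify, the equivalence between minimal $f$-minimizers and unique $g$-minimizers over sublattices, but that equivalence is exactly the observation the paper already records when explaining how \refEnumD computes minimal minimizers. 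Both arguments are essentially perturbation tricks; yours just applies the perturbation at the top level, which is arguably tidier.
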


Conversely, if a constraint set $\mathcal{F}$ is not $d$-good, then we can derive the existence of an $(\mathcal{F},d)$-system out of it as shown by the following lemma, which, together with Lemma~\ref{lem:dGoodToOpt}, immediately implies Theorem~\ref{thm:EnumDGoodIfNoBadSysGen}, as desired.

\begin{lemma}\label{lem:notDGoodToSys}
Let $\mathcal{L}\subseteq 2^N$ be a lattice, $\mathcal{F}\subseteq \mathcal{L}$, and $d\in \mathbb{Z}_{>0}$. If  $(\mathcal{F},\mathcal{L})$ is not $d$-good, then there exists an $(\mathcal{F},d)$-system.
\end{lemma}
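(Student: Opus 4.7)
Given a non-$d$-good witness, my plan is to construct the required $(\mathcal{F},d)$-system explicitly by intersection-closing a canonical family of ``improving'' sets contained in an optimal solution. Let $f\colon\mathcal{L}\to\mathbb{Z}$ and a minimizer $S^*\in\mathcal{F}$ of $\min_{S\in\mathcal{F}} f(S)$ witness that $(\mathcal{F},\mathcal{L})$ is not $d$-good, and write $f^*:=f(S^*)$. The non-$d$-goodness assumption says precisely that for every $A\subseteq S^*$ with $|A|\le d$, the sublattice $\mathcal{L}_A:=\{S\in\mathcal{L}: A\subseteq S\subseteq S^*\}$ contains some set of $f$-value strictly less than $f^*$. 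For each such $A$ I pick $H_A$ to be the (unique) minimal minimizer of $f\vert_{\mathcal{L}_A}$; then $A\subseteq H_A\subseteq S^*$ and $f(H_A) < f^*$.

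Setting $Q:=S^*$ and letting $\mathcal{H}\subseteq\mathcal{L}$ be the closure under intersection of the family $\{H_A : A\subseteq S^*,\ |A|\le d\}$, properties (i), (ii), and (iv) are routine. Indeed, $Q=S^*\in\mathcal{F}$ gives (i); $\mathcal{H}$ is intersection-closed by construction, giving (ii); and any $A\subseteq Q$ with $|A|\le d$ satisfies $A\subseteq H_A\in\mathcal{H}$, giving (iv). Letting $A$ range over singletons of $S^*$ also confirms $\bigcup_{H\in\mathcal{H}} H = S^* = Q$.

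The substantive content is property (iii). I will prove the stronger statement that every $K\in\mathcal{H}$ satisfies $f(K) < f^*$; this forces $K\notin\mathcal{F}$, since every element of $\mathcal{F}$ has $f$-value at least $f^*$ by the optimality of $S^*$. The proof proceeds by induction on the minimum $k$ such that $K=H_{A_1}\cap\cdots\cap H_{A_k}$. The base case $k=1$ is immediate. For $k\ge 2$, write $K=K_1\cap K_2$ with $K_1,K_2\in\mathcal{H}$ of strictly smaller complexity, so by induction $f(K_1),f(K_2)\le f^*-1$. Submodularity gives $f(K)+f(K_1\cup K_2)\le f(K_1)+f(K_2)\le 2f^*-2$. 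Since $K_1\cup K_2\in\mathcal{L}$ is contained in $S^*$, whenever $K_1\cup K_2\in\mathcal{F}$ we have $f(K_1\cup K_2)\ge f^*$, and hence $f(K)\le f^*-2<f^*$, closing the inductive step.

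The main obstacle is the remaining case $K_1\cup K_2\notin\mathcal{F}$, in which the submodular inequality alone does not immediately bound $f(K)$ from above by $f^*-1$. To handle it, I plan to exploit the specific structure of the chosen generators: each $H_A$ is a minimal minimizer over $\mathcal{L}_A$, yielding the extra inequality $f(H_A\cup H')\ge f(H_A)$ for any $H'\in\mathcal{L}$ with $A\subseteq H'\subseteq S^*$ that is also a minimizer over $\mathcal{L}_A$, together with the submodularity of the induced set function $A\mapsto \min_{\mathcal{L}_A} f$. These allow me, when $K_1\cup K_2\notin\mathcal{F}$, either to replace the pair $(K_1,K_2)$ by generators whose union does land in $\mathcal{F}$ (anchored at $S^*$) and rerun the inequality, or to directly establish $f(K)<f^*$ via a sharpened submodular bound applied to a suitable triple of generators. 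Together with Lemma~\ref{lem:dGoodToOpt}, this then yields Theorem~\ref{thm:EnumDGoodIfNoBadSysGen}.
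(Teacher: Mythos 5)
Your overall setup is sound, and your choice of generators — inclusion-minimal minimizers $H_A$ of $f$ over $\mathcal{L}_A=\{S\in\mathcal{L}:A\subseteq S\subseteq S^*\}$ — differs from the paper's (which picks inclusion-\emph{maximal} sets $S_A\subseteq S^*$ containing $A$ with $f(S_A)<f(S^*)$, not necessarily minimizers). Both choices would work, and for both the key usable fact is the same: if $H$ is the chosen generator for $A$ and $T\in\mathcal{L}$ satisfies $A\subseteq H\subseteq H\cup T\subseteq S^*$, then $f(H)\leq f(H\cup T)$. For your choice this is immediate since $H\cup T\in\mathcal{L}_A$ and $H$ is a minimizer; for the paper's choice it follows from maximality.

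The genuine gap is in the inductive step. You write $K=K_1\cap K_2$ with $K_1,K_2\in\mathcal{H}$ generic and of smaller complexity, and then you are forced into a dichotomy on whether $K_1\cup K_2\in\mathcal{F}$. This dichotomy is a dead end, and your sketch for the case $K_1\cup K_2\notin\mathcal{F}$ (replacing the pair by other generators, or using submodularity of $A\mapsto\min_{\mathcal{L}_A}f$) is not an argument — it does not produce a concrete bound. The fix is to not decompose generically: if $K$ has complexity $k\geq 2$, write $K=K_1\cap H_{A_k}$ with $H_{A_k}$ a single generator and $K_1$ an intersection of $k-1$ generators. Then $A_k\subseteq H_{A_k}\subseteq K_1\cup H_{A_k}\subseteq S^*$, so $K_1\cup H_{A_k}\in\mathcal{L}_{A_k}$ and hence $f(K_1\cup H_{A_k})\geq f(H_{A_k})$. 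Plugging this into $f(K_1)+f(H_{A_k})\geq f(K_1\cup H_{A_k})+f(K)$ gives $f(K)\leq f(K_1)<f^*$ by the induction hypothesis, with no case split on $\mathcal{F}$-membership at all. This is exactly what the paper does (with its maximal $S_A$ playing the role of $H_{A_k}$). Once you peel off a single generator and use its structural property directly, the ``main obstacle'' you identified disappears.
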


The proof strategies for Lemmas~\ref{lem:dGoodToOpt} and~\ref{lem:notDGoodToSys} are heavily inspired by an approach presented in~\cite{goemans_1995_minimizing} for parity families. We remark that the proof of Lemma~\ref{lem:dGoodToOpt} strengthens the proof approach presented in~\cite{goemans_1995_minimizing}, which allows us to use simpler requirements for the definition of a $d$-good system than what would have been necessary by following the proof approach in~\cite{goemans_1995_minimizing} more closely.

To prove Lemma~\ref{lem:dGoodToOpt}, we show that under the assumption that $(\mathcal{F},\mathcal{L})$ and $(\comp(\mathcal{F}),\comp(\mathcal{L}))$ are both $d$-good, \refEnumD[d] returns a set with function value equal to the function value of a minimal optimal solution. As the following lemma shows, arguing about minimal (with respect to inclusion) optimal solutions allows us to obtain a stronger result from $(\mathcal{F},\mathcal{L})$ being a $d$-good set system.

\begin{lemma}\label{lem:dGoodForMinimal}
Let $(\mathcal{F},\mathcal{L})$ be a $d$-good set system. Then, for any submodular function $f\colon \mathcal{L} \rightarrow \mathbb{Z}$, and any minimal minimizer $S^*$ of $\min\{f(S) \mid S\in \mathcal{F}\}$, there exists a set $A\subseteq S^*$ with $|A|\leq d$ satisfying 
\begin{equation*}
f(S) > f(S^*) \quad \forall S\in \mathcal{L} \text{ with } A\subseteq S \subsetneq S^*\enspace.
\end{equation*}
\end{lemma}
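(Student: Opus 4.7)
The plan is to apply $d$-goodness of $(\mathcal{F},\mathcal{L})$ not to $f$ itself but to a carefully designed integer-valued submodular perturbation $h$ of $f$. The perturbation will be chosen so that (a) $S^*$ is still a minimizer of $h$ over $\mathcal{F}$, so that $d$-goodness can be invoked, and (b) the $\geq$-conclusion of $d$-goodness on $h$ forces, via integrality, the desired \emph{strict} inequality for $f$. Writing $n=|N|$, I will use
\begin{equation*}
h(S) \;=\; (n+1)\,f(S) \;+\; |S| \;+\; n\,|S\setminus S^*|\enspace.
\end{equation*}
This $h$ is integer-valued, and is submodular since it is $(n+1)f$ plus the modular function $|S|+n|S\setminus S^*|$.

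The first key step is to verify that $S^*$ minimizes $h$ over $\mathcal{F}$. Expanding,
\begin{equation*}
h(T) - h(S^*) \;=\; (n+1)\bigl(f(T)-f(S^*)\bigr) \;-\; |S^*\setminus T| \;+\; (n+1)\,|T\setminus S^*|\enspace,
\end{equation*}
I would split into two cases. If $T\subseteq S^*$ with $T\in\mathcal{F}$ and $T\neq S^*$, then inclusion-minimality of $S^*$ in $\mathcal{F}$ gives $f(T)>f(S^*)$, so $(n+1)(f(T)-f(S^*))\geq n+1$, while $|S^*\setminus T|\leq n$ and $|T\setminus S^*|=0$; hence the expression is at least $1$. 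If $T\not\subseteq S^*$, then $(n+1)|T\setminus S^*|\geq n+1$ already strictly dominates $|S^*\setminus T|\leq n$ regardless of $f(T)$, so again the expression is at least $1$. Consequently $S^*$ is the unique minimizer of $h$ over $\mathcal{F}$.

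Applying $d$-goodness of $(\mathcal{F},\mathcal{L})$ to $h$ at $S^*$ then produces a set $A\subseteq S^*$ with $|A|\leq d$ such that $h(S)\geq h(S^*)$ for all $S\in\mathcal{L}$ with $A\subseteq S\subseteq S^*$. For any such $S$ with $S\subsetneq S^*$ one has $|S\setminus S^*|=0$, so the inequality simplifies to $(n+1)(f(S)-f(S^*))\geq |S^*|-|S|\geq 1$. Dividing by $n+1$ and invoking integrality of $f$ yields $f(S)-f(S^*)\geq 1$, that is, $f(S)>f(S^*)$, which is the desired strict inequality.

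The main obstacle is designing the perturbation $h$. The naive tiebreaker $(n+1)f(S)+|S|$ alone is insufficient: an inclusion-minimal $S^*$ need not be of minimum cardinality among $f$-minimizers in $\mathcal{F}$, so some $T\in\mathcal{F}$ with $T\not\subseteq S^*$ and $|T|<|S^*|$ could strictly beat $S^*$ under this tiebreaker and block the application of $d$-goodness. The extra term $n|S\setminus S^*|$ is calibrated to kill precisely such outside minimizers, while vanishing on the downward cone $\{S\in\mathcal{L}\colon S\subseteq S^*\}$, which is where the integrality argument of the previous paragraph is carried out.
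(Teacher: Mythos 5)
Your proof is correct and uses essentially the same approach as the paper: both construct a perturbed submodular function of the form $c_1 f(S) + c_2|S\setminus S^*| + c_3|S|$ (the paper picks $c_1=c_2=|N|$, $c_3=1$; you pick $c_1=n+1$, $c_2=n$, $c_3=1$), verify that $S^*$ still minimizes it over $\mathcal{F}$ by exploiting minimality of $S^*$, invoke $d$-goodness on the perturbed function, and then recover the strict inequality for $f$ via integrality on the downward cone. The only cosmetic difference is that your verification splits cases on $T\subseteq S^*$ versus $T\not\subseteq S^*$, while the paper splits on $f(T)=f(S^*)$ versus $f(T)>f(S^*)$.
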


\begin{proof}
Fix a submodular function $f\colon \mathcal{L} \rightarrow \mathbb{Z}$, and a minimal minimizer $S^*$ of $\min\{f(S) \mid S\in \mathcal{F}\}$. Consider the function $g\colon \mathcal{L} \rightarrow \mathbb{Z}$ given by
\begin{equation*}
g(S) = |N|f(S) + |N||S\setminus S^*| + |S| \quad \text{for all } S\in\mathcal{L}\enspace.
\end{equation*}
The function $g$ is submodular because it is a conic combination of the three submodular functions $S\mapsto f(S)$, $S\mapsto |S\setminus S^*|$, and $S\mapsto |S|$. Moreover, we claim that $S^*$ is a minimizer of $\min\{g(S) \mid S\in\mathcal{F}\}$.
Indeed, by definition of $S^*$, we have $f(S) \geq f(S^*)$ for all $S\in\mathcal{F}$. If $f(S)\geq f(S^*)+1$, we get
\begin{align*}
g(S) \geq |N| f(S) \geq |N| (f(S^*)+1) \geq |N| f(S^*) + |S^*| = g(S^*)\enspace. 
\end{align*}
If, in the other case, $f(S) = f(S^*)$, then minimality of $S^*$ implies that $S\setminus S^* \neq \emptyset$, hence $|S\setminus S^*| \geq 1$, so
\begin{align*}
g(S) \geq |N| f(S) + |N| |S\setminus S^*| \geq |N| f(S^*) + |N| \geq |N| f(S^*) + |S^*| = g(S^*)\enspace. 
\end{align*}

Applying the property that $(\mathcal{F},\mathcal{L})$ is $d$-good to the submodular function $g$ and the minimizer $S^*$ of $\min\{g(S) \mid S\in\mathcal{F}\}$, we obtain that there exists a set $A\subseteq S^*$ with $|A|\leq d$ satisfying
\begin{equation*}
g(S)\geq g(S^*) \quad \forall S\in \mathcal{L} \text{ with } A\subseteq S \subseteq S^*\enspace.
\end{equation*}
To conclude, it suffices to see that for all $S\in\mathcal{L}$ with $S\subsetneq S^*$, the inequality $g(S) \geq g(S^*)$ implies $f(S) > f(S^*)$. Note that $S\subsetneq S^*$ implies $|S\setminus S^*|=0$, so the inequality $g(S) \geq g(S^*)$ can be rewritten as
\begin{equation*}
|N| f(S) + |S| \geq |N| f(S^*) + |S^*|\enspace.
\end{equation*}
The assumption $S\subsetneq S^*$ also implies $|S| < |S^*|$, hence, from the last inequality, we conclude $f(S)>f(S^*)$.
\end{proof}

With the above strengthening, we are ready to prove Lemma~\ref{lem:dGoodToOpt}.

\begin{proof}[Proof of Lemma~\ref{lem:dGoodToOpt}]
Let $f\colon\mathcal{L} \rightarrow \mathbb{Z}$ be a submodular function and let $S^*$ be a minimal minimizer of $\min\{f(S) \mid S\in\mathcal{F}\}$. Using that $(\mathcal{F},\mathcal{L})$ is $d$-good and applying Lemma~\ref{lem:dGoodForMinimal}, we obtain existence of a set $A\subseteq S^*$ with $|A|\leq d$ satisfying
\begin{equation}\label{eq:innerIneq}
f(S) > f(S^*) \quad \forall S\in \mathcal{L} \text{ with } A\subseteq S \subsetneq S^*\enspace.
\end{equation}
Note that the function $g\colon \comp(\mathcal{L})\to\mathbb{Z}$ given by $g(S)=f(N\setminus S)$ is submodular, and $N\setminus S^*$ is a minimizer of $\min\{ g(S) \mid S\in\comp(\mathcal{F})\}$. So using the assumption that $(\comp(\mathcal{F}),\comp(\mathcal{L}))$ is $d$-good, we obtain existence of a set $B\subseteq N\setminus S^*$ with $|B|\leq d$ satisfying
\begin{equation*}
g(S) \geq g(N\setminus S^*) \quad \forall S\in \comp(\mathcal{L}) \text{ with } B\subseteq S \subseteq N\setminus S^*\enspace.
\end{equation*}
Rewriting the above in terms of $f$ and replacing $S$ by $N\setminus S$, we get
\begin{equation}\label{eq:outerIneq}
f(S) \geq f(S^*) \quad \forall S \in \mathcal{L} \text{ with } S^* \subseteq S \subseteq N\setminus B\enspace.
\end{equation}

Let $T$ be a minimal minimizer of $f$ over the lattice $\mathcal{L}_{AB}=\{S\in\mathcal{L} \mid A\subseteq S\subseteq N\setminus B\}$. Note that sets of this type are found in the first step of \refEnumD[d] when considering the sets $A$ and $B$ given above. We claim that in fact $T=S^*$, proving that the minimizer $S^*$ is found in the first step of \refEnumD[d]. Consequently, the set returned by \refEnumD[d] is a set of optimal value $f(S^*)$, which is what we wanted to prove.

It remains to see that $T=S^*$. As $S^*\in\mathcal{L}_{AB}$, we have $f(S^*)\geq f(T)$. Together with submodularity of $f$, we get
\begin{equation*}
2f(S^*) \geq f(S^*) + f(T) \geq f(S^*\cap T) + f(S^* \cup T)\enspace.
\end{equation*}
If $S^*\cap T \subsetneq S^*$,~\eqref{eq:innerIneq} implies $f(S^*\cap T)>f(S^*)$. Moreover,~\eqref{eq:outerIneq} implies $f(S^*\cup T)\geq f(S^*)$. Together, we obtain a contradiction to the previous inequality. Consequently, we have $S^*\cap T = S^*$ or, in other words, $S^*\subseteq T$. Minimality of $T$ implies $S^* = T$, as desired.
\end{proof}

Finally, we prove Lemma~\ref{lem:notDGoodToSys}, which is the last missing piece in our proof of Theorem~\ref{thm:EnumDGoodIfNoBadSysGen}.

\begin{proof}[Proof of Lemma~\ref{lem:notDGoodToSys}]
Assume that $(\mathcal{F}, \mathcal{L})$ is not $d$-good. Hence, there is a submodular function $f\colon\mathcal{L}\to\mathbb{Z}$ and a minimizer $S^*$ of $\min\{f(S) \mid S\in \mathcal{F}\}$ such that for any set $A\subseteq S$ with $|A|\leq d$, there is a set $S_A\in \mathcal{L}$ with $A\subseteq S_A \subseteq S^*$ satisfying $f(S_A) < f(S^*)$. Among all such sets $S_A$, we choose one that is maximal (inclusion-wise).
Let $\mathcal{H}\subseteq \mathcal{L}$ be the family of all sets that can be obtained as intersections of the sets $\{S_A\}_{A\subseteq S^*, |A|\leq d}$, where we include the sets $S_A$ themselves also in the family $\mathcal{H}$. We claim that $\mathcal{H}$ is an $(\mathcal{F},d)$-system. 

Clearly, $\mathcal{H}\subseteq \mathcal{L}$, because each set $S_A$ satisfies $S_A \in \mathcal{L}$ and the lattice $\mathcal{L}$ is closed under intersection. Moreover, we have $Q = \bigcup_{H \in \mathcal{H}} H = S^*$ because each set in $\mathcal{H}$ is contained in $S^*$, and for each element $e\in S^*$, the set $S_{\{e\}}\in \mathcal{H}$ contains $e$.
Property~\ref{item:FDQin} of an $(\mathcal{F},d)$-system follows from $Q=S^*\in \mathcal{F}$. Moreover,~\ref{item:FDIntClosed} holds because $\mathcal{H}$ is intersection-closed by construction. Property~\ref{item:FDCoverage} is fulfilled because for each $A\subseteq Q$ with $|A|\leq d$, the set $S_A\in \mathcal{H}$ fulfills $A\subseteq S_A$. It remains to show that $\mathcal{H}$ fulfills property~\ref{item:FDSetsNotFeasible} of an $(\mathcal{F},d)$-system, i.e., that each set $H\in \mathcal{H}$ satisfies $H\not\in \mathcal{F}$.
Recall that each set $H\in \mathcal{H}$ can be written as
\begin{equation}\label{eq:HAsIntersection}
H = \bigcap_{i=1}^k S_{A_i}\enspace,
\end{equation}
where $k\in \mathbb{Z}_{\geq 1}$, and $A_1, \ldots, A_k \in \mathcal{L}$ with $A_i \subseteq S^*$ and $|A_i| \leq d$ for $i\in [k]$. We show that $f(H) < f(S^*)$ by induction on $k$. Notice that this implies $H\not\in \mathcal{F}$ because $S^*$ is a minimizer of $\min\{f(S) \mid S \in \mathcal{F}\}$, and hence, no other set in $\mathcal{F}$ can have a smaller $f$-value.

The case $k=1$ corresponds to sets $H=S_A$, where $A\in \mathcal{L}$, $A\subseteq S^*$, and $|A|\leq d$. By our choice of the sets $S_A$, we have $f(S_A) < f(S^*)$ for these sets.
Now consider a set $H$ as described in~\eqref{eq:HAsIntersection} for $k\geq 2$, and assume that for any set $H'$ that can be described as the intersection of at most $k-1$ sets $S_{A_i}$, it holds that $f(H') < f(S^*)$. Let $H' = \bigcap_{i=1}^{k-1} S_{A_i}$, and hence, $H= H' \cap S_{A_k}$. By submodularity of $f$ we have
\begin{equation}\label{eq:HAsIntSAk}
f(H') + f(S_{A_k}) \geq f(H'\cup S_{A_k}) + f(H)\enspace.
\end{equation}
By definition, $S_{A_k}$ is a maximal subset of $S^*$ containing $A_k$ and satisfying $f(S_{A_k})<f(S^*)$. The chain $A_k \subseteq S_{A_k} \subseteq H'\cup S_{A_k} \subseteq S^*$ of inclusions thus lets us conclude $f(S_{A_k}) \leq f(H' \cup S_{A_k})$. Combined with~\eqref{eq:HAsIntSAk}, this implies
\begin{equation*}
f(H') \geq f(H)\enspace,
\end{equation*}
and the result now follows by the induction hypothesis, which implies $f(S^*)>f(H')$.
\end{proof}

\section{Disproving the existence of $(m,m-1)$-systems}
\label{sec:noBadSetSystem}

In this section we prove Theorem~\ref{thm:noMM-1Sys}, i.e., that no $(m,m-1)$-system exists for $m$ being a prime power. To this end, we present a variety of techniques to transform set systems into more structured ones. Using those transformations, we show that any $(m,m-1)$-system, for $m=p^{\alpha}$ being a prime power, could be transformed into a $(p,1)$-system $\mathcal{H}$, on a possibly different ground set, such that each set $H\in \mathcal{H}$ is in the same congruence class with respect to $\bmod\ p$, i.e., there is an $r\in \{0,\ldots, p-1\}$ with $|H| \equiv r \pmod{p}$ for all $H\in \mathcal{H}$.
Such systems can quite easily be seen not to exist, which is shown by the next lemma. 
Notice that the lemma does not depend on $p$ being a prime or prime power; only the transformations we introduce later depend on this.

\begin{lemma}\label{lem:structuredM1SystemNotPossible}
Let $N$ be a finite set. There is no non-empty intersection-closed set system $\mathcal{H}\subseteq 2^N$, and integers $p\in \mathbb{Z}_{>0}, r\in \{0,\ldots, p-1\}$ such that
\begin{enumerate}[label=(\roman*),itemsep=-0.2em,topsep=-0.2em]
\item\label{item:impSystemCardN} $|N|\not\equiv r \pmod{p}$, 
\item\label{item:impSystemCardH} $|H|\equiv r \pmod{p} \;\;\forall H\in \mathcal{H}$, and
\item\label{item:impSystemCovering} for any $e\in N$, there is a set $H\in\mathcal{H}$ with $e\in H$, i.e., $N=\bigcup_{H \in \mathcal{H}}H$.
\end{enumerate}
\end{lemma}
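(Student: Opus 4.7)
My plan is to prove the lemma by a single application of inclusion--exclusion, arguing by contradiction. Suppose such a system $\mathcal{H}\subseteq 2^N$ and integers $p,r$ exist. Since $N$ is finite, so is $\mathcal{H}$, and property~\ref{item:impSystemCovering} forces $|\mathcal{H}|\geq 1$; set $n:=|\mathcal{H}|$.

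The first step is a short preliminary observation: a straightforward induction on $|\mathcal{I}|$, using intersection-closure, shows that for every non-empty subcollection $\mathcal{I}\subseteq\mathcal{H}$ the intersection $\bigcap_{H\in\mathcal{I}}H$ lies in $\mathcal{H}$, and hence by property~\ref{item:impSystemCardH} has cardinality congruent to $r\pmod{p}$.

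Next, I would invoke the covering property~\ref{item:impSystemCovering} to write $N=\bigcup_{H\in\mathcal{H}}H$ and apply inclusion--exclusion:
\begin{equation*}
|N| \;=\; \sum_{\emptyset\neq \mathcal{I}\subseteq\mathcal{H}} (-1)^{|\mathcal{I}|+1}\left|\bigcap_{H\in\mathcal{I}}H\right|.
\end{equation*}
Reducing the identity modulo $p$ and substituting $|\bigcap_{H\in\mathcal{I}}H|\equiv r$ on the right, the sum becomes $r\sum_{k=1}^{n}(-1)^{k+1}\binom{n}{k}\pmod{p}$. Since $\sum_{k=0}^{n}(-1)^k\binom{n}{k}=0$ for $n\geq 1$, the remaining sum equals $1$, yielding $|N|\equiv r\pmod{p}$ and contradicting property~\ref{item:impSystemCardN}.

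I do not foresee any real obstacle: the argument is essentially one line of inclusion--exclusion. The only small technical check is that intersection-closure propagates from pairs to arbitrary non-empty finite subcollections, which is immediate by induction, together with the edge case $n=1$ in the binomial identity, where the sum has a single term equal to~$1$. Note in particular that the argument uses no hypothesis on $p$, so this combinatorial step is fully general; the number-theoretic content needed elsewhere to prove Theorem~\ref{thm:noMM-1Sys} will enter through the separate set-system transformations outlined at the beginning of this section, and not through this lemma.
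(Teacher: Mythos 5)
Your proof is correct and takes essentially the same route as the paper: both rely on applying inclusion--exclusion to $N=\bigcup_{H\in\mathcal{H}}H$, using intersection-closure to conclude that every iterated intersection has cardinality $\equiv r\pmod{p}$, and then reducing the identity modulo $p$ to force $|N|\equiv r\pmod{p}$. The only cosmetic difference is that the paper first normalizes to $r=0$ by adjoining $p-r$ fresh elements to $N$ and to every set of $\mathcal{H}$, so that the right-hand side is visibly $0\pmod{p}$, whereas you keep $r$ general and instead invoke the binomial identity $\sum_{k=1}^{n}(-1)^{k+1}\binom{n}{k}=1$; both are equally valid one-line finishes.
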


\begin{proof}
Assume for the sake of contradiction that there exists a set system $\mathcal{H}\subseteq 2^N$ with the properties stated in the lemma. We first observe that we can assume without loss of generality that $r=0$. Indeed, by introducing $p-r$ new elements that we add to $N$ and every set in $\mathcal{H}$, a new set system is obtained that fulfills the properties of the lemma with $r=0$. As $N=\bigcup_{H\in \mathcal{H}}H$, we can compute $|N|$ by the inclusion-exclusion principle:
\begin{align*}
|N| = \sum_{k=1}^{|\mathcal{H}|} (-1)^{k+1}
\sum_{\substack{\mathcal{F}\subseteq \mathcal{H}\\ |\mathcal{F}|=k}}
\left\vert \bigcap_{F\in \mathcal{F}} F\right\vert\enspace.
\end{align*}
However, when considering the above equation modulo $p$, a contradiction arises because each set $\bigcap_{F\in \mathcal{F}} F$ on the right-hand side is contained in $\mathcal{H}$, as $\mathcal{H}$ is intersection-closed, and thus $|\bigcap_{F\in \mathcal{F}} F|\equiv 0 \pmod{p}$; this implies that the right-hand side is $0 \bmod{p}$, which contradicts $|N|\not\equiv 0\pmod{p}$.
\end{proof}

To illustrate some of our techniques, we first present a transformation of sets systems that proves Theorem~\ref{thm:noMM-1Sys} for $m$ being a prime. Later, in Section~\ref{subsec:setTransformations}, we introduce a general framework of set transformations, which we can use, as we will show in Section~\ref{subsec:proofOfThmNoMM-1Sys}, to handle prime powers. Moreover, the versatility of these set transformations also allows us to extend our results to~\eqref{eq:GCCSM}, which we show in Section~\ref{sec:GCCSM}.

\subsection{Set transformations and nonexistence of $(m,m-1)$-systems for $m$ prime}
\label{subsec:mPrime}

To prove that no $(m,m-1)$-system exists for $m$ prime, assume for the sake of contradiction that there is an $(m,m-1)$-system $\mathcal{H}\subseteq 2^N$. Notice that without loss of generality we can assume that $|N|\equiv 0 \pmod{m}$, and consequently $|H|\not\equiv 0 \pmod{m}$ for $H\in \mathcal{H}$.
Indeed, if $|N|\equiv r \pmod{m}$, then we can construct a new set system by introducing $m-r$ new elements which get added to the ground set $N$ and also to every set in $\mathcal{H}$. One can easily observe that this leads to another $(m,m-1)$-system with $|N|\equiv 0 \pmod{m}$.

Our goal is now to transform $\mathcal{H}$ into a new set system, on a different ground set $W$, such that the cardinality of each set changes in a well-defined way. More precisely, we want that a set of cardinality $x$ gets transformed into a set of cardinality $g(x)=x^{m-1}$. For $m$ being prime, Fermat's Little Theorem implies $x^{m-1} \equiv 1 \pmod{m}$ for any $x\not\equiv 0 \pmod{m}$. Hence, such a transformation would have the desired effect that any set $H\in \mathcal{H}$ will be transformed to a set in the same congruence class; moreover, the cardinality of the image of the ground set would remain $0 \pmod{m}$. Furthermore, for the resulting system to be an $(m,1)$-system, we need two additional properties: First, each element of the new ground set needs to be contained in at least one transformed set, and additionally, the transformed system needs to retain the property of being intersection-closed.

We now describe how a set transformation $G\colon2^N \rightarrow 2^W$ with the properties described above can be obtained. The new ground set is
\begin{equation*}
W = N^{m-1} \coloneqq \underbrace{N\times N \times \ldots \times N}_{\text{$m-1$ times}}\enspace.
\end{equation*}
Moreover, a set $S\subseteq N$ gets transformed into the set
\begin{equation*}
G(S) = \{(e_1,\ldots, e_{m-1}) \mid e_1,\ldots, e_{m-1}\in S\}\subseteq W\enspace.
\end{equation*}
Clearly, the cardinality of the transformed set $G(S)$ is $|G(S)| = |S|^{m-1}$. Hence, the change of cardinalities is indeed described by the function $g(x) = x^{m-1}$, as desired. Hence, if we look at the transformed set system $G(\mathcal{H})\coloneqq\{G(H)\mid H\in\mathcal{H}\}\subseteq 2^W$ on the new ground set $W$, we have
\begin{enumerate}[label=(\roman*),leftmargin=1.5em,itemsep=-0.0em,topsep=0.5em]
\item  $|W| = g(|N|) = |N|^{m-1} \equiv 0 \pmod{m}$, because $|N|\equiv 0 \pmod{m}$, and
\item $|G(H)|\! = \! g(|H|)\! =\! |H|^{m-1}\! \equiv\! 1 \pmod{m} \;\forall H\in \mathcal{H}$, by Fermat's Little Theorem and $|H|\not\equiv 0 \pmod{m}$.
\end{enumerate}
Moreover, $G(\mathcal{H})$ is indeed intersection-closed because the definition of $G$ implies 
\begin{equation*}
G(S\cap T) = G(S) \cap G(T) \quad \forall S,T\subseteq N\enspace.
\end{equation*}
Finally, $G(\mathcal{H})$ is an $(m,1)$-system, as each element $(e_1, \ldots, e_{m-1})\in W$ is covered by a set in $G(\mathcal{H})$ due to the following. Because $\mathcal{H}$ is an $(m,m-1)$-system, there is a set $H\in \mathcal{H}$ such that $\{e_1,\ldots, e_{m-1}\}\subseteq H$, and hence $(e_1, \ldots, e_{m-1})\in G(H)$.
Hence, $G(\mathcal{H})$ is an $(m,1)$-system with all sets in $G(\mathcal{H})$ being in the same congruence class $\bmod\ m$, which, by Lemma~\ref{lem:structuredM1SystemNotPossible}, does not exist and thus leads to the desired contradiction.
This disproves the existence of $(m,m-1)$-systems for $m$ being a prime, and implies via Theorem~\ref{thm:EnumDGoodIfNoBadSys} that our enumeration procedure works for prime moduli.

\begin{corollary}
For $m$ being a prime, \refEnumD with $d=m-1$ returns an optimal solution to \eqref{eq:CCSM} with modulus $m$.
\end{corollary}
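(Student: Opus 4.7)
The plan is to obtain the corollary by composing Theorem~\ref{thm:EnumDGoodIfNoBadSys} with the nonexistence of $(m,m-1)$-systems for prime $m$. Since Theorem~\ref{thm:EnumDGoodIfNoBadSys} already converts nonexistence of $(m,d)$-systems into correctness of $\mathrm{Enum}(d)$ on every \eqref{eq:CCSM} instance of modulus $m$, the only work required is to verify that no $(m,m-1)$-system exists when $m$ is prime. Everything needed for this verification is assembled in the paragraphs preceding the corollary, and my plan is to package those ingredients into one contradiction argument.

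First, I would assume for contradiction an $(m,m-1)$-system $\mathcal{H}\subseteq 2^N$ and normalize so that $|N|\equiv 0\pmod m$: if $|N|\equiv r\pmod m$ with $r\neq 0$, I add $m-r$ fresh elements to $N$ and to every set of $\mathcal{H}$; the new system is still intersection-closed, still has $|H|\not\equiv |N|\pmod m$ for every $H$, and still covers all $(m-1)$-subsets, so it is again an $(m,m-1)$-system. After this reduction, hypothesis (ii) strengthens to $|H|\not\equiv 0\pmod m$ for every $H\in\mathcal{H}$.

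Next, I would lift $\mathcal{H}$ to the ground set $W=N^{m-1}$ via the transform $G(S)=S^{m-1}\subseteq W$. Three properties drive the argument: (a) cardinality multiplicativity, $|G(S)|=|S|^{m-1}$; (b) intersection compatibility, $G(S\cap T)=G(S)\cap G(T)$, so $G(\mathcal{H})$ is intersection-closed; and (c) any tuple $(e_1,\dots,e_{m-1})\in W$ is covered by $G(H)$ for some $H\in\mathcal{H}$ containing $\{e_1,\dots,e_{m-1}\}$, which exists by property~\ref{item:MDCoverage} of an $(m,m-1)$-system. Combining (a) with Fermat's Little Theorem, $|G(H)|\equiv 1\pmod m$ for every $H\in\mathcal{H}$, whereas $|W|=|N|^{m-1}\equiv 0\pmod m$. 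Thus $G(\mathcal{H})$ satisfies all three hypotheses of Lemma~\ref{lem:structuredM1SystemNotPossible} with $p=m$ and $r=1$, yielding a contradiction. An application of Theorem~\ref{thm:EnumDGoodIfNoBadSys} with $d=m-1$ then closes the proof of the corollary.

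The only step where primality is truly used is the appeal to Fermat's Little Theorem; this is the obstruction one expects to hit if $m$ is merely a prime power (let alone composite), and is precisely why extending the result to prime power moduli requires the more delicate set transformations developed in Section~\ref{subsec:setTransformations}. Within the prime case, however, the argument is purely combinatorial bookkeeping on top of the nonexistence lemma and the transformation $G$, so no additional obstacles arise.
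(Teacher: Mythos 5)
Your proposal is correct and follows essentially the same route as the paper: combine Theorem~\ref{thm:EnumDGoodIfNoBadSys} with the nonexistence of $(m,m-1)$-systems for prime $m$, established by normalizing $|N|\equiv 0\pmod{m}$, applying the product-space transform $G(S)=S^{m-1}$, invoking Fermat's Little Theorem, and contradicting Lemma~\ref{lem:structuredM1SystemNotPossible}. No meaningful difference from the argument given in Section~\ref{subsec:mPrime}.
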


Whereas the above product space transformation was enough to deal with prime moduli and allowed for highlighting several important ideas, we need more involved transformations to deal with prime powers and~\eqref{eq:GCCSM}. In the next section, we therefore formalize and discuss in more generality a large class of cardinality transformations $g$ that can be achieved, and how they can be combined.

\subsection{A general framework based on set transformations}
\label{subsec:setTransformations}

We start by formalizing the idea of a transformation that changes the cardinality of a set $S$ in a well-defined way by some function $g$ and will also maintain the intersection-closed property, analogous to the transformation described in Section~\ref{subsec:mPrime}.
Moreover, the notion of the \emph{level of $g$}, which we also define below, allows us to give a simple condition to guarantee that elements in the new ground set remain covered by transformed sets.

\begin{definition}\label{def:cardTrans}
A map $g\colon\mathbb{Z}_{\geq 0} \rightarrow \mathbb{Z}_{\geq 0}$ is a \emph{cardinality transformation function} if for every finite set $N$, there is a finite set $W$ and a map $G\colon2^N \rightarrow 2^W$ such that
\begin{enumerate}[itemsep=-0.2em,topsep=0.2em,label=(\roman*)]
\item $G(N) = W$,
\item $|G(S)| = g(|S|) \;\;\forall S\subseteq N$, and
\item\label{item:ctfIntClosed} $G(S)\cap G(T) = G(S\cap T) \;\;\forall S,T\subseteq N$.
\end{enumerate}
Moreover, for $\ell\in \mathbb{Z}_{\geq 0}$, we say that $g$ is of \emph{level $\ell$} if $G$ can be chosen such that for every $w\in W$, there exists a set $S\subseteq N$ with $|S|\leq \ell$ such that $w\in G(S)$. In this case we call $G$ a set transformation of level $\ell$.

We call $G$ a \emph{$g$-realizing set transformation} for the ground set $N$. Conversely, $g$ is called the \emph{cardinality transformation function corresponding to $G$}. 
\end{definition}

Notice that property~\ref{item:ctfIntClosed} implies that for any intersection-closed family $\mathcal{H}\subseteq 2^N$, its image $G(\mathcal{H})$ is as well intersection-closed.
Furthermore, a set transformation function is always monotone, i.e., $G(S) \subseteq G(T)$ for $S\subseteq T \subseteq N$. This follows from $G(S) = G(S\cap T) = G(S) \cap G(T) \subseteq G(T)$ for any $S\subseteq T$.
Moreover, we recall that we want to find a set transformation $G$ that would transform an $(m,m-1)$ system, for $m$ being a prime power, to a system with the properties stated in Lemma~\ref{lem:structuredM1SystemNotPossible}, which leads to a contradiction by the same lemma. Hence, we want to find a set transformation $G$ such that the transformed set system still covers the ground set.
For this, observe that by applying a set transformation of level $\ell$ to any set system $\mathcal{H}\subseteq 2^N$ satisfying that for any $U\subseteq N$ with $|U|\leq \ell$, there is a set $S\in \mathcal{H}$ such that $U\subseteq S$, a new set system that covers the whole ground set is obtained. To better quantify this property in a way that allows us later to combine several set transformations, we introduce the notion of a $k$-covering set system.

\begin{definition}
For $k\in \mathbb{Z}_{\geq 1}$, a set family $\mathcal{H}\subseteq 2^N$ is \emph{$k$-covering} if, for any $U\subseteq N$ with $|U|\leq k$, there exists a set $S\in \mathcal{H}$ such that $U\subseteq S$.
\end{definition}

Hence, any $(m,d)$-system (and also any $(\mathcal{F},d)$-system) is a $d$-covering set system by definition. Moreover, a $1$-covering set system is a system covering the whole ground set. 
The following observation highlights how the coverage of a set system changes through set transformations of a certain level.

\begin{lemma}\label{lem:transCoverage}
Let $\mathcal{H}\subseteq 2^N$ be a $k$-covering set system, and let $G$ be a set transformation of level $\ell\in \mathbb{Z}_{\geq 1}$. Then $G(\mathcal{H})$ is a $\left\lfloor \frac{k}{\ell}\right\rfloor$-covering system.
\end{lemma}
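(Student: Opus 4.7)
The plan is to unpack both notions on the nose. I need to show that for every $U\subseteq W$ with $|U|\le \lfloor k/\ell\rfloor$ there exists some $H\in\mathcal{H}$ with $U\subseteq G(H)$. So I would fix such a $U$ and try to find a single small subset of $N$ whose image under $G$ already covers $U$.

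The natural construction is to exploit the level-$\ell$ property pointwise: for each $w\in U$ pick a set $S_w\subseteq N$ with $|S_w|\le \ell$ and $w\in G(S_w)$, and then set $T\coloneqq\bigcup_{w\in U}S_w$. A union bound gives $|T|\le \ell\cdot|U|\le \ell\cdot\lfloor k/\ell\rfloor\le k$, so $k$-coverage of $\mathcal{H}$ produces an $H\in\mathcal{H}$ with $T\subseteq H$.

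To finish, I would invoke monotonicity of $G$: for each $w\in U$ I have $S_w\subseteq T\subseteq H$, hence $w\in G(S_w)\subseteq G(H)$, which gives $U\subseteq G(H)$ as required. The only point worth flagging is that monotonicity of $G$ is not an axiom but a quick consequence of property~\ref{item:ctfIntClosed} in Definition~\ref{def:cardTrans}: for $S\subseteq T$ one has $G(S)=G(S\cap T)=G(S)\cap G(T)\subseteq G(T)$. I do not expect any genuine obstacle in this proof; the lemma is essentially a bookkeeping statement, and the only subtlety is remembering that ``level $\ell$'' is a property of a specific realizing map $G$, so the sets $S_w$ should be chosen with respect to the same $G$ used to transform $\mathcal{H}$.
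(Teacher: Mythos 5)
Your argument coincides with the paper's proof: pick level-$\ell$ witnesses $S_w$ for each $w\in U$, take their union, cover it by some $H\in\mathcal{H}$ using $k$-coverage, and conclude via monotonicity of $G$ (which the paper also derives from property~\ref{item:ctfIntClosed}). Nothing is missing.
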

\begin{proof}
Let $W=G(N)$ be the ground set of the transformed set system $G(\mathcal{H})$, and let $U\subseteq W$ with $|U|\leq \lfloor \frac{k}{\ell}\rfloor$. We have to show that there is a set $Y\in G(\mathcal{H})$ with $U\subseteq Y$. Because $G$ is of level $\ell$, for each element $u\in U$ there is a set $S_u\subseteq N$ with $|S_u|\leq \ell$ and $u\in G(S_u)$. Notice that $S\coloneqq \bigcup_{u\in U} S_u$ has thus size at most $|S| \leq \ell |U| \leq k$. Because $\mathcal{H}$ is $k$-covering, there exists $X\in \mathcal{H}$ with $S\subseteq X$. We finish the proof by showing that for $Y=G(X)$ we indeed have $U \subseteq Y$, which holds because we have that for all $u\in U$,
\begin{equation*}
G(X) \supseteq G(S) \supseteq G(S_u) \ni u\enspace,
\end{equation*}
where we use monotonicity of $G$ on the sets $X\supseteq S \supseteq S_u$, and the fact that $u\in G(S_u)$.
\end{proof}

In summary, the following provides a sufficient condition to disprove the existence of an $(m,m-1)$-system. Note that in the following statement, the number $p$ is not required to be prime.

\begin{theorem}\label{thm:targetCTF}
Let $m\in \mathbb{Z}_{\geq 1}$ and $d\in \mathbb{Z}_{\geq 1}$. There does not exist an $(m,d)$-system if there exists an integer $p\in \mathbb{Z}_{\geq 1}$ and a cardinality transformation function $g$ of level $d$ such that for $x\in \mathbb{Z}_{\geq 0}$:
\begin{equation}\label{eq:goodCTF}
g(x) \equiv \begin{cases}
0 \pmod{p} & \text{if } x\equiv 0 \pmod{m},\\
1 \pmod{p} & \text{if } x\not\equiv 0 \pmod{m}.
\end{cases}
\end{equation}
\end{theorem}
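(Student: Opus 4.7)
The plan is to assume for contradiction that an $(m,d)$-system $\mathcal{H} \subseteq 2^N$ exists, and to apply the cardinality transformation $g$ to turn $\mathcal{H}$ into a system on a larger ground set that falls under the scope of Lemma~\ref{lem:structuredM1SystemNotPossible}, yielding the desired contradiction. As a preliminary step, I would normalize $\mathcal{H}$ so that $|N| \equiv 0 \pmod{m}$: if $|N| \equiv s \pmod{m}$ with $s \neq 0$, I would add $m - s$ fresh elements to $N$ and adjoin them to every set of $\mathcal{H}$, which preserves all four defining properties of an $(m,d)$-system while making $|N|$ divisible by $m$.

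Next, I would invoke the hypothesis of the theorem to obtain a $g$-realizing set transformation $G\colon 2^N \to 2^W$ of level $d$, and consider the image system $G(\mathcal{H}) = \{G(H) \mid H \in \mathcal{H}\} \subseteq 2^W$. The main verification is that $G(\mathcal{H})$ satisfies the three hypotheses of Lemma~\ref{lem:structuredM1SystemNotPossible} with respect to the integer $p$ and residue $r = 1$: (a) $|W| = g(|N|) \equiv 0 \pmod{p}$ since $|N| \equiv 0 \pmod m$, so $|W| \not\equiv 1 \pmod p$; (b) for every $H \in \mathcal{H}$, property~\ref{item:MDDiffParity} of an $(m,d)$-system together with $|N| \equiv 0 \pmod m$ gives $|H| \not\equiv 0 \pmod m$, so $|G(H)| = g(|H|) \equiv 1 \pmod p$ by \eqref{eq:goodCTF}; (c) intersection-closedness of $G(\mathcal{H})$ follows directly from property~\ref{item:ctfIntClosed} of Definition~\ref{def:cardTrans} combined with intersection-closedness of $\mathcal{H}$; and (d) $G(\mathcal{H})$ is $1$-covering, which is exactly Lemma~\ref{lem:transCoverage} applied with $k = d$ and $\ell = d$, using that the original $(m,d)$-system $\mathcal{H}$ is $d$-covering by property~\ref{item:MDCoverage}. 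Note also that $G(\mathcal{H})$ is non-empty since $\mathcal{H}$ is.

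Having verified all hypotheses of Lemma~\ref{lem:structuredM1SystemNotPossible} for $G(\mathcal{H})$ with parameters $p$ and $r = 1$, that lemma delivers a contradiction, completing the proof. The argument is essentially a bookkeeping step once the definitions are unpacked, so I do not expect any serious obstacle; the only subtle point is the coverage condition, but it is tight: the level $d$ of $g$ matches exactly the covering depth $d$ of an $(m,d)$-system, so Lemma~\ref{lem:transCoverage} with $\lfloor d/d \rfloor = 1$ yields a $1$-covering system and no more, which is precisely what Lemma~\ref{lem:structuredM1SystemNotPossible} requires.
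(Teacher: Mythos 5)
Your proof is correct and follows essentially the same route as the paper: normalize so that $|N|\equiv 0\pmod{m}$, push $\mathcal{H}$ through a $g$-realizing set transformation $G$, and verify that $G(\mathcal{H})$ meets the hypotheses of Lemma~\ref{lem:structuredM1SystemNotPossible} with $r=1$, using Lemma~\ref{lem:transCoverage} for the coverage condition. The only unstated detail — that $p\geq 2$ is needed so that $0\not\equiv 1\pmod{p}$ — is implicit in the paper's proof as well, so there is no gap relative to the original.
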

\begin{proof}
With the goal of deriving a contradiction, assume that there exists both a cardinality transformation function as stated in the theorem and an $(m,d)$-system $\mathcal{H}\subseteq 2^N$ on some finite ground set $N$.
Let $r\in \{0,\ldots, m-1\}$ be such that $|N| \equiv r \pmod{m}$. Notice that, as in the proof of the above statement for prime numbers $m$ in Section~\ref{subsec:mPrime}, we can assume $r=0$, because if $r\neq 0$, then we can add $m-r$ new elements to $N$ and each set in $\mathcal{H}$, thus obtaining an $(m,d)$-system on a larger ground set with $r=0$. Hence, assume $r=0$.
The theorem now follows by observing that $G(\mathcal{H})$, where $G$ is a $g$-realizing set transformation for $N$ of level $d$, is a set system system fulfilling the conditions of Lemma~\ref{lem:structuredM1SystemNotPossible} with $r=1$, which is impossible by the same lemma. Notice that the fact of $G(\mathcal{H})$ covering the whole transformed ground set $G(N)$ follows by Lemma~\ref{lem:transCoverage}, as any $(m,d)$-system is by definition $d$-covering, and $G$ is of level $d$.
\end{proof}

The following two lemmas present a large class of cardinality transformation functions with low level. In Section~\ref{subsec:proofOfThmNoMM-1Sys}, we will see that this class is rich enough to disprove the existence of $(m,m-1)$-systems for $m$ being a prime power via Theorem~\ref{thm:targetCTF}.

\begin{lemma}\label{lem:basicCTF}
The following cardinality transformation functions $g\colon\mathbb{Z}_{\geq 0} \rightarrow \mathbb{Z}_{\geq 0}$ exist for every $k\in \mathbb{Z}_{\geq 1}$:
\begin{enumerate}[label=(\roman*),itemsep=-0.2em,topsep=0.2em]
\item\label{item:basicCTFConst} $g(x) = k$ of level $0$,
\item\label{item:basicCTFMonomial} $g(x) = x^{k}$ of level $k$, and
\item\label{item:basicCTFBinomial} $g(x) = \binom{x}{k}$ of level $k$.\footnote{We employ the usual convention that $\binom{n}{k}=0$ for $k>n$.}
\end{enumerate}
\end{lemma}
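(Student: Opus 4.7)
The plan is to exhibit, for each of the three functions, an explicit $g$-realizing set transformation $G\colon 2^N \to 2^W$ and verify the defining properties together with the claimed level. Each construction is natural: for (i) a constant map; for (ii) Cartesian products, as already used in Section~\ref{subsec:mPrime} to handle the prime case; and for (iii) the family of $k$-element subsets.

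For (i), observe first that any level-$0$ transformation must satisfy $W = G(\emptyset)$, and then monotonicity (which follows from the intersection-closed property, as noted after Definition~\ref{def:cardTrans}) forces $G$ to be constant. I would therefore fix any set $W$ of size $k$ and set $G(S) := W$ for every $S \subseteq N$. The conditions $G(N)=W$, $|G(S)|=k$, and $G(S)\cap G(T)=G(S\cap T)$ are then immediate.

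For (ii), I would let $W := N^k$ be the $k$-fold Cartesian product of $N$ with itself and set $G(S) := S^k$. Then $|G(S)|=|S|^k$, the identity $S^k \cap T^k = (S\cap T)^k$ gives the intersection-closed property, and any tuple $(e_1,\ldots,e_k)\in W$ lies in $G(\{e_1,\ldots,e_k\})$, a set whose defining subset of $N$ has size at most $k$, which witnesses level $k$. For (iii), I would let $W := \binom{N}{k}$ be the family of $k$-element subsets of $N$ and set $G(S) := \binom{S}{k}$. Then $|G(S)| = \binom{|S|}{k}$ (using the stated convention when $|S|<k$), a $k$-subset lies in both $\binom{S}{k}$ and $\binom{T}{k}$ precisely when it is contained in $S\cap T$ (giving the intersection identity), and any $U \in W$ lies in $G(U)$ with $|U|=k$, witnessing level $k$.

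The main obstacle here is essentially nonexistent: once the right constructions are written down, the verifications are direct. The real combinatorial content of Lemma~\ref{lem:basicCTF} lies not in the existence of these individual transformations, but in the fact that they form a sufficiently rich palette to be combined---via composition and further product-type constructions developed in Section~\ref{subsec:setTransformations}---into the more sophisticated transformations needed to apply Theorem~\ref{thm:targetCTF} in the prime-power case.
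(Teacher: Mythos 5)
Your proposal is correct and follows essentially the same route as the paper: a constant map for (i), the $k$-fold Cartesian product $G(S)=S^k$ for (ii) (which is exactly the construction from Section~\ref{subsec:mPrime} that the paper points to), and $G(S)=\binom{S}{k}$ for (iii), with the same direct verifications of the three defining properties and of the level. The small extra observation you make for (i)---that a level-$0$ transformation is forced to be constant---is not in the paper but is harmless and correct.
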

\begin{proof}
Throughout this proof, let $N$ be an arbitrary finite ground set. We have to show that there is a $g$-realizing cardinality transformation function $G$ for $N$ of the claimed level.

\begin{enumerate}[label=(\roman*),itemsep=-0.2em,topsep=0.2em]
\item Let $W$ be a set of cardinality $k$, and we define $G(S) = W$ for every $S\subseteq N$. One can easily verify that $G$ is a $g$-realizing cardinality transformation function for $g(x) =k$; moreover, it has level $0$ since $G(\emptyset) = W$.

\item The existence of such a cardinality transformation function was shown in our example in Section~\ref{subsec:mPrime}, where $k$ corresponds to $m-1$.

\item For a finite set $A$ and $a \geq 1$, we denote by $\binom{A}{a}$ the family of all subsets of $A$ of cardinality $a$. 
The transformed ground set $W=G(N)$ is set to be $W = \binom{N}{k}$, and we define $G\colon2^N \rightarrow 2^W$ to be $G(S) = \binom{S}{k}$, i.e., this is the family of all subsets of size $k$ of elements in $S$, also called $k$-subsets of $S$. This $G$ clearly fulfills $G(N)=W$ and $|G(S)| = g(S)$ for all $S\subseteq N$. Moreover, for $S,T\subseteq N$, we have
\begin{equation*}
G(S) \cap G(T) = \{\text{all $k$-subsets of $S$}\} \cap
\{\text{all $k$-subsets of $T$}\}
 = \{\text{all $k$-subsets of $S\cap T$}\}
 = G(S\cap T)\enspace.
\end{equation*}
Finally, the level of $G$ is indeed $k$, because any $w\in W$ corresponds to a $k$-subset of $N$, i.e., $w=\{e_1,\ldots, e_k\}\subseteq N$, and we have $G(\{e_1,\ldots, e_k\}) = \{w\}$.\qedhere
\end{enumerate}
\end{proof}

\begin{lemma}\label{lem:addCTF}
Let $g_1, g_2 \colon \mathbb{Z}_{\geq 0} \rightarrow \mathbb{Z}_{\geq 0}$ be two cardinality transformation functions of level $\ell_1$ and $\ell_2$, respectively. Then $g_1 + g_2$ is a cardinality transformation function of level $\max\{\ell_1,\ell_2\}$.
\end{lemma}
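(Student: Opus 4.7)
The plan is to build a $(g_1+g_2)$-realizing transformation by gluing together a $g_1$-realizing and a $g_2$-realizing transformation on disjoint copies of their codomains. Given any finite ground set $N$, I will fix $g_i$-realizing set transformations $G_i \colon 2^N \to 2^{W_i}$ of level $\ell_i$ for $i = 1, 2$ (relabeling elements if necessary so that $W_1 \cap W_2 = \emptyset$), set $W \coloneqq W_1 \cup W_2$, and define $G \colon 2^N \to 2^W$ by $G(S) \coloneqq G_1(S) \cup G_2(S)$ for every $S \subseteq N$.

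Each of the three defining properties of Definition~\ref{def:cardTrans} will transfer directly from the corresponding property of $G_1$ and $G_2$. The coverage identity $G(N) = W$ is immediate from $G_i(N) = W_i$. The cardinality identity $|G(S)| = g_1(|S|) + g_2(|S|)$ uses disjointness of $W_1$ and $W_2$ together with $|G_i(S)| = g_i(|S|)$. The intersection-preservation property $G(S) \cap G(T) = G(S \cap T)$ follows by distributing intersection over the disjoint union $W_1 \sqcup W_2$ and invoking $G_i(S) \cap G_i(T) = G_i(S \cap T)$ in each component.

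For the level bound, I take any $w \in W$; by disjointness it lies in exactly one $W_i$, and the level-$\ell_i$ property of $G_i$ then yields a set $S \subseteq N$ with $|S| \le \ell_i \le \max\{\ell_1, \ell_2\}$ and $w \in G_i(S) \subseteq G(S)$, as required. There is no real obstacle in this argument: the construction is essentially the disjoint union of the two transformations, and every claimed property of $G$ is a one-line consequence of the matching property of $G_1$ or $G_2$ combined with $W_1 \cap W_2 = \emptyset$. The only mild care needed is the initial relabeling step that enforces disjointness of the two codomains, which is purely cosmetic.
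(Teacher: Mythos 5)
Your construction matches the paper's proof exactly: both define $G(S) = G_1(S) \cup G_2(S)$ on disjoint codomains $W_1 \sqcup W_2$ and verify the coverage, cardinality, intersection-preservation, and level properties component-wise. The argument is correct and complete.
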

\begin{proof}
Let $N$ be a finite ground set, and let $G_i\colon2^N \rightarrow 2^{W_i}$ for $i\in \{1,2\}$ be a $g_i$-realizing set transformation of level $\ell_i$. Moreover, we choose the sets $W_1=G_1(N)$ and $W_2=G_2(N)$ to be disjoint. We claim that $G\colon2^N \rightarrow 2^{W_1\cup W_2}$ defined by $G(S) = G_1(S) \cup G_2(S)$ is a $(g_1+g_2)$-realizing set transformation of level $\max\{\ell_1,\ell_2\}$ as desired. Indeed, $G(N) = G_1(N) \cup G_2(N) = W_1\cup W_2$. Moreover, for any $S\subseteq N$,
\begin{align*}
|G(S)|&=|G_1(S) \cup G_2(S)| = |G_1(S)| + |G_2(S)| 
= g_1(|S|) + g_2(|S|) = (g_1+g_2)(|S|)\enspace,
\end{align*}
where the second equality follows from $G_i(S) \subseteq W_i$ for $i\in \{1,2\}$ and $W_1$ and $W_2$ were chosen to be disjoint.
Furthermore, for any $S,T\subseteq N$, we have
\begin{align*}
G(S) \cap G(T) &= (G_1(S) \cup G_2(S)) \cap (G_1(T) \cup G_2(T)) = (G_1(S) \cap G_1(T)) \cup (G_2(S) \cap G_2(T))\\
&=  G_1(S\cap T) \cup G_2(S\cap T) = G(S\cap T)\enspace,
\end{align*}
again exploiting disjointness of images with respect to $G_1$ and $G_2$, and the fact that $G_1$ and $G_2$ fulfill the intersection property of set transformations.
Finally, $G$ is indeed of level $\max\{\ell_1,\ell_2\}$, because for any element $w\in W_1\cup W_2$ there is an $i\in \{1,2\}$ such that $w\in W_i$, and due to the fact that $G_i$ is of level $\ell_i$, there exists a set $S\subseteq N$ with $|S|\leq \ell_i$ and $w\in G_i(S) \subseteq G(S)$.
\end{proof}

By combining Lemma~\ref{lem:basicCTF} and Lemma~\ref{lem:addCTF}, we obtain the following.

\begin{corollary}\label{cor:polyBinomCTF}
For any $k\in \mathbb{Z}_{\geq 1}$ and $a$, $b_1, \ldots, b_k$, $c_1,\ldots, c_k \in \mathbb{Z}_{\geq 0}$, the function
\begin{equation*}
g(x) = a + \sum_{i=1}^k b_i x^i + \sum_{i=1}^k c_i \binom{x}{i}
\end{equation*}
is a cardinality transformation function of level $k$.
\end{corollary}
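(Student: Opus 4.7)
The plan is to decompose $g$ as a finite sum of elementary cardinality transformation functions supplied by Lemma~\ref{lem:basicCTF} and then invoke Lemma~\ref{lem:addCTF} repeatedly. Concretely, I would write $g$ as the sum of one copy of the constant function $x\mapsto a$ (when $a\geq 1$), $b_i$ copies of the monomial $x\mapsto x^i$ for each $i\in[k]$, and $c_i$ copies of the binomial $x\mapsto \binom{x}{i}$ for each $i\in[k]$; in total, at most $1+\sum_{i=1}^k (b_i+c_i)$ summands.

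By Lemma~\ref{lem:basicCTF}, the constant term is a cardinality transformation function of level $0$, each $x^i$ is one of level $i$, and each $\binom{x}{i}$ is one of level $i$; since $i\leq k$, every summand is a cardinality transformation function of level at most $k$. Iterating Lemma~\ref{lem:addCTF} finitely many times then shows that the total sum is itself a cardinality transformation function whose level equals the maximum of the levels of the summands, hence at most $k$. Because the notion of level is monotone---if a map $G$ realizes a function at level $\ell$, the same $G$ trivially witnesses level $\ell'$ for every $\ell'\geq \ell$---the resulting cardinality transformation function is in particular of level $k$, as claimed.

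The only mild subtlety worth flagging is the handling of degenerate cases. If $a=0$, one simply omits the constant summand from the decomposition; if moreover all $b_i$ and $c_i$ vanish, then $g\equiv 0$, and one takes the trivial transformation with $W=\emptyset$ and $G(S)=\emptyset$ for every $S\subseteq N$, which satisfies all three conditions of Definition~\ref{def:cardTrans} vacuously at every level. Beyond this bookkeeping, there is no real obstacle: the corollary follows essentially immediately by combining the two preceding lemmas.
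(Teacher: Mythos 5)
Your proposal is correct and matches the paper's intended derivation, which simply cites Lemmas~\ref{lem:basicCTF} and~\ref{lem:addCTF} without spelling out the decomposition. You are in fact a bit more careful than the paper in flagging the degenerate cases ($a=0$, $g\equiv 0$) and in noting that the level notion is monotone, both of which are correct and worth having on record.
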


Theorem~\ref{thm:targetCTF} together with the existence of a rich set of cardinality transformation functions of low level, as stated by the above corollary, lead to a general approach to disprove the existence of $(m,d)$-systems in a concise way. Moreover, the approach can be adjusted to further settings, as we will see in Section~\ref{sec:GCCSM}, when talking about~\eqref{eq:GCCSM}.

In particular, the proof of why no $(m,m-1)$-system exists for $m$ being prime can now be rephrased as follows in a concise way. By Corollary~\ref{cor:polyBinomCTF} (or even just by Lemma~\ref{lem:basicCTF}) the function $g(x)=x^{m-1}$ is a cardinality transformation function of level $m-1$; moreover, it has property~\eqref{eq:goodCTF} stated in Theorem~\ref{thm:targetCTF} for $p=m$, due to Fermat's Little Theorem. Thus, by Theorem~\ref{thm:targetCTF}, an $(m,m-1)$-system, for $m$ prime, does not exist, which, by Theorem~\ref{thm:EnumDGoodIfNoBadSys}, implies that \refEnumD[m-1] returns an optimal solution to any~\eqref{eq:CCSM} problem with modulus $m$, as desired.

\subsection{Proof of Theorem~\ref{thm:noMM-1Sys}}
\label{subsec:proofOfThmNoMM-1Sys}

To disprove the existence of an $(m,m-1)$-system for $m=p^{\alpha}$ being a prime power, we consider the following cardinality transformation function of level $m-1$, whose existence is guaranteed by Corollary~\ref{cor:polyBinomCTF}:
\begin{equation}\label{eq:gForPrimePowers}
g(x) = \sum_{\substack{1\leq k < m,\\ k \text{ odd}}}
\binom{x}{k}
+ (p-1)\sum_{\substack{1\leq k < m,\\ k \text{ even}}}
\binom{x}{k}\enspace.
\end{equation}

To show in Lemma~\ref{lem:gForPrimePowers} that $g$ fulfils the conditions of Theorem~\ref{thm:targetCTF}, we use the following relation for binomial coefficients over a field $\mathbb{F}_p$ for $p$ prime, which follows from elementary techniques.%

\begin{lemma}\label{lem:binomMod}
Let $p$ be a prime, and let $a,b\in \mathbb{Z}_{\geq 0}$ and $\alpha \in \mathbb{Z}_{\geq 1}$ with $b< p^{\alpha}$. Then, it holds that
\begin{equation*}
\binom{a}{b} \equiv \binom{a \bmod p^{\alpha}}{b} \pmod{p}\enspace.
\end{equation*}
\end{lemma}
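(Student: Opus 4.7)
The plan is to give a short generating-function proof using the ``freshman's dream'' identity in characteristic $p$. Write $a = qp^\alpha + r$ with $r = a \bmod p^\alpha$, so that $0 \le r < p^\alpha$. The goal is to show $\binom{a}{b} \equiv \binom{r}{b} \pmod{p}$.

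First I would recall the polynomial identity $(1+x)^p \equiv 1 + x^p \pmod{p}$ in $\mathbb{Z}[x]$, which is immediate from the fact that $p \mid \binom{p}{k}$ for $1 \le k \le p-1$. Iterating this $\alpha$ times (or arguing by induction on $\alpha$) yields
\begin{equation*}
(1+x)^{p^\alpha} \equiv 1 + x^{p^\alpha} \pmod{p}\enspace.
\end{equation*}
Raising this to the $q$-th power and multiplying by $(1+x)^r$, one obtains
\begin{equation*}
(1+x)^a \;=\; \bigl((1+x)^{p^\alpha}\bigr)^{q}\,(1+x)^r \;\equiv\; (1 + x^{p^\alpha})^q (1+x)^r \pmod{p}\enspace.
\end{equation*}

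Next I would extract the coefficient of $x^b$ from both sides, viewed as polynomials over $\mathbb{F}_p$. The left-hand side gives $\binom{a}{b} \bmod p$. On the right-hand side, every nonzero term of $(1+x^{p^\alpha})^q$ other than the constant term $1$ has degree at least $p^\alpha > b$, so only the constant term of $(1+x^{p^\alpha})^q$ can contribute to the coefficient of $x^b$; this contribution equals the coefficient of $x^b$ in $(1+x)^r$, namely $\binom{r}{b}$. Comparing the two sides yields
\begin{equation*}
\binom{a}{b} \equiv \binom{r}{b} = \binom{a \bmod p^\alpha}{b} \pmod{p}\enspace,
\end{equation*}
as required.

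There is no real obstacle; the only step requiring any care is justifying that only the constant term of $(1+x^{p^\alpha})^q$ survives, which is precisely the hypothesis $b < p^\alpha$. An alternative would be to quote Lucas' theorem directly and observe that the base-$p$ digits of $a$ and $r = a \bmod p^\alpha$ agree in positions $0, 1, \dots, \alpha-1$, while $b$ has zero digits in all higher positions; but the generating-function argument above is self-contained and avoids invoking Lucas.
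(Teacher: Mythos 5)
Your proof is correct, and it takes a somewhat different route from the paper's. The paper works directly with binomial coefficients: it writes $\binom{a}{b} = \sum_{k=0}^b \binom{a-p^\alpha}{b-k}\binom{p^\alpha}{k}$ via Vandermonde's identity, observes that $\binom{p^\alpha}{k} = \frac{p^\alpha}{k}\binom{p^\alpha-1}{k-1}$ is divisible by $p$ for $1\leq k < p^\alpha$, and since $b<p^\alpha$ concludes $\binom{a}{b}\equiv\binom{a-p^\alpha}{b}\pmod{p}$, then iterates $\lfloor a/p^\alpha\rfloor$ times. Your argument packages the same information as the polynomial congruence $(1+x)^{p^\alpha}\equiv 1+x^{p^\alpha}\pmod{p}$ (which is exactly the statement that $\binom{p^\alpha}{k}\equiv 0$ for $0<k<p^\alpha$, obtained by iterating the freshman's dream), raises it to the $q$-th power in one shot rather than iterating one $p^\alpha$-step at a time, and reads off the coefficient of $x^b$. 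The two arguments are dual — Vandermonde's identity is precisely coefficient extraction from $(1+x)^{m+n}=(1+x)^m(1+x)^n$ — but your version avoids both the explicit divisibility computation and the explicit iteration, which makes it a bit cleaner. The key step, using $b<p^\alpha$ to kill all nonconstant terms of $(1+x^{p^\alpha})^q$, corresponds exactly to the paper's use of $b<p^\alpha$ to kill all $k\geq 1$ terms of the Vandermonde sum.
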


\begin{proof}
As a first step, we show that if $a - p^\alpha\geq 0$, then $\binom{a}{b}\equiv \binom{a-p^\alpha}{b}\pmod{p}$. Using Vandermonde's identity, we obtain
\begin{equation*}
\binom{a}{b} = \binom{(a-p^\alpha)+p^\alpha}{b} = \sum_{k=0}^b \binom{a-p^\alpha}{b-k}\binom{p^\alpha}{k}\enspace.
\end{equation*}
Note that $p^\alpha>k\geq 1$ implies that $\binom{p^\alpha}{k} = \frac{p^\alpha}{k}\binom{p^\alpha-1}{k-1}$ is divisible by $p$. As $b<p^\alpha$, we see that after reducing the above equation $\bmod\ p$, the only possibly non-zero summand is $\binom{a-p^\alpha}{b}$, as desired.
Iteratively applying $\binom{a}{b}\equiv \binom{a-p^\alpha}{b}\pmod{p}$, we immediately obtain that for all $\ell\in\mathbb{Z}_{\geq0}$ satisfying $a-\ell p^\alpha \geq 0$, we have
\begin{equation*}
\binom{a}{b} \equiv \binom{a-\ell p^\alpha}{b} \pmod{p}\enspace.
\end{equation*}
Choosing $\ell = \left\lfloor\frac{a}{p^\alpha}\right\rfloor$, we get that $a-\ell p^\alpha\in\{0,\ldots,p^\alpha-1\}$ is the residue class of $a$ $\bmod\ p^\alpha$, and the result follows.
\end{proof}

\begin{lemma}\label{lem:gForPrimePowers}
Let $m=p^{\alpha}$ be a prime power. Then, the function $g$ defined by~\eqref{eq:gForPrimePowers} fulfills property~\eqref{eq:goodCTF}, i.e., for $x\in \mathbb{Z}$ we have
\begin{equation*}
g(x) \equiv \begin{cases}
0 \pmod{p} & \text{if } x\equiv 0 \pmod{m},\\
1 \pmod{p} & \text{if } x\not\equiv 0 \pmod{m}.
\end{cases}
\end{equation*}
\end{lemma}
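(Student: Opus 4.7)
The plan is to reduce the claim to residues modulo $m$ using Lemma~\ref{lem:binomMod}, and then to collapse $g$ into a telescoping binomial sum that the binomial theorem handles immediately.

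First, I would apply Lemma~\ref{lem:binomMod}: since every binomial coefficient appearing in $g(x)$ has lower index $k$ with $1 \le k < m = p^\alpha$, we have $\binom{x}{k} \equiv \binom{x \bmod m}{k} \pmod{p}$, and hence $g(x) \equiv g(x \bmod m) \pmod{p}$. This reduces the whole statement to checking the two cases $x = 0$ and $x \in \{1, \ldots, m-1\}$, which correspond exactly to the two branches in the claim.

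The case $x = 0$ is immediate because $\binom{0}{k} = 0$ for every $k \ge 1$, so $g(0) = 0$. For the case $1 \le x \le m-1$, I would rewrite $g$ modulo $p$ by using $p - 1 \equiv -1 \pmod{p}$ to combine the two sums into a single alternating sum:
\begin{equation*}
g(x) \equiv \sum_{\substack{1 \le k < m \\ k \text{ odd}}} \binom{x}{k} - \sum_{\substack{1 \le k < m \\ k \text{ even}}} \binom{x}{k} = -\sum_{k=1}^{m-1} (-1)^k \binom{x}{k} \pmod{p}.
\end{equation*}
Since $x \le m-1$, the terms with $k > x$ vanish, so the sum on the right equals $\sum_{k=1}^{x}(-1)^k\binom{x}{k}$. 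The binomial theorem gives $\sum_{k=0}^{x}(-1)^k\binom{x}{k} = (1-1)^x = 0$ (using $x \ge 1$), so $\sum_{k=1}^{x}(-1)^k\binom{x}{k} = -1$, and therefore $g(x) \equiv 1 \pmod{p}$, as required.

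I do not foresee a real obstacle here: the work has already been done in Lemma~\ref{lem:binomMod}, which allows the truncation of the sum to the range $k \le x < m$, and after that the identity $(1-1)^x = 0$ does all the work. The only point that needs a moment of care is verifying that Lemma~\ref{lem:binomMod} applies in exactly the right range (namely $k < p^\alpha = m$), which is built into the definition of $g$, and that $x \ne 0$ is what makes the binomial-theorem evaluation yield $0$ rather than $1$, correctly separating the two cases of the claim.
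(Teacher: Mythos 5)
Your proof is correct and follows essentially the same route as the paper's: reduce to $x \in \{0,\ldots,m-1\}$ via Lemma~\ref{lem:binomMod}, handle $x=0$ directly, and for $x \ge 1$ collapse the two sums modulo $p$ into an alternating sum that the binomial theorem evaluates to $-1$. (Your phrasing $g(x)\equiv g(x\bmod m)\pmod p$ is actually a bit more careful than the paper's, which has a small typo at that step.)
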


\begin{proof}
Let $x\in \mathbb{Z}$. Due to Lemma~\ref{lem:binomMod}, we have $g(x) = g(x\bmod p)$. Hence, we can assume $x\in \{0,\ldots, m-1\}$.
For $x=0$ we clearly have $g(x) = 0$. Thus, assume $x\in \{1,\ldots, m-1\}$, and it remains to show $g(x)\equiv 1 \pmod{p}$, which holds due to
\begin{align*}
g(x) &= \sum_{\substack{1\leq k < m,\\ k \text{ odd}}}
\binom{x}{k}
+ (p-1)\sum_{\substack{1\leq k < m,\\ k \text{ even}}}
\binom{x}{k}\\
  &\equiv \sum_{\substack{1\leq k < m,\\ k \text{ odd}}}
\binom{x}{k}
- \sum_{\substack{1\leq k < m,\\ k \text{ even}}}
\binom{x}{k}&\pmod{p} \\
  &= 1-\sum_{k=0}^{m-1} (-1)^{k} \binom{x}{k} \\
  &= 1-\sum_{k=0}^{x} (-1)^{k} \binom{x}{k} \\
  &= 1-(1-1)^x = 1\enspace.  \tag*{\qedhere}
\end{align*}
\end{proof}

Combining the results of Lemma~\ref{lem:gForPrimePowers} and Theorem~\ref{thm:targetCTF}, we obtain that for any prime power $m=p^\alpha$, there does not exist an $(m,m-1)$-system, which finishes the proof of Theorem~\ref{thm:noMM-1Sys}.

\section{Extension to~\eqref{eq:GCCSM}}
\label{sec:GCCSM}

The methods for proving Theorem~\ref{thm:mainGCCSM}, which states polynomial time solvability of \eqref{eq:GCCSM} for prime power moduli, closely follow those presented above for \eqref{eq:CCSM}. As before, we establish a link between failure of the algorithm \refEnumD and set systems with certain properties. While this link leads to $(m,d)$-systems for \eqref{eq:CCSM} problems, we need the more general notion of $(m,k,d)$-systems for \eqref{eq:GCCSM}. For two vectors $x,y\in\mathbb{Z}^k$, we write $x\not\equiv y\pmod*{m}$ if there exists $i\in [k]$ with $x_i\not\equiv y_i\pmod*{m}$. 

\begin{definition}[$(m,k,d)$-system (with respect to $(S_1,\ldots,S_k)$ on $N$)]\label{def:mkdSystem}
Let $N$ be a finite ground set, let $m,k,d\in \mathbb{Z}_{>0}$, and let $S_1,\ldots,S_k\subseteq N$. We say that a set system $\mathcal{H}\subseteq 2^N$ is an $(m,k,d)$-system (with respect to $(S_1,\ldots,S_k)$ on $N$) if
\begin{enumerate}[label=(\roman*),itemsep=-0.2em,topsep=-0.2em]
\item\label{item:MKDIntClosed} $\mathcal{H}$ is closed under intersections,

\item\label{item:MKDDiffParity} $(|H\cap S_1|,\ldots,|H\cap S_k|) \not\equiv (|S_1|,\ldots,|S_k|) \pmod{m} \quad\forall H\in \mathcal{H}$, and

\item\label{item:MKDCoverage} for any $S\subseteq N$ with $|S|\leq d$, there is a $H\in \mathcal{H}$ with $S\subseteq H$.
\end{enumerate}
\end{definition}

Note that property~\ref{item:MKDCoverage} precisely states that every $(m,k,d)$-system is $d$-covering. Using the tools developed in Section~\ref{sec:reduceToSetSys}, we can immediately prove the following analogon to Theorem~\ref{thm:EnumDGoodIfNoBadSys}, thus reducing correctness of \refEnumD to a combinatorial question about nonexistence of $(m,k,d)$-systems. As for $(m,d)$-systems, nonexistence of $(m,k,d)$-systems without explicit reference to a ground set is to be understood to hold for any ground set, i.e., for every finite ground set $N$, there does not exist an $(m,k,d)$-system on $N$.

\begin{theorem}
Let $m,k,d\in\mathbb{Z}_{>0}$. If no $(m,k,d)$-system exists, then \refEnumD returns an optimal solution to any \eqref{eq:GCCSM} problem with modulus $m$ and $k$ congruency constraints.
\end{theorem}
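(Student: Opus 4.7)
The plan is to mirror the derivation of Theorem~\ref{thm:EnumDGoodIfNoBadSys} from Theorem~\ref{thm:EnumDGoodIfNoBadSysGen}. Given a \eqref{eq:GCCSM} instance, I set
\begin{equation*}
\mathcal{F} = \{S \in \mathcal{L} : |S \cap S_i| \equiv r_i \pmod{m} \; \forall i \in [k]\}.
\end{equation*}
My first observation is that $\comp(\mathcal{F})$ has the same shape as $\mathcal{F}$: since $|(N\setminus S)\cap S_i| = |S_i|-|S\cap S_i|$, we have
\begin{equation*}
\comp(\mathcal{F}) = \{S \in \comp(\mathcal{L}) : |S \cap S_i| \equiv |S_i|-r_i \pmod{m} \; \forall i \in [k]\}.
\end{equation*}
So by Theorem~\ref{thm:EnumDGoodIfNoBadSysGen} it suffices to show that, for any constraint family $\mathcal{F}$ of this ``GCCSM-type'' form, every $(\mathcal{F},d)$-system yields an $(m,k,d)$-system on a possibly smaller ground set; this single reduction handles both $(\mathcal{F},d)$-systems and $(\comp(\mathcal{F}),d)$-systems at once.

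The core step is a direct translation. Given an $(\mathcal{F},d)$-system $\mathcal{H}\subseteq\mathcal{L}$ with $Q=\bigcup_{H\in\mathcal{H}}H$, I will prove that $\mathcal{H}$ is an $(m,k,d)$-system on the ground set $Q$ with respect to the restricted sets $S_i' \coloneqq S_i\cap Q$ for $i\in[k]$. Properties~\ref{item:MKDIntClosed} and~\ref{item:MKDCoverage} of Definition~\ref{def:mkdSystem} are inherited verbatim from properties~\ref{item:FDIntClosed} and~\ref{item:FDCoverage} of an $(\mathcal{F},d)$-system, since the coverage statement is phrased on $Q$ in both definitions. For property~\ref{item:MKDDiffParity}, I use $H\subseteq Q$ (and $S_i'\subseteq Q$) to rewrite $|H\cap S_i'|=|H\cap S_i|$ and $|S_i'| = |Q\cap S_i|$. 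Property~\ref{item:FDQin} then gives $|Q\cap S_i|\equiv r_i \pmod{m}$ for all $i$, whereas property~\ref{item:FDSetsNotFeasible} provides, for each $H\in\mathcal{H}$, some $i\in[k]$ with $|H\cap S_i|\not\equiv r_i \pmod{m}$. Chaining these congruences yields $(|H\cap S_i'|)_{i=1}^k \not\equiv (|S_i'|)_{i=1}^k \pmod{m}$, which is exactly property~\ref{item:MKDDiffParity}.

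I do not foresee a genuine obstacle; the argument is essentially the same bookkeeping used to deduce Theorem~\ref{thm:EnumDGoodIfNoBadSys} from Theorem~\ref{thm:EnumDGoodIfNoBadSysGen}. The one point that deserves explicit care is verifying that the ``comparison vector'' $(|S_i'|)_i$ appearing in Definition~\ref{def:mkdSystem} agrees with $(|Q\cap S_i|)_i$ after the restriction $S_i\mapsto S_i\cap Q$, but this is immediate from $S_i'\subseteq Q$. Hence no $(m,k,d)$-system existing on any finite ground set (with respect to any choice of $k$ subsets) forces the hypotheses of Theorem~\ref{thm:EnumDGoodIfNoBadSysGen} to hold, which yields optimality of \refEnumD on every \eqref{eq:GCCSM} instance with modulus $m$ and $k$ congruency constraints.
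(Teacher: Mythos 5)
Your argument is correct and follows essentially the same route as the paper's proof: express the constraint set as a GCCSM-type family, note that $\comp(\mathcal{F})$ is again of this type, and then show that any $(\mathcal{F},d)$-system is an $(m,k,d)$-system on $Q$ after restricting the $S_i$ to $Q$. The only cosmetic difference is that you explicitly define $S_i' = S_i \cap Q$, whereas the paper phrases this as a without-loss-of-generality deletion of $S_i \setminus Q$.
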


\begin{proof}
Consider a \eqref{eq:GCCSM} problem $\min\{f(S) \mid S\in \mathcal{L}, \; |S\cap S_i| \equiv r_i \pmod*{m} \;\;\forall i\in [k]\}$ with $k$ congruency constraints. Consequently, the set family $\mathcal{F}$ over which we want to minimize the function $f$ is given by
\begin{equation*}
\mathcal{F} = \{ S\in \mathcal{L} \mid |S\cap S_i| \equiv r_i \pmod*{m} \;\;\forall i\in [k]\}\enspace.
\end{equation*}
By Theorem~\ref{thm:EnumDGoodIfNoBadSysGen}, it is sufficient to see that no $(\mathcal{F},d)$-systems and no $(\comp(\mathcal{F}),d)$-systems exist. To finish the proof, we show that each of these two systems are also $(m,k,d)$-systems. To this end, consider an $(\mathcal{F}, d)$-system $\mathcal{H}$, and let $Q=\bigcup_{H\in\mathcal{H}} H$. Without loss of generality, we may assume that $S_i\subseteq Q$ for all $i\in [k]$ (if not, we simply delete the elements in $S_i\setminus Q$ for all $i\in [k]$). By property~\ref{item:FDQin} of an $(\mathcal{F},d)$-system, we have $Q\in\mathcal{F}$, and hence $|S_i| = |Q\cap S_i| \equiv r_i \pmod{m} \;\forall i\in [k]$. Together with property~\ref{item:FDSetsNotFeasible} of $(\mathcal{F},d)$-systems, this implies property~\ref{item:MKDDiffParity} of an $(m,k,d)$-system. Moreover, properties~\ref{item:FDIntClosed} and~\ref{item:FDCoverage} of an $(\mathcal{F},d)$-system correspond to properties~\ref{item:MKDIntClosed} and~\ref{item:MKDCoverage} of an $(m,k,d)$-system.

Moreover, for a $(\comp(\mathcal{F}),d)$-system, note that we have
\begin{align*}
\comp(\mathcal{F}) &= \{ S\in\comp(\mathcal{L}) \mid |(N\setminus S)\cap S_i| \equiv r_i \pmod*{m} \;\;\forall i\in [k]\}\\
&= \{ S\in\comp(\mathcal{L}) \mid |S\cap S_i| \equiv |N\cap S_i|-r_i \pmod*{m} \;\;\forall i\in [k]\}\enspace,
\end{align*}
so $\comp(\mathcal{F})$ has the same form as $\mathcal{F}$, and is therefore also an $(m,k,d)$-system.
\end{proof}

The previous theorem implies that in order to prove Theorem~\ref{thm:mainGCCSM}, it remains to show that no $(m,k,d)$-systems exist for $m$ being a prime power and some $d = km + O(1)$. This is the content of the following theorem.

\begin{theorem}\label{thm:noMKKM-1Sys}
For $m\in\mathbb{Z}_{>0}$ being a prime power, there is no $(m,k,k(m-1))$-system.
\end{theorem}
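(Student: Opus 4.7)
The strategy is to mimic the proof of Theorem~\ref{thm:noMM-1Sys} but to combine $k$ suitably chosen set transformations via a Cartesian product so that the resulting transformation has level $k(m-1)$, matching the bound in the statement. Assume $m = p^\alpha$ is a prime power and, for contradiction, let $\mathcal{H} \subseteq 2^N$ be an $(m,k,k(m-1))$-system with respect to $(S_1,\ldots,S_k)$. I will build a set transformation $\Phi \colon 2^N \to 2^W$ that preserves intersections, satisfies $\Phi(N) = W$, and is such that for every $H \in \mathcal{H}$ we have $|\Phi(H)| \equiv 0 \pmod p$ while $|W| \equiv 1 \pmod p$, and such that $\Phi(\mathcal{H})$ covers $W$. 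Lemma~\ref{lem:structuredM1SystemNotPossible} applied with $r=0$ then yields the contradiction.

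The per-coordinate ingredient is, for each $i \in [k]$, a cardinality transformation function $\bar{g}_i$ of level at most $m-1$ whose value modulo $p$ is the indicator of the congruence class $x \equiv |S_i| \pmod m$: $\bar{g}_i(x) \equiv 1 \pmod p$ if $x \equiv |S_i| \pmod m$, and $\bar{g}_i(x) \equiv 0 \pmod p$ otherwise. Concretely, one may take $\bar{g}_i(x)$ to agree modulo $p$ with $1 - g(|S_i| - x)$, where $g$ is the function from Lemma~\ref{lem:gForPrimePowers}; viewed as an integer-valued polynomial in $x$ of degree at most $m-1$, this expression can be expanded in the basis $\{\binom{x}{j}\}_{0 \leq j \leq m-1}$ with integer coefficients. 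Adding a large enough non-negative multiple of $p\binom{x}{j}$ to each such coefficient clears any negatives without affecting the value modulo $p$, so the resulting polynomial has non-negative coefficients and, by Corollary~\ref{cor:polyBinomCTF}, is a cardinality transformation function of level at most $m-1$. Let $\bar{G}^{(i)} \colon 2^{S_i} \to 2^{W^{(i)}}$ be a $\bar{g}_i$-realizing set transformation of level at most $m-1$.

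The overall transformation is then the Cartesian product $\Phi(H) \coloneqq \prod_{i=1}^k \bar{G}^{(i)}(H \cap S_i)$, landing in $W \coloneqq \prod_{i=1}^k W^{(i)}$. Distributivity of Cartesian product over intersection combined with the intersection-preservation of each $\bar{G}^{(i)}$ (applied to subsets of $S_i$, noting $(H_1 \cap H_2) \cap S_i = (H_1 \cap S_i) \cap (H_2 \cap S_i)$) yields $\Phi(H_1 \cap H_2) = \Phi(H_1) \cap \Phi(H_2)$, while $\bar{G}^{(i)}(S_i) = W^{(i)}$ gives $\Phi(N) = W$ and $|\Phi(H)| = \prod_i \bar{g}_i(|H \cap S_i|)$. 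Consequently $|W| = \prod_i \bar{g}_i(|S_i|) \equiv 1 \pmod p$, while for $H \in \mathcal{H}$ property~\ref{item:MKDDiffParity} of Definition~\ref{def:mkdSystem} yields some $i$ with $|H \cap S_i| \not\equiv |S_i| \pmod m$, hence $\bar{g}_i(|H \cap S_i|) \equiv 0 \pmod p$ and thus $|\Phi(H)| \equiv 0 \pmod p$. For coverage of $W$ by $\Phi(\mathcal{H})$: given $w = (w_1,\ldots,w_k) \in W$, the level-$(m-1)$ property of each $\bar{G}^{(i)}$ gives $T_i \subseteq S_i$ of size at most $m-1$ with $w_i \in \bar{G}^{(i)}(T_i)$; setting $T \coloneqq \bigcup_i T_i$ we have $|T| \leq k(m-1)$, so the $k(m-1)$-covering property of $\mathcal{H}$ yields $H \in \mathcal{H}$ with $T \subseteq H$, and monotonicity of $\bar{G}^{(i)}$ (on $S_i$) combined with $T \cap S_i \supseteq T_i$ gives $w_i \in \bar{G}^{(i)}(H \cap S_i)$ for every $i$, so $w \in \Phi(H)$. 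All hypotheses of Lemma~\ref{lem:structuredM1SystemNotPossible} with $r = 0$ are thereby satisfied by $\Phi(\mathcal{H})$ on $W$, yielding the desired contradiction.

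The technical heart of the argument is the construction of the per-coordinate indicator $\bar{g}_i$: we need an honest cardinality transformation function (non-negative integer coefficients in the basis of Corollary~\ref{cor:polyBinomCTF}) whose values modulo $p$ single out the residue class $|S_i| \bmod m$. The crucial observation enabling this is that we may freely lift any coefficient by non-negative integer multiples of $p$ without perturbing the value modulo $p$, thereby bridging the gap between "we have an integer-valued polynomial representing the right mod-$p$ indicator" and "we have a cardinality transformation function". A secondary point is that the Cartesian product bundles per-coordinate covering depths additively, so $k$ factors each of depth $m-1$ give global depth $k(m-1)$; this is exactly why the statement's assumed covering level $k(m-1)$ is what makes the argument close.
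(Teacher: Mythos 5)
Your proof is correct and is built from the same core ideas as the paper's: a Fermat-type scalar cardinality transformation converts residues into mod-$p$ indicators coordinate-wise, and a Cartesian product then combines coordinates into a set system contradicting Lemma~\ref{lem:structuredM1SystemNotPossible}. The organizational differences are twofold. First, the paper normalizes $|S_i| \equiv 0 \pmod{m}$ for all $i$ (by padding $S_i$ and every set in $\mathcal{H}$ with fresh elements), so that the fixed function $g'(x) = 1 + (p-1)g(x)$ is the right indicator on every coordinate, whereas you construct shifted indicators $\bar{g}_i(x) \equiv 1 - g(|S_i| - x) \pmod{p}$; this dispenses with normalization, but it requires the additional remark that negative coefficients in the binomial-basis expansion of $1 - g(|S_i|-x)$ can be cleared by adding non-negative multiples of $p\binom{x}{j}$ (and, implicitly, that the mod-$p$ periodicity from Lemma~\ref{lem:binomMod} extends to the binomial polynomial at negative arguments, which is true but not stated in the paper). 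Second, the paper factors the argument through the abstraction of generalized cardinality transformation functions and the intermediate Lemma~\ref{lem:GCTFapplication} about highly structured $(m,k,k)$-systems, whereas you fuse the per-coordinate transformations and the Cartesian product into one map $\Phi$ and verify its intersection-preservation, cardinality, and coverage properties directly. Both routes are equivalent in substance; the paper's modular version isolates reusable tools, while yours is more self-contained and shows that neither the normalization step nor the intermediate-system abstraction is strictly necessary, at the price of the coefficient-lifting step.
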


The idea for proving Theorem~\ref{thm:noMKKM-1Sys} is to assume existence of an $(m,k,k(m-1))$-system and apply set system transformations to obtain more structured systems. More precisely, our proof involves two transformations. First, we apply a transformation very similar to the one given in~\eqref{eq:gForPrimePowers} that we used in the proof of Theorem~\ref{thm:noMM-1Sys}. Through this transformation, we obtain a well-structured $(m,k,k)$-system in which the vectors $(|H\cap S_1|,\ldots,|H\cap S_k|)$ take only very restricted values $\bmod\ p$, where $p$ is the prime such that $m=p^{\alpha}$ for some $\alpha\in \mathbb{Z}_{\geq 1}$.
In a second step, we show that the previously obtained system can in turn be transformed to a system contradicting Lemma~\ref{lem:structuredM1SystemNotPossible}. This step requires a more general type of transformation functions than the ones seen before, which we introduce in the next section, before finally proving Theorem~\ref{thm:noMKKM-1Sys}.

\subsection{Set transformations for the generalized setting}

Generalized cardinality transformation functions are very similar to the cardinality transformation functions seen earlier in Definition~\ref{def:cardTrans}. Here, the cardinality $|G(S)|$ of a transformed set $S\subseteq N$ depends on the sizes of $|S\cap S_i|$ for $i\in [k]$, instead of just the size of $S$. Formally, the definition is as follows.

\begin{definition}\label{def:genCardTrans}
A map $g\colon\mathbb{Z}^k_{\geq 0} \rightarrow \mathbb{Z}_{\geq 0}$ is a \emph{generalized cardinality transformation function} if for every finite set $N$ and all sets $S_1,\ldots,S_k\subseteq N$, there is a finite set $W$ and a map $G\colon2^N \rightarrow 2^W$ such that
\begin{enumerate}[itemsep=-0.2em,topsep=0.2em,label=(\roman*)]
\item $G(N) = W$,
\item\label{item:gctfCardinalities} $|G(S)| = g(|S\cap S_1|,\ldots,|S\cap S_k|) \;\;\forall S\subseteq N$, and
\item\label{item:gctfIntClosed} $G(S)\cap G(T) = G(S\cap T) \;\;\forall S,T\subseteq N$.
\end{enumerate}
Moreover, for $\ell\in \mathbb{Z}_{\geq 1}$, we say that $g$ is of \emph{level $\ell$} if $G$ can be chosen such that for every $w\in W$, there exists a set $S\subseteq N$ with $|S|\leq \ell$ such that $w\in G(S)$. In this case we call $G$ a set transformation of level $\ell$.

We call $G$ a \emph{$g$-realizing set transformation} for the ground set $N$ and the sets $S_1,\ldots,S_k$. Conversely, $g$ is called the \emph{cardinality transformation function corresponding to $G$}. 
\end{definition}

As pointed out before, the only difference to cardinality transformation functions as introduced in Definition~\ref{def:cardTrans} is property~\ref{item:gctfCardinalities}. For this reason, the properties that we proved for cardinality transformation functions also hold true for generalized cardinality transformation functions. In particular, if $G$ is a set transformation function of level $\ell$ realizing a generalized cardinality transformation function, and $\mathcal{F}$ is a set system, we have the following. If $\mathcal{F}$ is intersection-closed, then so is $G(\mathcal{F})$ (this follows from property~\ref{item:gctfIntClosed} above), and if $\mathcal{F}$ is $k$-covering, then $G(\mathcal{F})$ is $\lfloor\frac{k}{\ell}\rfloor$-covering (analogous to Lemma~\ref{lem:transCoverage}). Moreover, $G$ is a monotone function.

There are various ways to construct generalized cardinality transformation functions, but we restrict our attention to the precise function that we need for our proofs.

\begin{lemma}\label{lem:productGCTF}
For every $k\in\mathbb{Z}_{\geq 1}$, the function $g(x_1,\ldots,x_k) = x_1x_2\cdots x_k$ is a generalized cardinality transformation function of level $k$.
\end{lemma}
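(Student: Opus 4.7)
The plan is to exhibit an explicit product construction and verify the four requirements of Definition~\ref{def:genCardTrans}. Given a finite ground set $N$ and sets $S_1, \ldots, S_k \subseteq N$, I would set
\begin{equation*}
W = S_1 \times S_2 \times \cdots \times S_k,
\end{equation*}
and define $G \colon 2^N \to 2^W$ by
\begin{equation*}
G(S) = (S \cap S_1) \times (S \cap S_2) \times \cdots \times (S \cap S_k) \quad \text{for all } S \subseteq N.
\end{equation*}

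The first three properties should be immediate from the definition. In particular, $G(N) = S_1 \times \cdots \times S_k = W$; the cardinality identity $|G(S)| = \prod_{i=1}^k |S \cap S_i| = g(|S\cap S_1|, \ldots, |S\cap S_k|)$ follows from the fact that the size of a Cartesian product equals the product of sizes; and the intersection-closure property
\begin{equation*}
G(S) \cap G(T) = \prod_{i=1}^k \bigl((S\cap S_i) \cap (T\cap S_i)\bigr) = \prod_{i=1}^k \bigl((S\cap T) \cap S_i\bigr) = G(S \cap T)
\end{equation*}
follows since Cartesian product commutes with intersection componentwise.

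For the level bound, given any $w = (w_1, \ldots, w_k) \in W$, I would exhibit the witness set $S_w := \{w_1, \ldots, w_k\} \subseteq N$. Since $w_i \in S_i$ by definition of $W$ and $w_i \in S_w$ by construction, we have $w_i \in S_w \cap S_i$ for each $i$, hence $w \in G(S_w)$. As $|S_w| \leq k$ (the $w_i$ need not be distinct), this shows $G$ is of level $k$, completing the verification.

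I do not anticipate any serious obstacle here: the construction is the natural product analogue of the monomial transformation $x \mapsto x^k$ from Lemma~\ref{lem:basicCTF}\ref{item:basicCTFMonomial}, and every required identity reduces to a one-line manipulation of Cartesian products. The only point worth being careful about is that the level bound uses $|S_w| \leq k$ rather than equality, since coordinates of $w$ may coincide when the sets $S_i$ are not disjoint.
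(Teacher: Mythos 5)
Your construction is exactly the one in the paper: same ground set $W = S_1 \times \cdots \times S_k$, same map $G(S) = (S\cap S_1) \times \cdots \times (S\cap S_k)$, same witness set $S_w = \{w_1,\ldots,w_k\}$ for the level bound, including the observation that $|S_w| \leq k$ may be strict. The only cosmetic difference is that you verify intersection-closure via the componentwise Cartesian-product identity while the paper phrases it as an element-membership equivalence; these are the same argument.
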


\begin{proof}
Let $N$ be a finite set and let $S_1,\ldots,S_k\subseteq N$. Let $W=S_1\times\ldots\times S_k$ and define $G\colon 2^N\rightarrow 2^W$ by
\begin{equation*}
G(S) = (S\cap S_1) \times \ldots \times (S\cap S_k)
\end{equation*}
for all $S\subseteq N$. We claim that $G$ is a $g$-realizing set transformation function. Indeed, it is easy to see that $G(N)=W$ by definition. Moreover, we have
\begin{align*}
|G(S)|=|(S\cap S_1) \times \ldots \times (S\cap S_k)| = |S\cap S_1| \cdot\ldots\cdot |S\cap S_k| = g(|S\cap S_1|, \ldots, |S\cap S_k|)\enspace.
\end{align*}
To see that $G$ also fulfills property~\ref{item:gctfIntClosed} in Definition~\ref{def:genCardTrans}, note that for all sets $S,T\subseteq N$, having $e\in(S\cap T\cap S_1) \times \ldots \times (S\cap T\cap S_k)$ is equivalent to having $e\in(S\cap S_1) \times \ldots \times (S\cap S_k)$ and $e\in(T\cap S_1) \times \ldots \times (T\cap S_k)$. Hence, $G(S\cap T) = G(S)\cap G(T)$, as desired.

To see that $g$ is of level $k$, note that every $w\in W$ is a sequence of elements $(s_1,\ldots,s_k)$ with $s_i\in S_i$ for $i\in [k]$. Let $S_w = \{s_1,\ldots,s_k\}$, then $w\in G(S_w)$ and $|S_w| \leq k$ (notice that we may have $|S_w|< k$, because some of $s_i$ may be identical). Thus $g$ is of level $k$.
\end{proof}

\subsection{Disproving existence of $(m,k,k(m-1))$-systems}

As outlined above, the proof of Theorem~\ref{thm:noMKKM-1Sys}, namely that there do not exist $(m,k,k(m-1))$-systems, has two steps. In a first step, we disprove the existence of a very structured version of an $(m,k,k)$-system. In a second step, we prove Theorem~\ref{thm:noMKKM-1Sys} by showing that any $(m,k,k(m-1))$-system for $m$ being a prime power can be reduced to this structured version of an $(m,k,k)$-system.

\begin{lemma}\label{lem:GCTFapplication}
Let $m,k,p\in\mathbb{Z}_{\geq 0}$, let $N$ be a finite set and let $S_1,\ldots,S_k\subseteq N$. There does not exist a non-empty $(m,k,k)$-system $\mathcal{H}$ with respect to $(S_1,\ldots,S_k)$ on $N$ such that
\begin{enumerate}[label=(\roman*),itemsep=-0.2em,topsep=-0.2em]
\item $|S_i|\equiv 1 \pmod{p} \;\;\forall i\in [k]$, and
\item $(|H\cap S_1|,\ldots,|H\cap S_k|)\in\{0,1\}^k\setminus\{(1,\ldots,1)\} \pmod{p} \;\;\forall H\in\mathcal{H}$.
\end{enumerate}
\end{lemma}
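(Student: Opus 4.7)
The plan is to apply the product generalized cardinality transformation function $g(x_1,\ldots,x_k)=x_1 x_2\cdots x_k$ from Lemma~\ref{lem:productGCTF} and then reduce to a contradiction via Lemma~\ref{lem:structuredM1SystemNotPossible}. Suppose, for the sake of contradiction, that $\mathcal{H}\subseteq 2^N$ is a non-empty $(m,k,k)$-system with respect to $(S_1,\ldots,S_k)$ satisfying conditions~(i) and~(ii). Let $G\colon 2^N\to 2^W$ be a $g$-realizing set transformation of level $k$ for the ground set $N$ and the sets $S_1,\ldots,S_k$. I will argue that the non-empty system $G(\mathcal{H})\subseteq 2^W$ contradicts Lemma~\ref{lem:structuredM1SystemNotPossible} applied with modulus $p$ and residue $r=0$.

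First, I would compute the two relevant cardinalities. Condition~(i) gives
\begin{equation*}
|W|=|G(N)|=|S_1|\cdot|S_2|\cdots|S_k|\equiv 1\pmod{p},
\end{equation*}
while for any $H\in\mathcal{H}$, condition~(ii) guarantees that at least one index $i\in[k]$ satisfies $|H\cap S_i|\equiv 0\pmod{p}$ (since the residue vector is not $(1,\ldots,1)$ modulo $p$), so that
\begin{equation*}
|G(H)|=|H\cap S_1|\cdot|H\cap S_2|\cdots|H\cap S_k|\equiv 0\pmod{p}.
\end{equation*}
Hence every set in $G(\mathcal{H})$ has cardinality $\equiv 0\pmod{p}$, whereas $|W|\not\equiv 0\pmod{p}$.

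For the structural properties, $G(\mathcal{H})$ is intersection-closed, using property~\ref{item:gctfIntClosed} of Definition~\ref{def:genCardTrans} together with intersection-closedness of $\mathcal{H}$ from property~\ref{item:MKDIntClosed} of an $(m,k,k)$-system. Since $\mathcal{H}$ is $k$-covering by property~\ref{item:MKDCoverage} and $G$ has level $k$, the generalized analogue of Lemma~\ref{lem:transCoverage}---which the paper explicitly states carries over verbatim to generalized cardinality transformations---gives that $G(\mathcal{H})$ is $\lfloor k/k\rfloor=1$-covering, i.e., $W=\bigcup_{H\in\mathcal{H}}G(H)$. Together with non-emptiness (inherited from $\mathcal{H}$), these properties exactly match the hypotheses of Lemma~\ref{lem:structuredM1SystemNotPossible} with $r=0$ applied to $G(\mathcal{H})$, yielding the desired contradiction.

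The choice of $g$ is essentially forced by condition~(ii): a product of nonnegative integers vanishes modulo $p$ exactly when at least one factor does, which converts the exclusion of $(1,\ldots,1)\bmod p$ in~(ii) into the uniform residue $0\bmod p$ required by Lemma~\ref{lem:structuredM1SystemNotPossible}. For this reason I do not expect any real obstacle here; once the correct product transformation is identified, the proof is cardinality bookkeeping together with already-established properties of generalized set transformations. The genuine difficulty in the overall argument for Theorem~\ref{thm:noMKKM-1Sys} will lie in the first reduction step, where an arbitrary $(m,k,k(m-1))$-system must be funnelled into the highly structured form handled by this lemma.
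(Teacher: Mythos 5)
Your proof is correct and follows essentially the same route as the paper: you apply the product generalized cardinality transformation $g(x_1,\ldots,x_k)=x_1\cdots x_k$ from Lemma~\ref{lem:productGCTF}, check that the image $G(\mathcal{H})$ has ground set of size $\equiv 1\pmod{p}$ and member sets of size $\equiv 0\pmod{p}$, inherit intersection-closedness and $1$-covering, and then invoke Lemma~\ref{lem:structuredM1SystemNotPossible} with $r=0$. This matches the paper's argument step for step.
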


\begin{proof}
Fix $m,k\in\mathbb{Z}_{\geq 0}$, a finite set $N$ and $S_1,\ldots,S_k\subseteq N$, and assume with the goal of deriving a contradiction that the system $\mathcal{H}$ specified in Lemma~\ref{lem:GCTFapplication} exists.
Consider the generalized cardinality transformation function $g(x_1,\ldots,x_k)=x_1\cdots x_k$, and let $G$ be a $g$-realizing set transformation of level $k$ for the ground set $N$ and the sets $S_1,\ldots,S_k$, whose existence is guaranteed by Lemma~\ref{lem:productGCTF}.

The lemma now follows by observing that $G(\mathcal{H})$ is a set system that satisfies all conditions of Lemma~\ref{lem:structuredM1SystemNotPossible} with $r=0$. Indeed, we see that the new ground set $G(N)$ has cardinality $|G(N)|=g(|S_1|,\ldots,|S_k|)=|S_1|\cdot\ldots\cdot |S_k|\equiv1\pmod{p}$ by the first assumption, settling property~\ref{item:impSystemCardN} in the assumptions of Lemma~\ref{lem:structuredM1SystemNotPossible}. On the other hand, every set in $G(\mathcal{H})$ is of the form $G(H)$ for some $H\in\mathcal{H}$, and has cardinality $|G(H)|=g(|H\cap S_1|,\ldots,|H\cap S_k|)=|H\cap S_1|\cdot\ldots\cdot|H\cap S_k|\equiv 0\pmod{p}$ because, by the second assumption, at least one of the factors vanishes $\bmod\ {p}$. This proves that $G(\mathcal{H})$ has property~\ref{item:impSystemCardH} of Lemma~\ref{lem:structuredM1SystemNotPossible}. Property~\ref{item:impSystemCovering} follows from the fact that $\mathcal{H}$ is $k$-covering and $g$ is of level $k$, hence $G(\mathcal{H})$ is still $1$-covering. Moreover, as the image of a non-empty intersection-closed set system, $G(\mathcal{H})$ is non-empty and intersection-closed, as well.
This shows that $G(\mathcal{H})$ fulfills all conditions of Lemma~\ref{lem:structuredM1SystemNotPossible}. Consequently, by the same lemma, we obtain the desired contradiction.
\end{proof}

\begin{proof}[Proof of Theorem~\ref{thm:noMKKM-1Sys}]
We assume for the sake of contradiction that for some prime power $m=p^\alpha$, there exists an $(m,k,k(m-1))$-system $\mathcal{H}$ with respect to $(S_1,\ldots,S_k)$ on $N$ for some finite ground set $N$ and subsets $S_i\subseteq N$ for $i\in [k]$.
If, for some $i\in [k]$, $|S_i|\not\equiv 0\pmod{m}$, let $r_i\in\{1,\ldots,m-1\}$ such that $|S_i|\equiv r_i \pmod{m}$. We introduce $m-r_i$ new elements and add them to $S_i$ and all sets in $\mathcal{H}$ to obtain a new $(m,k,k(m-1))$-system with $|S_i|\equiv 0\pmod{m}$. After doing so for all $i\in [k]$ with $|S_i|\not\equiv 0\pmod{m}$, we obtain a corresponding set system with $|S_i|\equiv 0\pmod{m}$ for all $i\in[k]$.

Let $g$ be the cardinality transformation function of level $m-1$ defined in \eqref{eq:gForPrimePowers}, and let $g'\colon \mathbb{Z}_{\geq0}\to\mathbb{Z}_{\geq0}$ be defined by $g'(x)=1+(p-1) g(x)$. By Lemma~\ref{lem:addCTF}, $g'$ is a cardinality transformation function of level $m-1$. Moreover, by Lemma~\ref{lem:gForPrimePowers}, we have
\begin{equation}\label{eq:g'mod}
g'(x) \equiv 1-g(x) \equiv \begin{cases}
1 \pmod{p} & \text{if } x\equiv 0 \pmod{m},\\
0 \pmod{p} & \text{if } x\not\equiv 0 \pmod{m}.
\end{cases}
\end{equation}
Let $G'$ be a $g'$-realizing set transformation function, and note that $G'(\mathcal{H})$ is an $(m,k,k)$-system with respect to $(G'(S_1),\ldots,G'(S_k))$ on $G'(N)$. 
To see this, we verify the properties in Definition~\ref{def:mkdSystem}. Note that $G'(\mathcal{H})$ is indeed closed under intersections because $\mathcal{H}$ is, and $G'$ preserves intersections. Furthermore, by \eqref{eq:g'mod}, we have 
\begin{equation}\label{eq:setsTo1}
|G'(S_i)|\equiv g'(|S_i|)\equiv 1\pmod{p}
\end{equation}
for all $i\in[k]$. Moreover, any set in $G'(\mathcal{H})$, which is of the form $G'(H)$ for some $H\in\mathcal{H}$, fulfills
\begin{equation}\label{eq:vecsNotTo1}
\begin{aligned}
(|G'(H)\cap G'(S_1)|,\ldots,|G'(H)\cap G'(S_k)|) &= (|G'(H\cap S_1)|,\ldots,|G'(H\cap S_k)|)\\
&= (g'(|H\cap S_1|),\ldots,g'(|H\cap S_k|))\\
&\not\equiv (1,\ldots,1)\enspace, &\pmod{p}
\end{aligned}
\end{equation}
which follows from \eqref{eq:g'mod} and the assumption $(|H\cap S_1|,\ldots,|H\cap S_k|)\not\equiv (0,\ldots,0)\pmod{m}$. Together,~\eqref{eq:setsTo1} and~\eqref{eq:vecsNotTo1} imply property~\ref{item:MKDDiffParity} in Definition~\ref{def:mkdSystem}. Finally, observe that $G'(\mathcal{H})$ is $k$-covering because $\mathcal{H}$ is $k(m-1)$-covering and $G'$ is of level $m-1$. Hence, $G'(\mathcal{H})$ is indeed an $(m,k,k)$-system. 
Together with~\eqref{eq:setsTo1} and~\eqref{eq:vecsNotTo1}, we see that $G'(\mathcal{H})$ fulfills all conditions of Lemma~\ref{lem:GCTFapplication}, and hence we obtain the desired contradiction.
\end{proof}

\section{Barriers for extensions beyond prime powers}
\label{sec:barriers}

In this section, we reveal limits of our techniques by showing that they cannot extend beyond prime power moduli. This points to an interesting structural difference for $m$ being a prime power versus $m$ having at least two different prime factors, and opens up the question whether~\eqref{eq:CCSM} or~\eqref{eq:GCCSM} may be substantially harder for $m$ not being a prime power. This may also shed further light on the complexity of ILPs with a constraint matrix containing subdeterminants that are not prime powers.

When proving correctness of \refEnumD for \eqref{eq:CCSM} and \eqref{eq:GCCSM}, a crucial step that requires the restriction to prime power moduli $m$ is Lemma~\ref{lem:gForPrimePowers}, where we prove that a suitable transformation function $g\colon\mathbb{Z}_{\geq0}\to\mathbb{Z}_{\geq0}$ has the property
\begin{equation}\label{eq:crucialProperty}
g(x) \equiv \begin{cases}
0 \pmod{p} & \text{if } x\equiv 0 \pmod{m},\\
1 \pmod{p} & \text{if } x\not\equiv 0 \pmod{m}
\end{cases}
\end{equation}
for some prime number $p$. The following two theorems present strong implications that result from imposing the above condition with composite moduli $m$.

\begin{theorem}\label{thm:barriersSuperpolyBound}
Let $m,p,r\in\mathbb{Z}_{>0}$, where $p$ is prime, and $m$ is a composite number with $r$ different prime factors. There is a constant $c=c(m)>0$ such that for every cardinality transformation function $g\colon\mathbb{Z}_{\geq 0}\to\mathbb{Z}_{\geq 0}$ fulfilling~\eqref{eq:crucialProperty} with respect to $p$ and $m$, there is a constant $\kappa\in\mathbb{Z}_{\geq 0}$ with the property that for all $n\in\mathbb{Z}$ with $n\geq\kappa$,
\begin{equation}\label{eq:superpolyUpperBoundCTF}
g(n) \geq n^{c\cdot\left(\frac{\log n}{\log\log n}\right)^{r-1}}
\enspace.
\end{equation}
\end{theorem}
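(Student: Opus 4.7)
The plan is to express the cardinality transformation function $g$ through its Newton series and translate the congruency condition~\eqref{eq:crucialProperty} into a statement about coefficients modulo~$p$. By Möbius inversion on the subset lattice of a ground set $N$ (the same tool that underlies Section~\ref{subsec:setTransformations}), every cardinality transformation function admits a unique representation
\[
g(n)\;=\;\sum_{k=0}^{n} a_k\binom{n}{k}, \qquad a_k=\Delta^k g(0)\in\mathbb{Z}_{\geq 0},
\]
where non-negativity of $a_k$ is forced because $a_k$ counts the elements $w\in W$ of any $g$-realizing set transformation $G\colon 2^N\to 2^W$ whose canonical minimal preimage $\bigcap\{S\colon w\in G(S)\}$ has cardinality exactly~$k$. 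Letting $\chi$ be the function that equals~$0$ on multiples of~$m$ and~$1$ elsewhere, and setting $\bar a_k:=\Delta^k\chi(0)\bmod p\in\mathbb{F}_p$, condition~\eqref{eq:crucialProperty} forces $a_k\equiv\bar a_k\pmod{p}$, hence $a_k\geq 1$ whenever $\bar a_k\neq 0$. This yields the basic pointwise bound $g(n)\geq\sum_{k\leq n,\,\bar a_k\neq 0}\binom{n}{k}$.

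The core of the argument is then an analysis of the support $K:=\{k\geq 1:\bar a_k\neq 0\}$ via the rational generating function
\[
\sum_{k\geq 0}\bar a_k\,y^k\;=\;\frac{y\bigl((1+y)^{m-1}-y^{m-1}\bigr)}{(1+y)^m-y^m}\;\in\;\mathbb{F}_p(y),
\]
obtained from the standard umbral change of variables $x=y/(1+y)$ applied to $\sum_n\chi(n)x^n$. Writing $m=p^{\alpha}m'$ with $\gcd(m',p)=1$, the freshman's dream gives $(1+y)^m-y^m=\bigl((1+y)^{m'}-y^{m'}\bigr)^{p^{\alpha}}$, and the hypothesis $r\geq 2$ ensures $m'>1$, so the denominator is a non-trivial polynomial whose $m'-1$ distinct non-zero roots in $\overline{\mathbb{F}}_p$ control the sequence. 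Consequently $(\bar a_k)$ is a non-zero linear recurrent sequence of order $m-1$ over $\mathbb{F}_p$, and $K$ is infinite.

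The quantitative lower bound is then obtained by induction on~$r$. The base case $r=1$ corresponds to prime powers and is matched by the explicit polynomial construction of Lemma~\ref{lem:gForPrimePowers}. For the inductive step, from $g$ and a chosen prime factor $p_r$ of~$m$ one extracts an auxiliary cardinality transformation function (say, by tracking $g$ along the arithmetic progression of multiples of~$p_r^{v_{p_r}(m)}$) satisfying~\eqref{eq:crucialProperty} with the reduced modulus $m/p_r^{v_{p_r}(m)}$ having $r-1$ distinct prime factors; applying the induction hypothesis to this derived function yields an exponent of $(\log n'/\log\log n')^{r-2}$ for an appropriate input size~$n'$. Pulling this estimate back to the original~$g$ and combining with the elementary growth $\binom{n}{k}\geq(n/k)^k$ should contribute exactly one additional factor of $\log n/\log\log n$ in the exponent, producing the claimed bound $n^{c(\log n/\log\log n)^{r-1}}$.

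The main obstacle is executing this inductive reduction while preserving the structural features required by the theorem. One must construct the auxiliary cardinality transformation function so that non-negativity of its Newton coefficients is preserved and the correct congruence is encoded for the smaller modulus, and one must rule out catastrophic cancellations among the $m'-1$ simple roots of $(1+y)^{m'}-y^{m'}$ and their $p^{\alpha}$-fold multiplicities that could otherwise collapse the support~$K$ to an arithmetic progression too sparse to sustain the claimed bound. Controlling these cancellations requires bounds on the multiplicative orders of the roots in $\overline{\mathbb{F}}_p^{\times}$ together with an inclusion-exclusion over divisors of~$m$, extending the Fermat-type reasoning already used in Section~\ref{sec:noBadSetSystem}.
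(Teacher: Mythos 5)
Your proposal takes a genuinely different route. The paper applies a $g$-realizing set transformation to a Grolmusz set system (Theorem~\ref{thm:Grolmusz}), invokes the Frankl--Wilson bound (Theorem~\ref{thm:FranklWilson}) with a single forbidden residue to get $g(n)\geq|\mathcal{H}_\ell|\geq|\mathcal{H}|/n$, and then plugs in Grolmusz's lower bound on $|\mathcal{H}|$. You instead work directly with the Newton coefficients $a_k=\Delta^k g(0)$ of $g$. Your central observation---that $a_k\geq 0$ for every cardinality transformation function---is correct: for $|N|\geq k$ and any fixed $k$-set $T\subseteq N$, inclusion--exclusion gives $\Delta^k g(0)=\sum_{T'\subseteq T}(-1)^{k-|T'|}|G(T')|=|\{w\in W: S_w=T\}|\geq0$, where $S_w=\bigcap\{S:w\in G(S)\}$ is well defined by the intersection-preserving property (your description should read ``equals a fixed $k$-set,'' since $|\{w:|S_w|=k\}|$ is $\binom{n}{k}a_k$, not $a_k$). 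The $\mathbb{F}_p$-generating-function identity, the factorization $(1+y)^m-y^m=\bigl((1+y)^{m'}-y^{m'}\bigr)^{p^\alpha}$, and the observation that the rational function has genuine poles---so $K=\{k:\bar a_k\neq0\}$ is infinite for $r\geq2$---are all correct as well.

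The gap is in the quantitative step, and in fact you are underselling your own argument. The induction on $r$ is both unnecessary and unjustified as sketched (the ``arithmetic progression'' extraction and the ``one additional factor'' claim are never derived), and the worry about ``catastrophic cancellations'' is moot: you already showed the numerator does not vanish at any root $y_0=1/(\zeta-1)$ of $(1+y)^{m'}-y^{m'}$, since it evaluates to $y_0^m(\zeta^{-1}-1)\neq0$ there. What you are missing is this: $(\bar a_k)_{k\geq0}$ satisfies a linear recurrence of order at most $m-1$ over the finite field $\mathbb{F}_p$, so it is eventually periodic, and since it is not eventually zero, the eventual period contains a nonzero entry, meaning $K$ has bounded gaps. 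Plugging a $k\in K$ near $n/2$ into your pointwise bound $g(n)\geq\sum_{k\in K,\,k\leq n}\binom{n}{k}$ gives $g(n)\geq\binom{n}{\lfloor n/2\rfloor-O(1)}=2^{n(1-o(1))}$, an exponential lower bound that strictly dominates the theorem's quasi-polynomial rate. Executed correctly, your approach proves more than the paper claims; the rate $n^{c(\log n/\log\log n)^{r-1}}$ in the statement simply reflects what drops out of Grolmusz's construction, and the paper does not assert it is tight.
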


Notice that the cardinality transformation functions that we used, which are all of the form described by Corollary~\ref{cor:polyBinomCTF}, have the property that for any constant level, they are polynomially bounded. Hence, the above theorem implies that for an extension beyond prime power moduli based on the cardinality transformation functions we introduced, we would need a superconstant level. As our algorithmic approach relies on Theorems~\ref{thm:targetCTF} and~\ref{thm:EnumDGoodIfNoBadSys}, this would in turn imply that the corresponding enumeration procedure has superconstant depth, prohibiting our algorithm to be efficient.

However, the above theorem does not exclude that there may be other cardinality transformation functions, not covered by Corollary~\ref{cor:polyBinomCTF}, that have constant level and fulfill~\eqref{eq:crucialProperty}. The next theorem rules out this possibility.
More precisely, the next theorem shows that no cardinality transformation function with property~\eqref{eq:crucialProperty} exists even if the level is allowed to depend on $n$, i.e., the size of the ground set, and grows moderately in terms of $n$.
To capture this setting in the following, we allow the level $\ell$ of a cardinality transformation function $g$ to be a function $\ell\colon\mathbb{Z}_{\geq 0} \rightarrow \mathbb{Z}_{\geq 0}$, with the semantics that on any ground set of cardinality $n$, there is a $g$-realizing set transformation of level $\ell(n)$.
To emphasize this difference to our original definition of level, which did not depend on $n$, we will also talk about \emph{generalized level}.

\begin{theorem}\label{thm:barriersLargeLevel}
Let $m,p,r\in\mathbb{Z}_{>0}$ such that $p$ is prime, and $m$ is a composite number with $r$ different prime factors. Every cardinality transformation function $g\colon\mathbb{Z}_{\geq 0}\to\mathbb{Z}_{>0}$ fulfilling~\eqref{eq:crucialProperty} with respect to $p$ and $m$ has a generalized level $\ell$ that satisfies
\begin{equation*}\label{eq:superconstUpperBoundLevel}
\ell = \Omega\left( \left(\frac{\log n}{\log \log n}\right)^{r-1} \right)\enspace.
\end{equation*}
\end{theorem}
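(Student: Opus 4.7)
The plan is to derive the lower bound on $\ell$ by combining Theorem~\ref{thm:barriersSuperpolyBound} with an upper bound on $g(n)$ that follows from the definition of generalized level. The starting point is that any $g$-realizing set transformation $G\colon 2^N\to 2^W$ is monotone: if $S\subseteq T$, then $G(S)=G(S\cap T)=G(S)\cap G(T)\subseteq G(T)$, and hence $g$ itself is monotone. If $|N|=n$ and $G$ has level $\ell(n)$, then every $w\in W$ lies in some $G(S)$ with $|S|\leq \ell(n)$, so
\[
g(n)=|W|\leq \sum_{\substack{S\subseteq N\\|S|\leq \ell(n)}} |G(S)| \leq g(\ell(n))\binom{n}{\leq\ell(n)} \leq g(\ell(n))\cdot n^{\ell(n)+1}\,,
\]
which yields $\log g(n)\leq \log g(\ell(n)) + (\ell(n)+1)\log n$.

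Write $f(n)=\left(\log n/\log\log n\right)^{r-1}$ and let $c=c(m)>0$ be the constant from Theorem~\ref{thm:barriersSuperpolyBound}. To argue by contradiction, I would fix any $\epsilon<c/3$ and suppose $\ell(n)\leq \epsilon f(n)$ for all $n$ above some threshold $N_0$. The goal is then to show by strong induction on $n$ that $\log g(n)\leq 3\epsilon\cdot f(n)\log n$ for all sufficiently large $n$; combined with the lower bound $\log g(n)\geq c\cdot f(n)\log n$ of Theorem~\ref{thm:barriersSuperpolyBound}, this forces $3\epsilon\geq c$ and contradicts the choice of $\epsilon$.

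The inductive step is driven by the key observation that the assumption $\ell(n)\leq \epsilon f(n)$ makes $\ell(n)$ polylogarithmic in $n$, so $\log \ell(n)\leq (r-1)\log\log n+O(1)$, and consequently
\[
f(\ell(n))\log \ell(n) = O\!\left((\log\log n)^r\right) = o\bigl(f(n)\log n\bigr)\,.
\]
Plugged into the key inequality, this says that $\log g(\ell(n))$ — bounded either by the induction hypothesis when $\ell(n)$ is large enough, or by an absolute constant otherwise — is only $o(f(n)\log n)$. The remaining contribution is $(\ell(n)+1)\log n\leq 2\epsilon f(n)\log n$ for $n$ sufficiently large, and the two add up to at most $3\epsilon f(n)\log n$, closing the induction.

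The single delicate point is the asymptotic estimate $f(\ell(n))\log \ell(n)=o(f(n)\log n)$, which is where the polylogarithmic size of the level is decisively exploited; everything else amounts to careful bookkeeping of base cases and thresholds. In particular, one has to verify that the induction hypothesis can be applied at $\ell(n)$ even when $\ell(n)$ drops below $N_0$ or the threshold $\kappa$ of Theorem~\ref{thm:barriersSuperpolyBound}, but this is handled by absorbing all arguments below the thresholds into an enlarged base case, using only the finiteness of $\log g$ on any finite initial segment together with $f(n)\log n\to \infty$.
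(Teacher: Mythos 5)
Your approach is genuinely different from the paper's, and the key ingredients you identify are correct: $G$ (and hence $g$) is monotone, the covering argument gives $g(n)\leq g(\ell(n))\cdot n^{\ell(n)+1}$, and the asymptotic estimate $f(\ell(n))\log\ell(n)=o(f(n)\log n)$ does hold when $\ell(n)$ is polylogarithmic. The paper instead proves this theorem from scratch via a new ingredient, Lemma~\ref{lem:atoms}, which bounds the number of atoms of $G(2^N)$ by $1+\ell n^\ell$; this lets them shrink the transformed Grolmusz system down to a ground set of size $p(1+\ell n^\ell)$ and apply Frankl--Wilson, giving an inequality $1+\ell n^\ell\geq n^{\Omega(f(n))}$ \emph{for each fixed $n$}. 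Your route of reusing Theorem~\ref{thm:barriersSuperpolyBound} as a black box is more modular and avoids atoms entirely, which is an attractive feature.

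However, there is a real gap in the quantifier structure of your contradiction argument. You negate the wrong statement: you assume $\ell(n)\leq\epsilon f(n)$ \emph{for all} $n\geq N_0$, whereas the negation of the Knuth-style $\ell=\Omega(f)$ claim is only that for every $\epsilon$ and every $N_0$ there exists \emph{some} $n\geq N_0$ with $\ell(n)<\epsilon f(n)$, i.e.\ $\liminf_n\ell(n)/f(n)=0$. Your argument therefore proves $\limsup_n\ell(n)/f(n)\geq c/3$ (the ``infinitely often'' or Hardy--Littlewood $\Omega$), not $\liminf_n\ell(n)/f(n)>0$. The reason you cannot simply run the argument at a single $n$ is structural: to bound $\log g(\ell(n))$ you must know $\ell(m)\leq\epsilon f(m)$ at the next point $m=\ell(n)$ of the chain (and recursively), because without the level hypothesis there is no a priori upper bound on $g(m)$---indeed $g(m)$ can be superpolynomial in $m$ by Theorem~\ref{thm:barriersSuperpolyBound}, and monotonicity of the minimal level only gives $\ell_{\min}(m)\leq\ell_{\min}(n)\approx m$, which is the trivial bound. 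In contrast, the paper's atom lemma bounds the effective ground set of $G(\mathcal{H})$ in terms of $\ell(n)$ and $n$ alone, with no reference to $g$ or $\ell$ at smaller arguments, and so produces the per-$n$ lower bound that the $\Omega$ asserts. To close your proof you would need some additional device playing the role of Lemma~\ref{lem:atoms}.
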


We highlight that in the above $\Omega$-notation, $m$ and $p$ are considered to be constant.
The barriers highlighted by the above theorems originate from combinatorial results on set systems with restricted intersections. On the one hand, we have the following classical result by Frankl and Wilson.

\begin{theorem}[Frankl and Wilson~\cite{frankl_1981}]\label{thm:FranklWilson}
Let $p$ be a prime number, let $s\in [p-1]$, and let $\mu_0,\ldots,\mu_s\in\{0,\ldots,p-1\}$ be distinct numbers. Let $\mathcal{H}$ be a set system on a ground set of $n$ elements such that for some $k\in \mathbb{Z}_{\geq 0}$ with $k\equiv \mu_0 \pmod{p}$,
\begin{enumerate}[label=(\roman*),itemsep=-0.2em,topsep=0.2em]
\item $\mathcal{H}$ is a $k$-uniform set system, i.e., $|H|=k$ for all $H\in\mathcal{H}$, and
\item for all distinct $H_1,H_2\in\mathcal{H}$, we have $|H_1\cap H_2| \equiv \mu_i$ for some $i\in[s]$.
\end{enumerate}
Then, $|\mathcal{H}|\leq\binom{n}{s}$.
\end{theorem}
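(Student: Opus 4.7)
The plan is the polynomial (linear-algebra) method over $\mathbb{F}_p$. To each $H \in \mathcal{H}$ I associate its characteristic vector $v_H \in \{0,1\}^n \subseteq \mathbb{F}_p^n$, and attach the polynomial
\[
P_H(x_1, \ldots, x_n) \;=\; \prod_{i=1}^{s} \left( \sum_{j \in H} x_j - \mu_i \right) \;\in\; \mathbb{F}_p[x_1, \ldots, x_n].
\]
The overall goal is to show that $\{P_H\}_{H \in \mathcal{H}}$, when evaluated on the characteristic vectors, is linearly independent over $\mathbb{F}_p$ and lives inside a vector space of dimension $\binom{n}{s}$.

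The first step is a standard triangular evaluation argument. On the diagonal, $P_H(v_H) = \prod_{i=1}^{s}(k - \mu_i) \equiv \prod_{i=1}^{s}(\mu_0 - \mu_i) \pmod p$, which is nonzero in $\mathbb{F}_p$ because $\mu_0, \mu_1, \ldots, \mu_s$ are pairwise distinct modulo $p$. Off the diagonal, for distinct $H, H' \in \mathcal{H}$ the hypothesis gives $|H \cap H'| \equiv \mu_i \pmod p$ for some $i \in [s]$, so that factor of $P_H(v_{H'})$ vanishes. Ordering $\mathcal{H}$ arbitrarily, the matrix $\bigl(P_H(v_{H'})\bigr)_{H, H'}$ is therefore diagonal with nonzero diagonal entries, immediately yielding linear independence of $\{P_H\}_{H \in \mathcal{H}}$ as $\mathbb{F}_p$-valued functions on $\{v_{H'} : H' \in \mathcal{H}\}$.

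The second step is the dimension bound. Since all evaluations happen on $\{0,1\}^n$, I multilinearize each $P_H$ via $x_j^2 \mapsto x_j$, landing in the space of multilinear polynomials of degree at most $s$, which naively has dimension $\sum_{i=0}^{s}\binom{n}{i}$. To sharpen this to $\binom{n}{s}$, I exploit $k$-uniformity: every $v_{H'}$ satisfies the affine relation $\sum_j x_j = k$, so I work modulo the ideal $(x_j^2 - x_j,\ \sum_j x_j - k)$. The key identity is $\binom{k-d}{s-d}\prod_{j \in T} x_j \equiv \sum_{T' \supseteq T,\ |T'| = s}\prod_{j \in T'} x_j$ on $k$-uniform $\{0,1\}$-vectors; used repeatedly, it rewrites every monomial of degree $d < s$ as a combination of degree-$s$ monomials, provided $\binom{k-d}{s-d}$ is invertible modulo $p$. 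This collapses each multilinear $\widetilde P_H$ into the $\binom{n}{s}$-dimensional space spanned by the degree-exactly-$s$ multilinear monomials, and combined with the linear independence from the previous step gives $|\mathcal{H}| \le \binom{n}{s}$.

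The main obstacle is the degenerate case in which some $\binom{k-d}{s-d}$ vanishes modulo $p$ (a genuine possibility by Lucas' theorem), which blocks the monomial reduction above. I would handle this via a preparatory adjustment --- for instance, adjoining a common block of dummy elements to every $H \in \mathcal{H}$, which shifts $k$ and each $\mu_i$ by the same quantity modulo $p$ while preserving both uniformity and the intersection hypothesis --- chosen so that all required binomial coefficients become invertible mod $p$. Aside from this technicality, every other ingredient (constructing $P_H$, computing the two evaluations, and the linear-algebra bookkeeping) is routine.
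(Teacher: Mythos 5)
Note first that the paper does not prove this theorem --- it is cited from Frankl and Wilson --- so your reconstruction has to be judged on its own. Your Step 1 (the diagonal evaluation argument giving linear independence of the $P_H$ as functions on the characteristic vectors) is correct and is the standard opening of the polynomial method; after multilinearization it already yields the weaker bound $\sum_{i\le s}\binom{n}{i}$. The identity $\binom{k-d}{s-d}\prod_{j\in T}x_j \equiv \sum_{T'\supseteq T,\,|T'|=s}\prod_{j\in T'}x_j$ on weight-$k$ vectors is also correct. The problem is in the degenerate case, which you correctly flag but do not actually resolve. The obstruction is real (e.g.\ $p=3$, $s=2$, $k\equiv 0 \pmod 3$ gives $\binom{k}{2}\equiv 0$, and more generally it fails whenever $\mu_0\in\{0,\dots,s-1\}$), and the proposed fix --- adjoining a common block $D$ of $t$ dummy elements to every $H$ --- does not deliver the stated bound. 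Adding $t$ dummies shifts $k$ and the $\mu_i$ as you say, but it also enlarges the ground set from $n$ to $n+t$, so your argument then proves only $|\mathcal{H}|\le\binom{n+t}{s}$, which is strictly weaker than $\binom{n}{s}$ (and since $\mu_0<s$ forces $t\ge 1$, this is a genuine loss). You also cannot restrict attention to degree-$s$ monomials avoiding $D$: applying the reduction identity on the enlarged ground set produces degree-$s$ monomials meeting $D$, which on the vectors $v_H$ (all equal to $1$ on $D$) collapse back to lower-degree monomials over the original ground set, so the intended collapse to a $\binom{n}{s}$-dimensional span does not close.

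The standard way to get $\binom{n}{s}$ without any invertibility hypothesis is to enlarge the linearly independent family instead of shrinking the ambient space: for each $I\subseteq[n]$ with $|I|\le s-1$, take the multilinearization $\tilde Q_I$ of $\bigl(\sum_{j}x_j-k\bigr)\prod_{j\in I}x_j$, a multilinear polynomial of degree at most $s$ that vanishes on every $v_H$. One then shows that $\{\tilde P_H\}_{H\in\mathcal{H}}\cup\{\tilde Q_I\}_{|I|\le s-1}$ is linearly independent in the space of multilinear polynomials of degree at most $s$ over $\mathbb{F}_p$ (your diagonal argument already kills the $\alpha_H$ coefficients; what remains is to check the independence of the $\tilde Q_I$). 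Since that space has dimension $\sum_{i=0}^{s}\binom{n}{i}$, this gives $|\mathcal{H}|+\sum_{i=0}^{s-1}\binom{n}{i}\le\sum_{i=0}^{s}\binom{n}{i}$, i.e.\ $|\mathcal{H}|\le\binom{n}{s}$, with no condition on $\binom{k-d}{s-d}$. So the missing ingredient is precisely this auxiliary family $\{\tilde Q_I\}$ and its independence argument, not merely a ``preparatory adjustment'' of $k$.
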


While the above theorem shows that restricting the cardinalities of intersections modulo a prime number reduces the size of the set system to a polynomial in the size of the ground set, the situation changes if the prime modulus is replaced by a composite number. This surprising fact was observed by Grolmusz, who proved the following theorem.

\begin{theorem}[Grolmusz~\cite{grolmusz_2000}]\label{thm:Grolmusz}
Let $m\in\mathbb{Z}_{\geq 0}$ be a composite number with $r>1$ different prime divisors. Then, there is a constant $c_0=c_0(m)>0$ with the property that for every $n\in\mathbb{Z}_{> 0}$, there exists a set system $\mathcal{H}$ on a ground set of $n$ elements such that
\begin{enumerate}[label=(\roman*),itemsep=-0.2em,topsep=0.2em]
\item $|\mathcal{H}|\geq n^{c_0\cdot\left(\frac{\log n}{\log\log n}\right)^{r-1}}$,
\item for all $H\in\mathcal{H}$, we have $|H|\equiv 0\pmod{m}$, and
\item for all distinct $H_1,H_2\in\mathcal{H}$, we have $|H_1\cap H_2| \not\equiv 0\pmod{m}$. 
\end{enumerate}
\end{theorem}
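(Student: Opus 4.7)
The plan is to establish Theorem~\ref{thm:Grolmusz} through the polynomial method in its Barrington--Beigel--Rudich (BBR) incarnation: first construct a low-degree multilinear polynomial over $\mathbb{Z}/m\mathbb{Z}$ whose vanishing pattern is unusually rich precisely because $m$ has two or more distinct prime factors, then turn that polynomial into a set system in a mechanical way, and finally count. The algebraic core is to produce a multilinear polynomial $P \in \mathbb{Z}[x_1,\ldots,x_N]$ of low degree $d$ together with a ``target family'' $\mathcal{A}\subseteq\{0,1\}^N$ of size $2^{\Omega(N)}$ such that $P(x) \equiv 0 \pmod{m}$ for every $x\in \mathcal{A}$, while $P(x\wedge y)\not\equiv 0\pmod{m}$ for every pair of distinct $x,y\in\mathcal{A}$, where $x\wedge y$ denotes coordinate-wise AND. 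A natural choice of $\mathcal{A}$ is the set of vectors of a fixed Hamming weight chosen so that $|\mathrm{supp}(x\wedge y)|$ lies in a controlled range for $x\neq y$; $P$ is then assembled via the Chinese Remainder Theorem from symmetric polynomials of degree $O(N^{1/r})$, one per prime-power factor of $m$, each tuned so that their joint vanishing modulo $m$ singles out exactly the intended weight strata. The crucial quantitative gain from $r\geq 2$ distinct prime factors is that one does not need a single polynomial of degree $\Omega(N)$, as would be forced by Frankl--Wilson-type lower bounds modulo a single prime, but can split the degree cost across the $r$ factors, obtaining total degree $d=O((\log N)^{r-1})$ up to polylogarithmic factors via the refined BBR construction.

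Given such a polynomial $P$, the conversion to a set system is routine. Write $P(x)\equiv\sum_{S\subseteq [N],\,|S|\leq d} c_S\prod_{i\in S}x_i \pmod{m}$ with coefficients $c_S\in\{0,1,\ldots,m-1\}$, and take as ground set $V$ a multiset containing $c_S$ copies of a label corresponding to each monomial $S$; the resulting ground set has size $n\leq m\binom{N}{\leq d}$. For every $x\in\mathcal{A}$ let $H_x \coloneqq \{(S,j)\in V : S\subseteq\mathrm{supp}(x)\}$. Using the identity $\prod_{i\in S}x_i=[S\subseteq\mathrm{supp}(x)]$, one checks directly that
\begin{equation*}
|H_x|\equiv P(x)\pmod{m},\qquad |H_x\cap H_y|\equiv P(x\wedge y)\pmod{m}\enspace,
\end{equation*}
so the two polynomial conditions on $P$ translate exactly into $|H_x|\equiv 0$ and $|H_x\cap H_y|\not\equiv 0\pmod{m}$, which are the set-system conditions required by the theorem. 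For the size bound, $|\mathcal{H}|=|\mathcal{A}|=2^{\Omega(N)}$ while $n=N^{O(d)}$; expressing $N$ in terms of $n$ through $\log n=O(d\log N)$ and substituting the BBR degree bound yields the claimed $|\mathcal{H}|\geq n^{c_0(\log n/\log\log n)^{r-1}}$ for a suitable $c_0=c_0(m)>0$.

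The main obstacle is the polynomial construction itself: finding a multilinear $P$ of degree $O((\log N)^{r-1})$ whose vanishing locus modulo $m$ covers a family $\mathcal{A}$ of size $2^{\Omega(N)}$ yet avoids all the pairwise meets $\{x\wedge y : x\neq y\in\mathcal{A}\}$. This is the technical heart of the BBR/Grolmusz existence result, and it rests on a careful interplay between the Chinese Remainder Theorem and properties of binomial-coefficient polynomials modulo prime powers, in the same spirit as Lemma~\ref{lem:binomMod}. Once the polynomial is in place, the translation to a set system and the final counting are essentially mechanical, and the polynomial-vs.-super-polynomial dichotomy between Theorem~\ref{thm:FranklWilson} and Theorem~\ref{thm:Grolmusz} is recovered from the fact that the $r\geq 2$ CRT-amalgamation is precisely what defeats the Frankl--Wilson bound.
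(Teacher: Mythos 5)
The paper itself does not prove Theorem~\ref{thm:Grolmusz}; it cites it as a black box from Grolmusz~\cite{grolmusz_2000}, so there is no in-paper proof to compare against. Your sketch does capture the high-level scheme Grolmusz uses: a BBR-style low-degree multilinear polynomial $P$ modulo $m$, converted into a set system by letting ground-set elements be (monomial, copy-index) pairs with multiplicity equal to the coefficient, so that $|H_x|\equiv P(x)$ and $|H_x\cap H_y|\equiv P(x\wedge y)\pmod m$. That translation step is correct and mechanical, as you say.

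There are, however, two genuine gaps in the parameter accounting. First, the degree claim is internally inconsistent and does not match BBR. You say the per-prime-power symmetric polynomials have degree $O(N^{1/r})$ and that CRT-combining them gives total degree $O((\log N)^{r-1})$. Combining degree-$O(N^{1/r})$ pieces by the Chinese Remainder Theorem still gives degree $O(N^{1/r})$; the quantity $(\log N)^{r-1}$ is not the degree of the construction (it resembles BBR's \emph{lower} bound for weak modular degree, and it reappears in the final Grolmusz exponent, which is likely the source of confusion). Second, and more seriously, taking $\mathcal{A}\subseteq\{0,1\}^N$ with $|\mathcal{A}|=2^{\Omega(N)}$ is not large enough. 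With the correct BBR degree $d=\Theta(N^{1/r})$ on $N$ variables, the ground set has size $n\approx\binom{N}{\leq d}$, hence $\log n=\Theta(N^{1/r}\log N)$ and $\log\log n=\Theta(\log N)$, which gives $n^{(\log n/\log\log n)^{r-1}}=\exp\bigl(\Theta(N\log N)\bigr)$. Your family only delivers $|\mathcal{H}|=2^{\Omega(N)}=\exp(\Omega(N))$, which is off by a $\log N$ factor in the exponent; equivalently, it would only yield a bound of the form $n^{c(\log n)^{r-1}/(\log\log n)^{r}}$, weaker than the stated theorem by a $\log\log n$ factor in the denominator. Grolmusz's actual construction closes this gap by indexing the sets not by bit-strings in $\{0,1\}^N$ but by a richer family of size $N^{\Omega(N)}$ (encode maps $[N]\to[N]$ as $0/1$ matrices with one $1$ per row), and by pre-composing the BBR polynomial $Q$ on $N$ "row" variables with the linear row-sums $z_i=\sum_j x_{ij}$, which keeps the degree in the matrix variables at $O(N^{1/r})$ while blowing up the index set to $N^N$. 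Without this extra layer the counting does not reach the claimed exponent, so the sketch as written does not close.
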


The value of the constant $c_0$ in the above theorem equals roughly $p_r^{-r}$, where $p_r$ is the largest prime divisor of $m$ \cite{grolmusz_2000}, and the constant $c$ in Theorem~\ref{thm:barriersSuperpolyBound} depends on $c_0$. We actually show that $c<c_0$ is a feasible choice. 
The proofs of Theorems~\ref{thm:barriersSuperpolyBound} and~\ref{thm:barriersLargeLevel} follow a common idea. In both, we assume existence of the respective transformation functions, and then use these functions to transform a set system of the type given by Theorem~\ref{thm:Grolmusz} to a new set system. Adjusting the new set systems so that they fulfill the assumptions of Theorem~\ref{thm:FranklWilson} gives an upper bound on their size, and combining these bounds with the lower bound coming from Theorem~\ref{thm:Grolmusz} allows for deducing the results.

\begin{proof}[Proof of Theorem~\ref{thm:barriersSuperpolyBound}]
We show that for every composite number $m$, we can choose any constant $c<c_0$, where $c_0$ is the corresponding constant guaranteed by Theorem~\ref{thm:Grolmusz}.
Let $m$ be a composite number, and let $g\colon\mathbb{Z}_{\geq0}\to\mathbb{Z}_{\geq0}$ be a cardinality transformation function fulfilling property~\eqref{eq:crucialProperty} with respect to the prime number $p$ and $m$.
For $n\in\mathbb{Z}$, let $\mathcal{H}$ be a set system on a ground set of size $n$ fulfilling the properties listed in Theorem~\ref{thm:Grolmusz} with respect to the composite number $m$. The set system $\mathcal{H}$ is not necessarily a uniform set system, but it contains a large uniform subsystem. To see this, define $\mathcal{H}_i=\{H\in\mathcal{H} \mid |H|=i\}$ for $i\in[n]$ and let $\ell\in\argmax_{i\in[n]} |\mathcal{H}_i|$. Then $\mathcal{H}_\ell$ is an $\ell$-uniform set system with $|\mathcal{H}_\ell|\geqslant\frac1n|\mathcal{H}|$.

Let $G$ be a $g$-realizing set system transformation function and consider the set system $G(\mathcal{H}_\ell)$. We claim that $G(\mathcal{H}_\ell)$ is a set system on a ground set of size $g(n)$ that fulfills the assumptions of Theorem~\ref{thm:FranklWilson}. 
Obviously, $G(\mathcal{H}_\ell)$ is a uniform system, as for all $H\in\mathcal{H}_\ell$, we have $|H|=\ell$ and hence $|G(H)|=g(|H|)=g(\ell)$, so the set system is $g(\ell)$-uniform. Moreover, as by assumption, $|H|\equiv0\pmod{m}$, property~\eqref{eq:crucialProperty} implies $g(\ell)=g(|H|)\equiv 0\pmod{p}$. Note that for any two distinct sets $H_1,H_2\in\mathcal{H}$, we have
\begin{equation}\label{eq:intersectionInImage}
|G(H_1)\cap G(H_2)| = |G(H_1\cap H_2)| = g(|H_1\cap H_2|) \equiv 1 \pmod{p} \enspace,
\end{equation}
where we used the assumption that $|H_1\cap H_2|\not\equiv 0 \pmod{m}$ for all distinct $H_1,H_2\in\mathcal{H}$, and property~\eqref{eq:crucialProperty}. Hence, $G(\mathcal{H}_\ell)$ fulfills the conditions of Theorem~\ref{thm:FranklWilson} with $s=1$, $\mu_0=0$, $\mu_1=1$, and $k=g(\ell)$. As $\mathcal{H}_\ell$ is a system on a ground set of size $n$, $G(\mathcal{H}_\ell)$ is one on a ground set of size $g(n)$. By Theorem~\ref{thm:FranklWilson}, we thus obtain the upper bound $|G(\mathcal{H}_\ell)| \leq g(n)$.

Note that if in \eqref{eq:intersectionInImage}, $G(H_1)$ and $G(H_2)$ are not distinct, then $|G(H_1)\cap G(H_2)|=|G(H_1)|=g(|H_1|)\equiv 0\pmod{p}$. This contradicts \eqref{eq:intersectionInImage}, hence $H_1=H_2$ whenever $G(H_1)=G(H_2)$, so $G$ is injective when restricting its domain to $\mathcal{H}$. Injectivity of $G$ on $\mathcal{H}$ implies
\begin{equation*}
|G(\mathcal{H}_\ell)|=|\mathcal{H}_\ell| \geq \frac{|\mathcal{H}|}{n} \geq n^{c_0\cdot\left(\frac{\log n}{\log\log n}\right)^{r-1}-1}\enspace.
\end{equation*}
Combining the obtained upper and lower bounds on $|G(\mathcal{H}_\ell)|$, we get the inequality
\begin{equation*}
g(n) \geq n^{c_0\cdot\left(\frac{\log n}{\log\log n}\right)^{r-1}-1}\enspace.
\end{equation*}
From the above, it is easy to see that whenever $c<c_0$, there exists a constant $\kappa\in\mathbb{Z}_{\geq0}$ such that every $n\in\mathbb{Z}$ with $n\geq\kappa$ satisfies
\begin{equation*}
g(n) \geq n^{c\cdot\left(\frac{\log n}{\log\log n}\right)^{r-1}}\enspace.\qedhere
\end{equation*}
\end{proof}

To present the proof of Theorem~\ref{thm:barriersLargeLevel}, we introduce the concept of \emph{atoms} of a set system. When applying a set system transformation of level $\ell$, we cannot directly bound the size of the ground set of the new set system. Nonetheless, we can show that the size of the new ground set can be reduced to a polynomial in the size of the ground set of the initial set system without loosing the system's structure. The key ingredient for this procedure is bounding the number of atoms in the transformed set system. This is formalized in Lemma~\ref{lem:atoms} and will be an important building block for the proof of Theorem~\ref{thm:barriersLargeLevel}.

\begin{definition}
Let $\mathcal{H}$ be a set system on a finite ground set $N$. A non-empty set $A\subseteq N$ is an \emph{atom} of $\mathcal{H}$ if it is a maximal set with the property that for all $H\in\mathcal{H}$, we have $A\subseteq H$ or $A\subseteq N\setminus H$.
\end{definition}

In particular, the above definition implies that two elements of the ground set $N$ are not in the same atom if and only if the set system $\mathcal{H}$ contains a set separating the two elements.

\begin{lemma}\label{lem:atoms}
Let $g\colon\mathbb{Z}_{\geq0}\to\mathbb{Z}_{\geq0}$ be a cardinality transformation function of level $\ell\in\mathbb{Z}_{\geq 0}$. Let $N$ be a set of size $n$, and let $G$ be a $g$-realizing set transformation function for the ground set $N$. Then, $G(2^N)$ has at most $1+\ell n^\ell$ many atoms.
\end{lemma}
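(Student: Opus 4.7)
The plan is to associate to each atom of $G(2^N)$ a unique subset of $N$ of size at most $\ell$, and then bound the number of such subsets by $1+\ell n^\ell$.

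First, for every element $w\in W := G(N)$, I would define
\[
S_w \;:=\; \bigcap \{S\subseteq N : w\in G(S)\}\enspace.
\]
The collection over which the intersection is taken is finite, and iterating the intersection property $G(S)\cap G(T)=G(S\cap T)$ yields $G(S_w)=\bigcap\{G(S):w\in G(S)\}$, which in particular implies $w\in G(S_w)$. Hence $S_w$ is the inclusion-wise smallest subset of $N$ whose image contains $w$. Combined with monotonicity of $G$, this gives the useful equivalence $w\in G(S)\iff S_w\subseteq S$ for every $S\subseteq N$. Because $G$ is of level $\ell$, there exists some $S\subseteq N$ with $|S|\le\ell$ and $w\in G(S)$, which forces $|S_w|\le|S|\le\ell$.

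The second step is to show that two elements $w_1,w_2\in W$ lie in the same atom of $G(2^N)$ if and only if $S_{w_1}=S_{w_2}$. By definition, $w_1$ and $w_2$ are separated by $G(2^N)$ exactly when some $S\subseteq N$ has $w_1\in G(S)$ but $w_2\notin G(S)$ (or vice versa). Using the equivalence from the previous step, this happens precisely when some $S\subseteq N$ contains one of $S_{w_1},S_{w_2}$ but not the other; when $S_{w_1}\neq S_{w_2}$, such an $S$ is given by $N\setminus\{e\}$ for any $e\in S_{w_1}\mathbin{\triangle} S_{w_2}$, and the converse is immediate.

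Consequently the number of atoms of $G(2^N)$ is at most the number of distinct values taken by $S_w$, which is in turn at most the number of subsets of $N$ of size at most $\ell$. Estimating
\[
\sum_{k=0}^{\ell}\binom{n}{k}\;\le\; 1+\sum_{k=1}^{\ell} n^k \;\le\; 1+\ell n^\ell \enspace,
\]
the bound follows. The only subtlety is verifying that the finite intersection defining $S_w$ really does behave well under $G$, i.e.\ that $w\in G(S_w)$; this is where the full strength of the intersection-closure axiom is used, but it is a routine induction. Everything else is a direct application of the definition of a cardinality transformation function of level $\ell$.
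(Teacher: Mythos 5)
Your proof is correct and follows essentially the same strategy as the paper: associate to each $w\in W$ a set $S_w\subseteq N$ of size at most $\ell$ and show that atoms are determined by these sets, then count subsets of size at most $\ell$. The one cosmetic difference is that you pin down $S_w$ as a canonical intersection and prove both implications of the atom characterization (atoms of $G(2^N)$ correspond bijectively to the distinct values $S_w$), whereas the paper simply picks any inclusion-minimal witness $S_w$ and only needs the one direction that distinct atoms yield distinct $S_w$'s; both routes yield the same bound.
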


\begin{proof}
Since $g$ is of level $\ell$, for every $w\in W$ there is a set $S_w \subseteq N$ with $w\in G(S_w)$ and $|S_w|\leq \ell$. Among all such sets, let $S_w$ be one that is inclusion-wise minimal. (Actually, one can observe that $S_w$ is unique; however, we do not need this later.)
Moreover, we denote by $A_w\subseteq G(N)$ the atom of $G(2^N)$ containing $w$.

Because $|S_w|\leq \ell$ for all $w\in W$, the number of different sets $S_w$ can be bounded from above by the number of subsets of $N$ of size at most $\ell$, i.e.,
\begin{equation*}
|\{ S_w \mid w\in W \}| \leq \sum_{i=0}^\ell \binom{n}{i} \leq 1+\ell n^\ell\enspace. 
\end{equation*}
To finish the proof, we show that the map $A_w\mapsto S_w$ is an injection. If so, we get $|\{ A_w \mid w\in W \}| \leq |\{ S_w \mid w\in W \}|$, which, together with the above bound, proves the lemma. To see injectivity, let $w_1,w_2\in W$ with $A_{w_1}\neq A_{w_2}$, i.e., $w_1$ and $w_2$ are not in the same atom. Then, there is a set $G(S)\in G(2^N)$ separating the two elements, for some set $S\subseteq N$. Without loss of generality, assume that $w_1\in G(S)$, while $w_2\not\in G(S)$.

On the one hand, this implies $w_1\in G(S)\cap G(S_{w_1})=G(S\cap S_{w_1})$, hence by minimality of $S_{w_1}$, we get $S\cap S_{w_1}=S_{w_1}$. On the other hand, we have $w_2\notin G(S)\cap G(S_{w_2})=G(S\cap S_{w_2})$, hence $S\cap S_{w_2}\subsetneq S_{w_2}$. This implies $S_{w_1}\neq S_{w_2}$, and hence injectivity of the map $A_w\mapsto S_w$, as desired.
\end{proof}

Before we start the proof of Theorem~\ref{thm:barriersLargeLevel}, we remark that both the statement and the proof of the previous lemma remain unchanged even if we allow for using the notion of generalized level, thus leading to an upper bound of $1+\ell(n) n^{\ell(n)}$ many atoms.

\begin{proof}[Proof of Theorem~\ref{thm:barriersLargeLevel}]
Let $m$ be a composite number, and let $g\colon\mathbb{Z}_{\geq0}\to\mathbb{Z}_{\geq0}$ be a cardinality transformation function that fulfills the property~\eqref{eq:crucialProperty} for some prime number $p$. Moreover, let $\ell\colon\mathbb{Z}\to\mathbb{Z}$ denote the (generalized) level of $g$.

We know that for every $n\in\mathbb{Z}_{\geq 0}$, there exists a set system $\mathcal{H}$ on a ground set $N$ of size $n$ fulfilling the properties listed in Theorem~\ref{thm:Grolmusz} with respect to the composite number $m$. Let $G$ be a $g$-realizing set transformation function on $N$ and consider $G(\mathcal{H})$. Note that by property~\eqref{eq:crucialProperty} and the assumptions on $\mathcal{H}$, every set $G(H) \in G(\mathcal{H})$ satisfies $|G(H)|\equiv 0\pmod{p}$, while for every two distinct sets $G(H_1),G(H_2)\in G(\mathcal{H})$, we have $|G(H_1)\cap G(H_2)|=|G(H_1\cap H_2)|=g(|H_1\cap H_2|)\equiv 1\pmod{p}$.

As we are only interested in the size of sets in $G(\mathcal{H})$ and their intersections $\bmod\ p$, and because every such set is a disjoint union of atoms of $G(\mathcal{H})$, we can delete elements of the ground set in the following way without loosing the observed properties. For every atom $A$ of $G(\mathcal{H})$, if $|A|\equiv a\pmod{p}$ with $a\in\{1,\ldots,p\}$, we can delete any $|A|-a$ elements of $A$ from the ground set $G(N)$, and update the sets in $G(\mathcal{H})$ correspondingly by removing the deleted elements from all sets containing them. By doing so, we thus obtain a new set system $\mathcal{I}$ with atoms of cardinality at most $p$. Note that none of the atoms were deleted completely, and thus distinct sets in $G(\mathcal{H})$ before the deletion of elements remain distinct after the deletion, i.e., in $\mathcal{I}$. Thus
\begin{equation}\label{eq:sameCardAfterDel}
|\mathcal{I}| = |G(\mathcal{H})|\enspace.
\end{equation}
In particular, the number of atoms in $\mathcal{I}$ equals the number of atoms in $G(\mathcal{H})$, which, by Lemma~\ref{lem:atoms}, is bounded by $1+\ell n^\ell$. Altogether, $\mathcal{I}$ is a set system on a ground set of size at most $p(1+\ell n^\ell)$ such that $|I|\equiv 0\pmod{p}$ for all $I\in\mathcal{I}$, and $|I_1\cap I_2|\equiv 1\pmod{p}$ for all distinct sets $I_1, I_2\in \mathcal{I}$.

In order to apply Theorem~\ref{thm:FranklWilson}, we need a large uniform subsystem of $\mathcal{I}$. Thereto, let $\mathcal{I}_i=\{I\in\mathcal{I} \mid |I|=i \}$ for $i\in[p(1+\ell n^\ell)]$ be all uniform subsystems, and let $k\in[p(1+\ell n^\ell)]$ be such that $|\mathcal{I}_k|$ is the one of maximum cardinality.
The $k$-uniform set system $\mathcal{I}_k$ satisfies the assumptions of Theorem~\ref{thm:FranklWilson} with $s=1$, $\mu_0=0$ and $\mu_1=1$, so by the same theorem, we get
\begin{equation}\label{eq:IkSmall}
|\mathcal{I}_k|\leq p(1+\ell n^\ell)\enspace.
\end{equation}

For a lower bound, first note that that $|\mathcal{I}_{k}| \geq \frac{1}{p(1+\ell n^\ell)}\,|\mathcal{I}|$.
Furthermore, as already observed in the proof of Theorem~\ref{thm:barriersSuperpolyBound}, $G$ is injective over the domain $\mathcal{H}$, and thus, we have $|G(\mathcal{H})|=|\mathcal{H}|$.
Putting this together and using the lower bound on the size of $\mathcal{H}$, we get
\begin{equation*}
|\mathcal{I}_{k}| \geq \frac{|\mathcal{I}|}{p(1+\ell n^\ell)} = \frac{|\mathcal{H}|}{p(1+\ell n^\ell)} \geq \frac{n^{c_0\cdot\left(\frac{\log n}{\log\log n}\right)^{r-1}}}{p(1+\ell n^\ell)} \enspace,
\end{equation*}
where the equality follows from~\eqref{eq:sameCardAfterDel} and $|G(\mathcal{H})|=|\mathcal{H}|$.
Combining this with~\eqref{eq:IkSmall} and rearranging terms, we obtain
\begin{equation*}
1+\ell n^\ell \geq \frac{n^{\frac{c_0}{2}\left(\frac{\log n}{\log\log n}\right)^{r-1}}}{p}\enspace.
\end{equation*}
Note that when transforming a system on a ground set of size $n$, the level is always at most $n$, i.e., $n\geq \ell$. Using this and absorbing constants into the asymptotic notation, we obtain
\begin{equation*}
n^{\ell+1} \geq \ell n^\ell = n^{\Omega\left(\left(\frac{\log n}{\log\log n}\right)^{r-1}\right)}\enspace,
\end{equation*}
which implies the desired $\ell=\Omega\left(\left(\frac{\log n}{\log\log n}\right)^{r-1}\right)$.
\end{proof}

\section{Minimality of the enumeration depth $d$}\label{sec:existenceMMm2Systems}

In this section, we show that for any $m \in \mathbb{Z}_{>0}$, \refEnumD[d] does in general not solve~\eqref{eq:CCSM} with modulus $m$ correctly if $d < m-1$. This shows in particular that our choice $d=m-1$ of the depth of \refEnumD[d] is the smallest depth for which \refEnumD[d] successfully solves~\eqref{eq:CCSM} for prime power moduli $m$.
This also implies the existence of $(m,m-2)$-systems for $m\in \mathbb{Z}_{>0}$; this follows from Theorem~\ref{thm:EnumDGoodIfNoBadSys}, but can also be seen directly from our construction.

We show that $d\geq m-1$ is necessary by constructing an explicit example where $d=m-2$ is not enough for \refEnumD to solve \eqref{eq:CCSM}. Thereto, let $N=\{0,1,\ldots,n\}$ for some $n\in\mathbb{Z}$ with $n\geq m$. Consider the lattice $\mathcal{L}=2^N$ and define the modular (and thus also submodular) function $f\colon\mathcal{L}\to\mathbb{Z}$ by
\begin{equation*}
f(S) = \begin{cases}
|S| & \text{if } 0\notin S,\\
|S|-1-m & \text{if } 0\in S
\end{cases}
\end{equation*}
for all $S\subseteq N$. This function is indeed modular since it assigns weight $-m$ to the element $0$, and weight $1$ to all other elements of $N$, and the weight $f(S)$ of a subset $S\subseteq N$ is obtained by summing the weights of all elements. The \eqref{eq:CCSM} problem that we consider is minimizing the function $f$ over the subfamily
\begin{equation*}
\mathcal{F} = \{ S\in\mathcal{L} \mid |S|\equiv 0 \pmod{m} \}\enspace.
\end{equation*}
It is easy to see that $\min\{f(S)\mid S\in\mathcal{F}\}=-1$, with minimizers being precisely all $m$-element subsets of $N$ containing $0$.

However, \refEnumD[m-2] does not solve this \eqref{eq:CCSM} problem, as we now show. Consider a step of \refEnumD[m-2], i.e., fix $A,B\subseteq N$ with $|A|,|B|\leq m-2$ and $A\cap B=\emptyset$. It is easy to see that
\begin{equation*}
\argmin \{f(S) \mid S\in\mathcal{L}_{AB} \} = \begin{cases}
\{A\} & \text{if } 0\in B,\\
\{A\cup\{0\}\} & \text{if } 0\notin B.
\end{cases}
\end{equation*}
In all cases, the minimizers found will be sets of size at most $m-1$, while the actual minimizers of $f$ over $\mathcal{F}$ are of size $m$. So \refEnumD[m-2] does indeed not solve this \eqref{eq:CCSM} problem.

As indicated above, existence of an $(m,m-2)$-system thus follows from Theorem~\ref{thm:EnumDGoodIfNoBadSys}. This system can be constructed by following the proof of Lemma~\ref{lem:notDGoodToSys}, resulting in a system on a ground set of $m$ elements containing all sets of size at most $m-1$ containing a fixed element.

\section{Conclusions}

We presented a new approach to deal with submodular function minimization problems under congruency constraints. The core of our approach is the analysis of a very natural algorithm, that enumerates over small subsets of elements to be included, respectively excluded, in a minimizer.
Our analysis reduces the correctness of this procedure to a purely combinatorial question about the nonexistence of certain set systems, which we can settle when the modulus of the involved congruency constraints is a prime power, by using techniques from Combinatorics and Number Theory. This leads to polynomial time algorithms for~\eqref{eq:CCSM} and \eqref{eq:GCCSM} when the modulus $m$ is a prime power bounded by a constant.
The techniques we introduced to disprove such set systems can be seen as a general framework, which we hope may be useful for future extensions to solve submodular function minimization problems under even more general constraint families.

It remains open whether \eqref{eq:CCSM} and \eqref{eq:GCCSM} can be solved efficiently for a constant modulus $m$ that is not a prime power. However, as we highlighted in Section~\ref{sec:barriers}, this would require new ingredients. A recent construction by Gopi \cite{gopi2017systems}, which was found after submission of this work, strengthens the barriers pointed out in Section~\ref{sec:barriers} by showing that $(m,m-1)$-systems do actually exist if $m$ is not a prime power. Gopi's construction is based on results by Barrington, Beigel, and Rudich~\cite{barrington_1994_representing} on the representation of boolean functions. Results in~\cite{barrington_1994_representing} were also leveraged by Grolmusz in his proof of Theorem~\ref{thm:Grolmusz}.

Moreover, we highlight that our proofs imply that our enumeration algorithm, when applied to~\eqref{eq:CCSM} or~\eqref{eq:GCCSM}, enumerates \emph{all} minimal optimal solutions. In particular, this shows that in the discussed settings where our approach finds an optimal minimal solution in polynomial time, the total number of minimal solutions is polynomially bounded.

Since both \eqref{eq:CCSM} for $m\geq 3$ and \eqref{eq:GCCSM} for $m\geq 2$ are not captured by triple or parity families, and neither do they generalize these families, it remains open to find a common generalization. In particular, submodular function minimization over the intersection of a constant number of parity families would be such a common generalization. It remains open whether this problem can be solved efficiently.

\section*{Acknowledgments}
We thank Karthekeyan Chandrasekaran for interesting discussions on related topics. Moreover, we are grateful to the anonymous referees for various comments and suggestions that helped to improve the quality of the presentation.

\bibliographystyle{plain}

\end{document}